\documentclass[a4paper,10pt]{article}

\usepackage[longnamesfirst]{natbib}
\bibliographystyle{agsm}

\usepackage{enumerate}

\usepackage{amsfonts}
\usepackage{amsthm}
\usepackage{amsmath}
\usepackage{graphicx}

\usepackage{dcolumn}
\newcolumntype{d}{@{}D{.}{.}{3}@{}}
\newlength{\figurewidth}\setlength{\figurewidth}{0.9\textwidth}

\DeclareMathOperator{\conv}{conv}
\DeclareMathOperator{\dom}{dom}
\DeclareMathOperator{\successors}{succ}
\DeclareMathOperator{\cone}{cone}

\numberwithin{equation}{section}
\newtheorem{theorem}[equation]{Theorem}

\newtheorem{proposition}[equation]{Proposition}
\newtheorem{lemma}[equation]{Lemma}

\newtheorem{construction}[equation]{Construction}
\newtheorem{remark}[equation]{Remark}
\newtheorem{example}[equation]{Example}

\newcommand{\cc}{j}
\newcommand{\Pc}{\mathcal{P}_{\!j}}
\newcommand{\Pbarc}{\bar{\mathcal{P}}_{\!j}}

\newcommand{\Pbaruc}{\bar{\mathcal{P}}^\mathrm{d}_{\!j}}
\newcommand{\sigmac}{\sigma_{\!j}}

\newcommand{\st}{{\,|\,}}

\usepackage[colorlinks,linkcolor={blue},
 bookmarks={false},
 pdfpagelayout={SinglePage}
 ]{hyperref}


\title{American options with gradual exercise under proportional transaction costs}
\author{Alet Roux and Tomasz Zastawniak}

\begin{document}

\maketitle

\begin{abstract}
American options in a multi-asset market model with proportional transaction
costs are studied in the case when the holder of an option is able to exercise
it gradually at a so-called mixed (randomised) stopping time. The introduction of gradual exercise leads to tighter bounds on the option price when compared to the case studied in the existing literature, where the standard assumption is that the option can only be exercised instantly at an ordinary stopping time. Algorithmic
constructions for the bid and ask prices and the associated superhedging
strategies and optimal mixed stoping times for an American option with gradual
exercise are developed and implemented, and dual representations are established.
\end{abstract}

\section{Introduction}

This work on pricing American options under proportional transaction costs
goes back to the seminal discovery by \cite{Chalasani_Jha2001} that to
hedge against a buyer who can exercise the option at any (ordinary) stopping
time, the seller must in effect be protected against all mixed (randomised)
stopping times. This was followed by \cite{Bouchard_Temam2005}, who
established a non-constructive dual representation for the set of strategies
superhedging the seller's (though not the buyer's) position in an American
option under transaction costs. Efficient iterative algorithms for computing
the upper and lower hedging prices of the option, the hedging
strategies, optimal stopping times as well as dual representations for both
the seller and the buyer of an American option under transaction costs were
developed by \cite{Roux_Zastawniak2009} in a model with two assets, and
\cite{Roux_Zastawniak2011} in a multi-asset model. All these approaches
take it for granted that the buyer can only exercise the option instantly, at
an ordinary stopping time of his choosing.

By contrast, in the present paper we allow the buyer the flexibility to
exercise an American option gradually, rather than all at a single time
instance. Though it would be difficult in practice to exercise a fraction of
an option contract and to hold on to the reminder to exercise it later, the
holder of a large portfolio of options may well choose to exercise the individual
contracts on different dates if that proves beneficial. Does this ability to exercise gradually affect the pricing bounds, hedging strategies and optimal stopping times for the
buyer and/or seller? Perhaps surprisingly, the answer to this question is yes, it does in
the presence of transaction costs.

Gradual exercise turns out to be linked to another feature, referred to as
deferred solvency, which will also be studied here. If a temporary loss of liquidity
occurs in the market, as reflected by unusually large bid-ask spreads, agents
may become insolvent. Being allowed to defer closing their
positions until liquidity is restored might enable them to become solvent once
again. This gives more leeway when constructing hedging strategies than the
usual requirement that agents should remain solvent at all times.

\cite{Tien2012} was the first to explore the consequences of gradual exercise
and deferred solvency using a model with a single risky asset as a testing
ground. In the present paper these ideas are developed in a systematic manner
and extended to the much more general setting of the multi-asset market model
with transaction costs due to \cite{kabanov1999}; see also
\cite{kabanov_stricker2001b} and \cite{schachermayer2004}.

Pricing and hedging for the seller of an American option under
transaction costs is a convex optimisation problem irrespective of whether
instant or gradual exercise is permitted. However, this is not so for the buyer.
In this case one has to tackle a non-convex optimisation problem for options
that can only be exercised instantly. A
very interesting consequence of gradual exercise is that pricing and
hedging becomes a convex optimisation problem also for the buyer of an
American option, making it possible to deploy convex duality methods. The convexity of the problem also makes it much easier to implement the pricing and hedging algorithms numerically. We will make
use of this new opportunity in this paper.

The paper is organised as follows. Section~\ref{Sect-multi-curr-mod} recalls
the general setting of Kabanov's multi-asset model with transaction costs. In
Section~\ref{Sect:inst-versus-grad-exe} the hedging strategies for the buyer
and seller and the corresponding option prices under gradual exercise are
introduced and compared with the same notions under instant exercise. A
toy example is set up to demonstrate that it is easier to hedge an option and
that the bid-ask spread of the option prices can be narrower under gradual
exercise as compared to instant exercise. In Section~\ref{Sect:seller} the
seller's case is studied in detail. The notion of deferred solvency is first
discussed and linked in Proposition~\ref{prop:am:seller:immediate-ultimate} with the hedging problem for the seller of an
American option with gradual exercise. The
sets of seller's hedging portfolios are then constructed and related to the ask price of
the option under gradual exercise and to a construction of a seller's hedging
strategy realising the ask price; see
Theorem~\ref{prop:seller:Zau0=initial-endowments}. A dual representation of
the seller's price is established in Theorem~\ref{Thm:ask-price-representation}. The toy example is revisited to
illustrate the various constructions and results for the seller.
Section~\ref{Sect:buyer} is devoted to the buyer's case. Buyer's hedging
portfolios and strategies are constructed and used to compute the bid price
of the option; see Theorem~\ref{prop:2012-07-26:hedging-construct}. Finally,
the dual representation for the buyer is explored in Theorem~\ref{th:bu-buyer}. Once again, the toy example serves to illustrate the results. A numerical example with three assets can be found in Section~\ref{sec:num-example}. Some
conclusions and possible further developments and ramifications are touched
upon in Section~\ref{Sect:conclusions}. Technical information and proofs
are collected in the Appendix.

\section{Multi-currency model with proportional transaction costs}\label{Sect-multi-curr-mod}

Let $(\Omega,\mathcal{F},\mathbb{P};(\mathcal{F}_t)_{t=0}^T)$ be a filtered probability space. We assume that $\Omega$ is finite, $\mathcal{F}_0=\{\emptyset,\Omega\}$, $\mathcal{F}_T=\mathcal{F}=2^\Omega$ and $\mathbb{P}(\omega)>0$ for all $\omega\in\Omega$. For each $t=0,\ldots,T$ let $\Omega_t$ be the collection of atoms of $\mathcal{F}_t$, called the \emph{nodes} of the associated tree model. A node $\nu\in\Omega_{t+1}$ is said to be a \emph{successor} of a node $\mu\in\Omega_t$ if $\nu\subseteq\mu$. For each $t=0,\ldots,T-1$ we denote the collection of successors of any given node $\mu\in\Omega_t$ by $\successors \mu$.

For each $t=0,\ldots,T$ let $\mathcal{L}_t:=\mathcal{L}^0(\mathbb{R}^d;\mathcal{F}_t)$ be the collection of $\mathcal{F}_t$-measurable $\mathbb{R}^d$-valued random variables. We identify elements of $\mathcal{L}_t$ with functions on $\Omega_t$ whenever convenient.

We consider the discrete-time currency model introduced by \citet{kabanov1999} and developed further by \citet{kabanov_stricker2001b} and \citet{schachermayer2004} among others. The model contains $d$ assets or currencies. At each trading date $t=0,1,\ldots,T$ and for each $k,j=1,\ldots,d$ one unit of asset~$k$ can be obtained by exchanging $\pi^{jk}_t>0$ units of asset~$j$. We assume that the exchange rates $\pi^{jk}_t$ are $\mathcal{F}_t$-measurable and $\pi^{jj}_t=1$ for all $t$ and $j,k$.

We say that a portfolio $x\in\mathcal{L}_t$ is can be \emph{exchanged} into a portfolio $y\in\mathcal{L}_t$ at time~$t$ whenever there are $\mathcal{F}_t$-measurable random variables $\beta^{jk}\ge0$, $j,k=1,\ldots,d$ such that for all $k=1,\ldots,d$
\begin{equation}
y^k=x^k+\sum_{j=1}^d \beta^{jk} - \sum_{j=1}^d \beta^{kj}\pi^{kj}_t,
\end{equation}
where $\beta^{jk}$ represents the number of units of asset~$k$ received as a result of exchanging some units of asset~$j$.

The \emph{solvency cone} $\mathcal{K}_t\subseteq\mathcal{L}_t$ is the set of portfolios that are \emph{solvent} at time~$t$, i.e.\ the portfolios at time $t$ that can be exchanged into portfolios with non-negative holdings in all $d$ assets.
It is straightforward to show that $\mathcal{K}_t$ is the convex cone generated by the canonical basis $e^1,\ldots,e^d$ of $\mathbb{R}^d$ and the vectors $\pi^{jk}_te^j-e^k$ for $j,k=1,\ldots,d$, and so $\mathcal{K}_t$ is a polyhedral cone, hence closed. Note that $\mathcal{K}_t$ contains all the non-negative elements of $\mathcal{L}_t$.

A \emph{trading strategy} $y=(y_t)_{t=0}^{T+1}$ is a predictable $\mathbb{R}^d$-valued process with final value $y^{}_{T+1}=0$ and initial endowment $y_0\in\mathbb{R}^d$. For each $t>0$ the portfolio $y_t\in\mathcal{L}_{t-1}$ is held from time $t-1$ to time $t$. Let~$\Phi$ be the set of trading strategies.
We say that $y\in\Phi$ is a \emph{self-financing} strategy whenever $y_t-y_{t+1}\in\mathcal{K}_t$ for all $t=0,\ldots,T-1$.
Note that no implicitly assumed self-financing condition is included in the definition of~$\Phi$.

A trading strategy $y\in\Phi$ is an \emph{arbitrage opportunity} if it is self-financing, $y_0=0$ and there is a portfolio $x\in\mathcal{L}_T\setminus\{0\}$ with non-negative holdings in all $d$ assets such that $y^{}_T-x\in\mathcal{K}_T$. This notion of arbitrage was considered by \citet{schachermayer2004}, and its absence is formally different but equivalent to the weak no-arbitrage condition introduced by \citet{kabanov_stricker2001b}.

\begin{theorem}[\citet{kabanov_stricker2001b,schachermayer2004}]\label{th:2012-10-03:ftap}
 The model admits no arbitrage opportunity if and only if there exists a probability measure~$\mathbb{Q}$ equivalent to~$\mathbb{P}$ and an $\mathbb{R}^d$-valued $\mathbb{Q}$-martingale $S=(S_t)_{t=0}^T$ such that
 \begin{equation} \label{eq:th:2012-10-03:ftap}
  S_t\in\mathcal{K}_t^\ast\setminus\{0\} \text{ for all }t=0,\ldots,T,
 \end{equation}
where $\mathcal{K}^\ast_t$ is the polar of $-\mathcal{K}_t$; see {\upshape(\ref{eq:2012-09-20:Aast})} in the Appendix.
\end{theorem}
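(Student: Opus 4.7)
\emph{Sufficiency.} Given $\mathbb{Q}$ and $S$ as in the statement, I would take any self-financing $y\in\Phi$ with $y_0=0$ and derive a contradiction with the existence of an arbitrage $x$. The self-financing condition $y_t-y_{t+1}\in\mathcal{K}_t$ together with $S_t\in\mathcal{K}_t^\ast$ yields the pointwise inequality $S_t\cdot(y_t-y_{t+1})\ge0$. Since $y_{t+1}$ is $\mathcal{F}_t$-measurable and $S$ is a $\mathbb{Q}$-martingale, the tower property gives $\mathbb{E}_{\mathbb{Q}}[S_{t+1}\cdot y_{t+1}]=\mathbb{E}_{\mathbb{Q}}[S_t\cdot y_{t+1}]\le\mathbb{E}_{\mathbb{Q}}[S_t\cdot y_t]$, which telescopes to $\mathbb{E}_{\mathbb{Q}}[S_T\cdot y_T]\le S_0\cdot y_0=0$. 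On the other hand, $y_T-x\in\mathcal{K}_T$ and $S_T\in\mathcal{K}_T^\ast$ give $S_T\cdot y_T\ge S_T\cdot x\ge0$ (using $\mathcal{K}_T^\ast\subseteq\mathbb{R}^d_+$, which follows from $\mathbb{R}^d_+\subseteq\mathcal{K}_T$), and a strict-positivity argument exploiting $S_T\ne0$, $x\ne0$ and $\mathbb{Q}\sim\mathbb{P}$ produces the contradiction.

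\emph{Necessity.} The converse I would prove by a Hahn--Banach separation, exploiting the fact that $\Omega$ is finite. Let $A:=\{y_T-k\st y\in\Phi\text{ self-financing},\,y_0=0,\,k\in\mathcal{K}_T\}\subseteq\mathcal{L}_T$. This is a sum of polyhedral cones, hence polyhedral and closed, and the no-arbitrage condition is precisely $A\cap\mathcal{L}^0(\mathbb{R}^d_+;\mathcal{F}_T)=\{0\}$. Separating $A$ from a compact base of $\mathcal{L}^0(\mathbb{R}^d_+;\mathcal{F}_T)\setminus\{0\}$ (for instance the simplex $\{x\st\sum_{\omega,k}x^k(\omega)=1\}$) produces a vector $Z_T\in\mathcal{L}_T$ with $Z_T(\omega)\in\mathbb{R}^d_{++}$ for every $\omega$ and $\mathbb{E}[Z_T\cdot a]\le0$ for every $a\in A$. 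Setting $Z_t:=\mathbb{E}_{\mathbb{P}}[Z_T\st\mathcal{F}_t]$ and testing against the elementary moves $\pm\mathbf{1}_B(\pi^{jk}_te^j-e^k)\in A$ for $B\in\mathcal{F}_t$ forces $Z_t\in\mathcal{K}_t^\ast$ node-wise, while non-vanishing of $Z_t$ follows from it being the conditional expectation of a strictly positive random vector. Finally I would define $\mathbb{Q}$ via the Radon--Nikodym density proportional to a fixed coordinate $Z^1_T$ and set $S_t:=Z_t/\mathbb{E}_{\mathbb{P}}[Z^1_T\st\mathcal{F}_t]$, obtaining a $\mathbb{Q}$-martingale with the required cone membership.

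\emph{Main obstacle.} The hard direction is necessity. In the general Kabanov model on an infinite probability space, the principal difficulty is establishing closedness of $A$, and this is the core of the work of \citet{kabanov_stricker2001b} and \citet{schachermayer2004}. Under our standing finiteness assumption on $\Omega$ closedness is automatic because $A$ is polyhedral, so the only remaining subtlety is to arrange \emph{strict} componentwise positivity of the separating vector $Z_T$ at every atom; this is achieved by separating from a compact base rather than merely from $\mathcal{L}^0(\mathbb{R}^d_+;\mathcal{F}_T)$. All other ingredients (the cone-membership of $Z_t$, the change of measure, and the normalisation producing a martingale) are routine once the separating $Z_T$ is in hand.
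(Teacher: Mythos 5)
The paper does not prove this theorem: it is imported from \citet{kabanov_stricker2001b} and \citet{schachermayer2004}, so there is no internal proof to compare against. Your sketch is nonetheless a correct and essentially standard argument for the finite-$\Omega$ setting, and both directions go through. Two small points deserve tightening. First, in the sufficiency direction the inequality $S_T\cdot x\ge 0$ via $\mathcal{K}_T^\ast\subseteq\mathbb{R}^d_+$ is not quite enough; you need $S_T\cdot x>0$ on a set of positive probability, and for that you should note the stronger fact that every nonzero element of $\mathcal{K}_T^\ast$ has \emph{all} components strictly positive (if $S^j=0$ then the generator inequality $\pi^{jk}_TS^j\ge S^k$ forces $S^k\le 0$, hence $S=0$); combined with $x\ne 0$ and $\mathbb{Q}\sim\mathbb{P}$ this gives the contradiction you want. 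Second, in the necessity direction only the moves $-\mathbf{1}_B(\pi^{jk}_te^j-e^k)$ (and $-\mathbf{1}_Be^i$) lie in $A$, not both signs: a self-financing strategy may give up an element of $\mathcal{K}_t$ but not acquire one. Testing with both signs would wrongly force $Z_t\cdot(\pi^{jk}_te^j-e^k)=0$; the one-sided test is exactly what yields $Z_t\in\mathcal{K}_t^\ast$, and it is all you need since the polar condition only has to be checked on the generators of $\mathcal{K}_t$. With these repairs the remaining steps (polyhedrality and hence closedness of $A$, strong separation from a compact base of the positive orthant, the Bayes computation showing $S_t:=Z_t/Z_t^1$ is a $\mathbb{Q}$-martingale for $d\mathbb{Q}/d\mathbb{P}\propto Z_T^1$) are all sound, and your closing remark correctly identifies that the genuine difficulty in the cited works is the closedness of $A$ on general probability spaces, which is automatic here.
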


We denote by $\mathcal{P}$ the set of pairs $(\mathbb{Q},S)$ satisfying the conditions in Theorem~\ref{th:2012-10-03:ftap}, and by $\bar{\mathcal{P}}$ the set of pairs $(\mathbb{Q},S)$ satisfying the conditions in Theorem~\ref{th:2012-10-03:ftap} but with $\mathbb{Q}$  absolutely continuous with respect to (and not necessarily equivalent to)~$\mathbb{P}$. We assume for the remainder of this paper that the model admits no arbitrage opportunities, i.e.~$\mathcal{P}\neq\emptyset$.

\begin{remark}\upshape
In place of a pair $(\mathbb{Q},S)\in\mathcal{P}$ one can equivalently use the so-called
\emph{consistent price process} $S_{t}\mathbb{E}%
_{\mathbb{P}}(\frac{d\mathbb{Q}}{d\mathbb{P}}|\mathcal{F}_{t})$; see \cite{schachermayer2004}.
\end{remark}

We also define for any $\cc=1,\ldots,d$
\begin{align*}
 \Pc &:= \{(\mathbb{Q},S) \in\mathcal{P} \st S^\cc=1\}, &
 \Pbarc &:= \{(\mathbb{Q},S)\in\bar{\mathcal{P}} \st S^\cc=1\}.
\end{align*}
In the absence of arbitrage $\mathcal{K}^\ast_t$ is a non-empty compactly $\cc$-generated polyhedral cone for all $t$ \cite[Remark~2.2]{Roux_Zastawniak2011}, which means that $\Pc\neq\emptyset$. (For the definition of a compactly $\cc$-generated cone, see Appendix~\ref{Subsect:comp-gen-cones}.)

\section{Instant versus gradual exercise}\label{Sect:inst-versus-grad-exe}

The payoff of an American option in the model with $d$ underlying currencies is, in general, an $\mathbb{R}^d$-valued adapted process $\xi=(\xi_t)_{t=0}^T$.
The seller of the American option is obliged to deliver, and the buyer is entitled to receive the portfolio of currencies $\xi_\tau$ at a stopping time $\tau\in\mathcal{T}$ chosen by the buyer. Here $\mathcal{T}$ denotes the family of stopping times with values in $\{0,\ldots,T\}$.
This is the usual setup in which the option is exercised \emph{instantly} at a stopping time~$\tau\in\mathcal{T}$.

American options with the provision for instant exercise in the multi-currency model under proportional transaction costs have been studied by \citet{Bouchard_Temam2005}, who established a non-constructive characterisation of the superhedging strategies for the option seller only, and by \citet{Roux_Zastawniak2011}, who provided computationally efficient iterative constructions of the ask and bid option prices and the superhedging strategies for both the option seller and buyer.

In the present paper we relax the requirement that the option needs to be exercised instantly at a stopping time $\tau\in\mathcal{T}$. Instead, we allow the buyer to exercise \emph{gradually} at a mixed stopping time $\chi\in\mathcal{X}$. (For the definition of mixed stopping times, see Appendix~\ref{Sect:mixed-stop-times}.)

If the buyer chooses to exercise the option gradually according to a mixed stopping time $\chi\in\mathcal{X}$, then the seller of the American option will be obliged to deliver, and the buyer will be entitled to receive the fraction~$\chi_t$ of the portfolio of currencies~$\xi_t$ at each time $t=0,\ldots,T$.

The question then arises whether or not it would be more beneficial for the buyer to exercise the option gradually rather than instantly? What will be the optimal mixed stopping time $\chi\in\mathcal{X}$ for the buyer? How should the seller hedge against gradual exercise? Are the ask (seller's) and bid (buyer's) option prices and hedging strategies affected by gradual exercise as compared to instant exercise?

\subsection{Instant exercise}

In the case of instant exercise the seller of an American option~$\xi$ needs to hedge by means of a trading strategy~$y\in\Phi$ against all ordinary stopping times $\tau\in\mathcal{T}$ chosen by the buyer. The trading strategy~$y$ needs to be self-financing up to time~$\tau$ and to allow the seller to remain solvent on delivering the portfolio $\xi_\tau$ at time~$\tau$, for any~$\tau\in\mathcal{T}$. Hence the family of seller's superhedging strategies is defined as
\[
 \Phi^\mathrm{a}(\xi):=\{y\in\Phi \st \forall \tau\in\mathcal{T}:  y_t-y_{t+1}\in\mathcal{K}_t\text{ for } t=0,\ldots,\tau-1,
 y_\tau-\xi_\tau\in\mathcal{K}_\tau\},
\]
and the \emph{ask price} (\emph{seller's price}) of the option in currency $\cc=1,\ldots,d$ is
\[
 \pi^\mathrm{a}_\cc(\xi):=\inf\{x\in\mathbb{R} \st \exists y\in\Phi^\mathrm{a}(\xi):xe^\cc=y_0\}.
\]
This is the smallest amount in currency~$\cc$ needed to superhedge a short position in~$\xi$.

On the other hand, the buyer of an American option~$\xi$ can select both a stopping time~$\tau\in\mathcal{T}$ and a trading strategy~$y\in\Phi$. The trading strategy~$y$ needs to be self-financing up to time~$\tau$ and to allow the buyer to remain solvent on receiving the portfolio $\xi_\tau$ at time~$\tau$. Thus, the family of buyer's superhedging strategies is defined as
\[
 \Phi^\mathrm{b}(\xi):=\{(y,\tau)\in\Phi\times\mathcal{T} \st
 y_t-y_{t+1}\in\mathcal{K}_t\text{ for }t=1,\ldots,\tau-1,
y_\tau+\xi_\tau\in\mathcal{K}_\tau\},
\]
and the \emph{bid price} (\emph{buyer's price}) of the option in currency $\cc=1,\ldots,d$ is
\[
 \pi^\mathrm{b}_\cc(\xi):=\sup\{-x\in\mathbb{R} \st \exists (y,\tau)\in\Phi^\mathrm{b}(\xi): xe^\cc=y_0\}.
\]
This is the largest amount in currency~$\cc$ that the buyer can raise using the option~$\xi$ as surety.

For American options with instant exercise, iterative constructions of the ask and bid option prices $\pi^\mathrm{a}_\cc(\xi)$ and $\pi^\mathrm{b}_\cc(\xi)$ and the corresponding seller's and buyer's superhedging strategies from $\Phi^\mathrm{a}(\xi)$ and $\Phi^\mathrm{b}(\xi)$ were established by \citet{Roux_Zastawniak2011}.

\subsection{Gradual exercise}\label{Sect:grad-exe}

When the buyer is allowed to exercise gradually, the seller needs to follow a suitable trading strategy to hedge his exposure. Since the seller can react to the buyer's actions, this strategy may in general depend on the mixed stopping time $\chi\in\mathcal{X}$ followed by the buyer, and will be denoted by $Y^\chi\in\Phi$. In other words, we consider a function $Y:\mathcal{X}\to\Phi$.

At each time $t$ the seller will be holding a portfolio $Y^\chi_t$ and will be obliged to deliver a fraction $\chi_t$ of the payoff $\xi_t$. He can then rebalance the remaining portfolio $Y^\chi_t-\chi_t\xi_t$ into $Y^\chi_{t+1}$ in a self-financing manner, so that
\begin{equation}\label{eq:seller-self-fin-superhedge}
Y^\chi_t-\chi_t\xi_t-Y^\chi_{t+1}\in\mathcal{K}_t
\text{ for each }t=0,\ldots,T.
\end{equation}The self-financing and superhedging conditions have merged into one. We call (\ref{eq:seller-self-fin-superhedge}) the \emph{rebalancing} condition.

When creating the portfolio $Y^\chi_t$ at time~$t-1$, the seller can only use information available at that time. This includes $\chi_0,\ldots,\chi_{t-1}$, but the seller has no way of knowing the future values $\chi_t,\ldots,\chi_T$ that will be chosen by the buyer. The trading strategies $Y^\chi\in\Phi$ that can be adopted by the seller are therefore restricted to those satisfying the \emph{non-anticipation} condition
\begin{equation}\label{eq:non-anticipate}
\begin{array}{c}
\forall\chi,\chi'\in\mathcal{X}\ \forall t=0,\ldots,T\ \forall\omega\in\Omega:\\
\chi_{s}^{}(\omega)=\chi'_{s}(\omega)\text{ for each }s=0,\ldots,t-1\Longrightarrow
Y_{t}^{\chi}(\omega)=Y_{t}^{\chi'}(\omega).
\end{array}
\end{equation}
In particular, the initial endowment $Y^\chi_0$ of the trading strategy $Y^\chi$ is the same for all $\chi\in\mathcal{X}$. We denote this common value by~$Y_0$.

We define the family of seller's superhedging strategies against gradual exercise by
\[
\Phi^\mathrm{ag}(\xi):=\{Y:\mathcal{X}\to\Phi \st Y\text{ satisfies (\ref{eq:seller-self-fin-superhedge}) and (\ref{eq:non-anticipate})}\}
\]
and the corresponding \emph{ask price} (\emph{seller's price}) of the option in currency $j=1,\ldots,d$ by
\begin{equation}\label{eq:20130728:def-pi-ag}
\pi^\mathrm{ag}_\cc(\xi):=\inf\{x\in\mathbb{R}\st \exists Y\in\Phi^\mathrm{ag}(\xi): xe^\cc=Y_0\}.
\end{equation}
This is the smallest amount in currency~$j$ that the seller needs to superhedge a short position in the American option~$\xi$ when the buyer is allowed to exercise gradually.

On the other hand, the buyer is able to select both a mixed stopping time $\chi\in\mathcal{X}$ and a trading strategy $y\in\Phi$, and will be taking delivery of a fraction~$\chi_t$ of the payoff~$\xi_t$ at each time~$t$. Because the choice of the mixed stopping time~$\chi$ is up to the buyer, the trading strategy $y$  needs to be good just for the one chosen stopping time, and does not need to be considered as a function of~$\chi$, in contrast to the seller's case. The \emph{rebalancing} condition
\begin{equation}\label{eq:buyer-self-fin-superhedge}
y_t+\chi_t\xi_t-y_{t+1}\in\mathcal{K}_t
\text{ for each }t=0,\ldots,T
\end{equation}
needs to be satisfied.

Hence, the family of superhedging strategies for the buyer of an American option~$\xi$ with gradual exercise is defined as
\[
\Phi^\mathrm{bg}(\xi):=\{(y,\chi)\in\Phi\times\mathcal{X} \st (y,\chi)\text{ satisfies (\ref{eq:buyer-self-fin-superhedge})}\},
\]
and the corresponding \emph{bid price} (\emph{buyer's price}) of the option in currency $j=1,\ldots,d$ is
\begin{equation}\label{eq:buyer-bid-price-gradual}
\pi^\mathrm{bg}_\cc(\xi):=\sup\{-x\in\mathbb{R}\st \exists (y,\chi)\in\Phi^\mathrm{bg}(\xi): xe^\cc=y_0\}.
\end{equation}
This is the largest amount in currency~$j$ that can be raised using the option as surety by a buyer who is able to exercise gradually.

\begin{example}\label{Exl:new}\upshape
We consider a toy example with two assets, a foreign
currency (asset 1) and domestic currency (asset 2) in a two-step binomial tree
model with the following bid/ask foreign currency prices $S_{t}^{\mathrm{b}%
}\leq S_{t}^{\mathrm{a}}$ in each of the four scenarios in $\Omega=\left\{
\omega_{1},\omega_{2},\omega_{3},\omega_{4}\right\}  $:%
\[%
\begin{tabular}
[c]{|c|cc|cc|cc|}\hline
$\rule[-0.2cm]{0pt}{0.6cm}$ & $S_{0}^{\mathrm{b}}$ & $S_{0}^{\mathrm{a}}$ &
$S_{1}^{\mathrm{b}}$ & $S_{1}^{\mathrm{a}}$ & $S_{2}^{\mathrm{b}}$ &
$S_{2}^{\mathrm{a}}$\\\hline
$\omega_{1}$ & $5$ & $5$ & $3$ & $9$ & $4$ & $8$\\\cline{6-7}%
$\omega_{2}$ & $5$ & $5$ & $3$ & $9$ & $4$ & $4$\\\cline{4-7}%
$\omega_{3}$ & $5$ & $5$ & $2$ & $2$ & $3$ & $3$\\\cline{6-7}%
$\omega_{4}$ & $5$ & $5$ & $2$ & $2$ & $1$ & $1$\\\hline
\end{tabular}
\]
Note there are only two nodes with a non-trivial bid/ask spread, namely the `up' node $\mathrm{U}=\left\{
\omega_{1},\omega_{2}\right\}  $ and the `up-up' node $\mathrm{UU}%
=\{\omega_{1}\}$. The corresponding
exchange rates are
\[
\left[
\begin{array}
[c]{cc}%
\pi_{t}^{11} & \pi_{t}^{12}\\
\pi_{t}^{21} & \pi_{t}^{22}%
\end{array}
\right]  =\left[
\begin{array}
[c]{cc}%
1 & 1/S_{t}^{\mathrm{b}}\\
S_{t}^{\mathrm{a}} & 1
\end{array}
\right]  .
\]
In this model we consider an American option with the following payoff process
$\xi_{t}=\left(  \xi_{t}^{1},\xi_{t}^{2}\right)  $:%
\[%
\begin{tabular}
[c]{|c|c|c|c|}\hline
$\rule[-0.15cm]{0pt}{0.5cm}$ & $\xi_{0}$ & $\xi_{1}$ & $\xi_{2}$\\\hline
$\omega_{1}$ & $\left(  0,0\right)  $ & $\left(  0,4\right)  $ & $\left(
2,-8\right)  $\\\cline{4-4}%
$\omega_{2}$ & $\left(  0,0\right)  $ & $\left(  0,4\right)  $ & $\left(
0,0\right)  $\\\cline{3-4}%
$\omega_{3}$ & $\left(  0,0\right)  $ & $\left(  0,0\right)  $ & $\left(
0,0\right)  $\\\cline{4-4}%
$\omega_{4}$ & $\left(  0,0\right)  $ & $\left(  0,0\right)  $ & $\left(
0,0\right)  $\\\hline
\end{tabular}
\]
In the case when the option can only be exercised instantly, using the algorithms of \cite{Roux_Zastawniak2011} we can compute the bid and
ask prices of the option in the domestic currency to be%
\[
\pi_{2}^{\mathrm{b}}(\xi)=2,\quad\pi_{2}^{\mathrm{a}}(\xi)=\frac{28}{5}.
\]
Now consider $Y:\mathcal{X}\rightarrow\Phi$ given by%
\[%
\begin{tabular}
[c]{|c|c|c|c|}\hline
$\rule[-0.2cm]{0pt}{0.6cm}$ & $Y_{0}^{\chi}$ & $Y_{1}^{\chi}$ & $Y_{2}^{\chi}%
$\\\hline
$\omega_{1}\rule[-0.2cm]{0pt}{0.6cm}$ & $\left(  0,5\right)  $ & $\left(
1,0\right)  $ & $\left(  1,-4\chi_{1}^{\omega_{1}}\right)  $\\
$\omega_{2}\rule[-0.2cm]{0pt}{0.6cm}$ & $\left(  0,5\right)  $ & $\left(
1,0\right)  $ & $\left(  1,-4\chi_{1}^{\omega_{2}}\right)  $\\\cline{4-4}%
$\omega_{3}\rule[-0.2cm]{0pt}{0.6cm}$ & $\left(  0,5\right)  $ & $\left(
1,0\right)  $ & $\left(  0,0\right)  $\\
$\omega_{4}\rule[-0.2cm]{0pt}{0.6cm}$ & $\left(  0,5\right)  $ & $\left(
1,0\right)  $ & $\left(  0,0\right)  $\\\hline
\end{tabular}
\]
for any $\chi\in\mathcal{X}$. Also consider $y\in\Phi$ and $\chi\in
\mathcal{X}$ such that%
\[%
\begin{tabular}
[c]{|c|c|c|c|c|c|c|}\hline
$\rule[-0.15cm]{0pt}{0.5cm}$ & $y_{0}$ & $y_{1}$ & $y_{2}$ & $\chi_{0}$ &
$\chi_{1}$ & $\chi_{2}$\\\hline
$\omega_{1}\rule[-0.2cm]{0pt}{0.6cm}$ & $\left(  0,-3\right)  $ & $\left(
-1,2\right)  $ & $\left(  -1,4\right)  $ & $0$ & $\frac{1}{2}$ & $\frac{1}{2}%
$\\
$\omega_{2}\rule[-0.2cm]{0pt}{0.6cm}$ & $\left(  0,-3\right)  $ & $\left(
-1,2\right)  $ & $\left(  -1,4\right)  $ & $0$ & $\frac{1}{2}$ & $\frac{1}{2}%
$\\\cline{4-4}\cline{6-7}%
$\omega_{3}\rule[-0.2cm]{0pt}{0.6cm}$ & $\left(  0,-3\right)  $ & $\left(
-1,2\right)  $ & $\left(  0,0\right)  $ & $0$ & $0$ & $1$\\
$\omega_{4}\rule[-0.2cm]{0pt}{0.6cm}$ & $\left(  0,-3\right)  $ & $\left(
-1,2\right)  $ & $\left(  0,0\right)  $ & $0$ & $0$ & $1$\\\hline
\end{tabular}
\]
We can verify that $Y\in\Phi^{\mathrm{ag}}(\xi)$ and $\left(  y,\chi\right)
\in\Phi^{\mathrm{bg}}(\xi)$. The existence of these strategies means that%
\[
\pi_{2}^{\mathrm{b}}(\xi)=2<3\leq\pi_{2}^{\mathrm{bg}}(\xi),\quad\pi
_{2}^{\mathrm{ag}}(\xi)\leq5<\frac{28}{5}=\pi_{2}^{\mathrm{a}}(\xi).
\]
\end{example}

This example demonstrates that the seller's and buyer's prices $\pi^\mathrm{ag}_\cc(\xi),\pi^\mathrm{bg}_\cc(\xi)$ under gradual exercise may differ from their respective counterparts $\pi^\mathrm{a}_\cc(\xi),\pi^\mathrm{b}_\cc(\xi)$ under instant exercise.
It demonstrates the need to revisit and investigate the pricing and superhedging results in the case when the instant exercise provision is relaxed and replaced by gradual exercise.

\section{Pricing and superhedging for the seller under gradual exercise}\label{Sect:seller}

We have seen in Example~\ref{Exl:new} that the seller's price $\pi_\cc^\mathrm{a}(\xi)$ may be higher than $\pi_\cc^\mathrm{ag}(\xi)$. The reason is that an option seller who follows a hedging strategy $y\in\Phi^\mathrm{a}(\xi)$ is required to be instantly solvent upon delivering the payoff at the stopping time $\tau\in\mathcal{T}$ when the buyer has chosen to exercise the option. Meanwhile, a seller who follows a strategy $Y\in\Phi^\mathrm{ag}(\xi)$ will be able to continue rebalancing the strategy up to the time horizon~$T$ as long as a solvent position can be reached eventually. Being able to defer solvency in this fashion allows more flexibility for the seller, resulting in a lower seller's price.

On the other hand, it might appear that a seller who hedges against gradual exercise (against mixed stopping times) would have a harder task to accomplish than someone who only needs to hedge against instant exercise (ordinary stopping times). However, this turns out not to be a factor affecting the seller's price, as we shall see in Proposition~\ref{prop:am:seller:immediate-ultimate}.

\subsection{Deferred solvency}

These considerations indicate that the notion of solvency needs to be relaxed.

We say that a portfolio $z\in\mathcal{L}_t$ satisfies the \emph{deferred solvency} condition at time~$t$ if it can be exchanged into a solvent portfolio by time $T$ without any additional investment, i.e.\ if there is a sequence $y_{t+1},\ldots,y_{T+1}$ such that $y_s\in\mathcal{L}_{s-1}$ for all $s=t+1\ldots,T$ and
\begin{align*}
z-y_{t+1}&\in\mathcal{K}_t, & y_s - y_{s+1}&\in\mathcal{K}_s\text{ for all }{s=t+1,\ldots,T}, & y_{T+1}=0.
\end{align*}
We call such a sequence $y_{t+1},\ldots,y_{T+1}$ a \emph{liquidation strategy} starting from~$z$ at time~$t$.

The set of portfolios satisfying the deferred solvency condition at time~$t$ is a cone. We call it the \emph{deferred solvency cone} and denote by $\mathcal{Q}_t$.

\begin{example}
\upshape In Example~\ref{Exl:new} the portfolio with~$8$ in the domestic
currency and~$-1$ in the foreign currency is insolvent at the `up' node
$\mathrm{U}=\{\omega_{1},\omega_{2}\}$ at time~$1$, that is, $(-1,8)\notin
\mathcal{K}_{1}^{\mathrm{U}}$. It does, however, satisfy the deferred solvency
condition at that node, i.e.\ $(-1,8)\in\mathcal{Q}_{1}^{\mathrm{U}}$. The large
bid-ask spread $[S_{1}^{\mathrm{bU}},S_{1}^{\mathrm{aU}}]=[3,9]$ at
node~$\mathrm{U}$ indicates a temporary loss of liquidity. Although the
portfolio is insolvent at that node, waiting until the market recovers from the loss of liquidity can restore solvency. The liquidation strategy is to hold the portfolio
until time~$2$ and to buy the foreign currency then.
\end{example}

The following result shows that the deferred solvency cones $\mathcal{Q}_t$ can be regarded as the sets of time~$t$ superhedging portfolios for the seller of a European option with expiry time~$T$ and zero payoff; see \cite{Roux_Zastawniak2011}.

\begin{proposition}\label{prop:constr:ultimate-solvent}
The deferred solvency cones can be constructed by backward induction as follows:
 \begin{align}
  \mathcal{Q}_T &= \mathcal{K}_T, \label{eq:QT=KT}\\
  \mathcal{Q}_t &= \mathcal{Q}_{t+1}\cap\mathcal{L}_t + \mathcal{K}_t \text{ for all }t=0,\ldots,T-1. \label{eq:Qt-recursive}
 \end{align}
\end{proposition}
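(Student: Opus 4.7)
The plan is to verify the two equations directly from the definition of $\mathcal{Q}_t$ given in the paragraph preceding the proposition.

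For (\ref{eq:QT=KT}), I would simply unpack the definition at time $T$: the only admissible liquidation strategy starting from $z\in\mathcal{L}_T$ has no intermediate portfolios (the sequence is empty aside from the forced $y_{T+1}=0$), so the deferred solvency condition collapses to $z-y_{T+1}=z\in\mathcal{K}_T$. Hence $\mathcal{Q}_T=\mathcal{K}_T$.

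For the recursion (\ref{eq:Qt-recursive}), I would prove both inclusions separately. For $\mathcal{Q}_{t+1}\cap\mathcal{L}_t+\mathcal{K}_t\subseteq\mathcal{Q}_t$, take $z=z'+k$ with $z'\in\mathcal{Q}_{t+1}\cap\mathcal{L}_t$ and $k\in\mathcal{K}_t$. Fix a liquidation strategy $y_{t+2},\dots,y_{T+1}$ starting from $z'$ at time $t+1$, and prepend $y_{t+1}:=z'$, which lies in $\mathcal{L}_t$ because $z'$ does. Then $z-y_{t+1}=k\in\mathcal{K}_t$, $y_{t+1}-y_{t+2}=z'-y_{t+2}\in\mathcal{K}_{t+1}$, and the remaining conditions $y_s-y_{s+1}\in\mathcal{K}_s$ hold by assumption, producing a liquidation strategy for $z$ at time $t$.

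For the reverse inclusion $\mathcal{Q}_t\subseteq\mathcal{Q}_{t+1}\cap\mathcal{L}_t+\mathcal{K}_t$, let $z\in\mathcal{Q}_t$ with liquidation strategy $y_{t+1},\dots,y_{T+1}$. Write $z=y_{t+1}+(z-y_{t+1})$. By construction $z-y_{t+1}\in\mathcal{K}_t$ and $y_{t+1}\in\mathcal{L}_t$ (hence also in $\mathcal{L}_{t+1}$, which is needed to treat $y_{t+1}$ as a candidate element of $\mathcal{Q}_{t+1}$). The truncated sequence $y_{t+2},\dots,y_{T+1}$ is a liquidation strategy starting from $y_{t+1}$ at time $t+1$, so $y_{t+1}\in\mathcal{Q}_{t+1}\cap\mathcal{L}_t$, completing the proof.

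The argument is essentially bookkeeping: there is no genuine obstacle, only the mild subtlety of observing that the very first portfolio $y_{t+1}$ in any liquidation strategy for $z$ serves simultaneously as the $\mathcal{L}_t$-measurable witness in $\mathcal{Q}_{t+1}\cap\mathcal{L}_t$, and conversely that any such witness can be used as the initial portfolio of a one-step-longer liquidation strategy.
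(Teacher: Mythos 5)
Your proof is correct and takes essentially the same route as the paper's, which packages the identical bookkeeping as a single chain of equivalences (unpacking the definition of $\mathcal{Q}_{t+1}$ applied to the first portfolio $y_{t+1}$ of a liquidation strategy) rather than as two separate inclusions. The one subtlety you flag --- that $y_{t+1}\in\mathcal{L}_t\subseteq\mathcal{L}_{t+1}$ serves both as the witness in $\mathcal{Q}_{t+1}\cap\mathcal{L}_t$ and as the starting portfolio of the truncated liquidation strategy --- is exactly the content of the paper's argument.
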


The proof of Proposition~\ref{prop:constr:ultimate-solvent} can be found in Appendix~\ref{Sect:Appendix:defer-solv}.

From~(\ref{eq:Qt-recursive}) we can see that for any~$t=0,\ldots,T-1$ and for any $\mu\in\Omega_t$
\begin{equation}\label{eq:2013-07-12-Q-mu-t}
\mathcal{Q}^\mu_t=\bigcap_{\nu\in\successors\mu}\mathcal{Q}^\nu_{t+1}+\mathcal{K}_t^\mu.
\end{equation}
By backward induction, $\mathcal{Q}^\mu_t$ is given as an intersection and algebraic sum of a finite number of polyhedral cones, so it is a polyhedral cone. This also means the solvency cones can readily be computed using standard operations on polyhedral convex sets.

The next result shows that Theorem~\ref{th:2012-10-03:ftap} can be formulated equivalently in terms of the deferred solvency cones~$\mathcal{Q}_t$ instead of the solvency cones~$\mathcal{K}_t$.

\begin{proposition}\label{prop:mart-ito-Q}
If $\mathbb{Q}$ is a probability measure and $S=(S_t)_{t=0}^T$ is an $\mathbb{R}^d$-valued $\mathbb{Q}$-martingale, then $S$ satisfies \eqref{eq:th:2012-10-03:ftap} if and only if
 \begin{equation}\label{eq:2012-10-03:Qastt-not-empty}
  S_t\in\mathcal{Q}_t^\ast\setminus\{0\}\text{ for all }t=0,\ldots,T,
 \end{equation}
where $\mathcal{Q}^\ast_t$ is the polar of $-\mathcal{Q}_t$.
\end{proposition}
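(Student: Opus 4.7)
My plan is to derive the equivalence from the backward recursion \eqref{eq:2013-07-12-Q-mu-t} together with the elementary inclusion $\mathcal{K}_{t}\subseteq\mathcal{Q}_{t}$ for every $t$. The inclusion holds because any $z\in\mathcal{K}_{t}$ is trivially deferred-solvent via the liquidation strategy $y_{t+1}=\cdots=y_{T+1}=0$. Taking polars reverses the inclusion, giving $\mathcal{Q}_{t}^{\ast}\subseteq\mathcal{K}_{t}^{\ast}$, which settles the $(\Leftarrow)$ direction: if $S_{t}\in\mathcal{Q}_{t}^{\ast}\setminus\{0\}$ for every $t$ then $S_{t}\in\mathcal{K}_{t}^{\ast}\setminus\{0\}$ for every $t$ as well, and no martingale assumption is used for this half.

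For the nontrivial $(\Rightarrow)$ direction I would run a backward induction on $t$. The base case $t=T$ is immediate from $\mathcal{Q}_{T}=\mathcal{K}_{T}$. For the inductive step, assuming $S_{t+1}\in\mathcal{Q}_{t+1}^{\ast}$, I would fix a node $\mu\in\Omega_{t}$ and an arbitrary $z\in\mathcal{Q}_{t}^{\mu}$, and use the recursion \eqref{eq:2013-07-12-Q-mu-t} to decompose $z=q+k$ with $q\in\bigcap_{\nu\in\successors\mu}\mathcal{Q}_{t+1}^{\nu}$ and $k\in\mathcal{K}_{t}^{\mu}$. Then
\[
 S_{t}^{\mu}\cdot z=S_{t}^{\mu}\cdot q+S_{t}^{\mu}\cdot k.
\]
The second summand is non-negative because $S_{t}^{\mu}\in(\mathcal{K}_{t}^{\mu})^{\ast}$ by hypothesis. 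For the first I use the $\mathbb{Q}$-martingale identity $S_{t}^{\mu}=\sum_{\nu\in\successors\mu}\mathbb{Q}(\nu\mid\mu)\,S_{t+1}^{\nu}$, giving $S_{t}^{\mu}\cdot q=\sum_{\nu}\mathbb{Q}(\nu\mid\mu)(S_{t+1}^{\nu}\cdot q)\ge 0$, since $q\in\mathcal{Q}_{t+1}^{\nu}$ and $S_{t+1}^{\nu}\in(\mathcal{Q}_{t+1}^{\nu})^{\ast}$ by the inductive hypothesis. As $z\in\mathcal{Q}_{t}^{\mu}$ was arbitrary, $S_{t}^{\mu}\in(\mathcal{Q}_{t}^{\mu})^{\ast}$, and $S_{t}\ne 0$ is preserved by assumption.

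The main technical bookkeeping is the nodewise reading of the polar: since $\mathcal{K}_{t}$ and $\mathcal{Q}_{t}$ are closed under multiplication by indicators of $\mathcal{F}_{t}$-atoms, one may identify $\mathcal{K}_{t}^{\ast}$ and $\mathcal{Q}_{t}^{\ast}$ with the sets of $\mathcal{F}_{t}$-measurable random variables whose value at every $\mu\in\Omega_{t}$ lies in the corresponding geometric polar in $\mathbb{R}^{d}$, which is what licenses the nodewise induction. The one point where care is needed is at nodes with $\mathbb{Q}(\mu)=0$: there the martingale identity is vacuous, so the inductive step does not directly constrain $S_{t}^{\mu}$; however, at such nodes the value of $S_{t}^{\mu}$ may be freely adjusted within the nonempty cone $(\mathcal{Q}_{t}^{\mu})^{\ast}$ without breaking the martingale property, and I expect this is the only genuine obstacle to a completely literal reading of the statement. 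The substantive content of the argument is the one-line decomposition plus the martingale identity.
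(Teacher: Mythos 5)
Your proof is correct and follows the same overall scheme as the paper's: the easy direction via $\mathcal{K}_t\subseteq\mathcal{Q}_t$ (hence $\mathcal{Q}^\ast_t\subseteq\mathcal{K}^\ast_t$), and a backward induction driven by the recursion \eqref{eq:2013-07-12-Q-mu-t} together with the one-step martingale identity. The inductive step is executed differently, though. The paper first computes the polar cone explicitly,
\[
\mathcal{Q}^{\mu\ast}_t=\Bigl[\bigcap_{\nu\in\successors\mu}\mathcal{Q}^\nu_{t+1}+\mathcal{K}^\mu_t\Bigr]^\ast=\conv\{\mathcal{Q}^{\nu\ast}_{t+1}\st\nu\in\successors\mu\}\cap\mathcal{K}^{\mu\ast}_t,
\]
invoking Rockafellar's duality results for polars of sums and intersections (Corollaries 16.4.2 and 16.5.2), and then checks that $S^\mu_t=\mathbb{E}_\mathbb{Q}(S_{t+1}\,|\,\mu)$ lies in the right-hand side. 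You instead verify the defining inequality of the polar directly on an arbitrary decomposition $z=q+k$; this uses only the trivial containments (a vector pairs nonnegatively with a sum if it does so with each summand, and a convex combination of elements of the $(\mathcal{Q}^{\nu}_{t+1})^\ast$ pairs nonnegatively with any element of $\bigcap_\nu\mathcal{Q}^\nu_{t+1}$), so it sidesteps the convex-duality machinery and the attendant closure issues entirely. What you give up is the explicit formula \eqref{eq:2012-10-03:Qast-ito-domWKt} for $\mathcal{Q}^{\mu\ast}_t$, which the paper reuses later in the proof of Proposition~\ref{prop:seller:dual}; for the present proposition nothing is lost. The caveat you raise about nodes with $\mathbb{Q}(\mu)=0$ is genuine but is shared by the paper's own proof, which likewise writes $S^\mu_t=\mathbb{E}_\mathbb{Q}(S_{t+1}\,|\,\mu)$ at every node; the intended reading is that the pointwise identity $S^\mu_t=\sum_{\nu\in\successors\mu}\mathbb{Q}(\nu\mid\mu)S^\nu_{t+1}$ is available at all nodes (as it is for the measures constructed elsewhere in the paper), so your handling of this point is at least as careful as the original.
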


The proof of Proposition~\ref{prop:mart-ito-Q} is in Appendix~\ref{Sect:Appendix:defer-solv}.

\subsection{Construction of seller's price and superhedging strategy}

We extend the family $\Phi^\mathrm{a}(\xi)$ of seller's superhedging strategies by allowing for deferred solvency:
\[
 \Phi^\mathrm{ad}(\xi):=\{y\in\Phi \st \forall \tau\in\mathcal{T}: y_t-y_{t+1}\in\mathcal{Q}_t\text{ for }t=1,\ldots,\tau-1,
 y_\tau-\xi_\tau\in\mathcal{Q}_\tau\},
\]

The following proposition shows that the set of initial endowments that allow the seller to hedge against gradual exercise is the same as that allowing to hedge against instant exercise with deferred solvency.

\begin{proposition}\label{prop:am:seller:immediate-ultimate}
 For any American option $\xi$
 \[
  \{x\in\mathbb{R}^d\st \exists Y\in\Phi^\mathrm{ag}(\xi):x=Y_0\}
  =\{x\in\mathbb{R}^d\st \exists y\in\Phi^\mathrm{ad}(\xi):x=y_0\}.
 \]
\end{proposition}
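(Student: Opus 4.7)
The plan is to establish the two inclusions separately, using simple Dirac-type mixed stopping times for the easy direction and a two-stage convex-combination construction for the hard one.

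For the inclusion $\{Y_{0}\mid Y\in\Phi^{\mathrm{ag}}(\xi)\}\subseteq\{y_{0}\mid y\in\Phi^{\mathrm{ad}}(\xi)\}$, given $Y\in\Phi^{\mathrm{ag}}(\xi)$ I would set $y:=Y^{\chi^{T}}$, where $\chi^{T}\in\mathcal X$ is the ``exercise-only-at-$T$'' mixed stopping time with $\chi^{T}_{T}\equiv 1$ and $\chi^{T}_{s}\equiv 0$ for $s<T$. Then $y_{0}=Y_{0}$ automatically, and the rebalancing condition (\ref{eq:seller-self-fin-superhedge}) evaluated at $\chi^{T}$ reduces to $y_{s}-y_{s+1}\in\mathcal{K}_{s}$ for $s<T$ together with $y_{T}-\xi_{T}\in\mathcal{K}_{T}=\mathcal{Q}_{T}$. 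To obtain $y_{\tau}-\xi_{\tau}\in\mathcal{Q}_{\tau}$ for an arbitrary $\tau\in\mathcal T$, I would compare $\chi^{T}$ with the Dirac mixed stopping time $\chi^{\tau}$ defined by $\chi^{\tau}_{s}=\mathbf 1_{\{\tau=s\}}$: both $\chi^{T}$ and $\chi^{\tau}$ vanish on $\{0,\ldots,\tau-1\}$, so non-anticipation (\ref{eq:non-anticipate}) forces $y_{\tau}=Y^{\chi^{\tau}}_{\tau}$, and the tail $Y^{\chi^{\tau}}_{\tau+1},\ldots,Y^{\chi^{\tau}}_{T+1}=0$ provides a liquidation of $y_{\tau}-\xi_{\tau}$ in the sense of Proposition~\ref{prop:constr:ultimate-solvent}.

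For the reverse inclusion I would start from $y\in\Phi^{\mathrm{ad}}(\xi)$ with $y_{0}=x$ and upgrade it in two stages. First, for each $r$ use the deferred solvency $y_{r}-y_{r+1}\in\mathcal{Q}_{r}$ to pick a nodewise liquidation $m^{r}=(m^{r}_{s})_{s=r+1}^{T+1}$ and set $\tilde y_{s}:=y_{s}+\sum_{r<s}m^{r}_{s}$. A short telescoping argument shows $\tilde y\in\Phi$ with the genuinely self-financing increments $\tilde y_{s}-\tilde y_{s+1}\in\mathcal{K}_{s}$, while $\tilde y_{0}=y_{0}$ and $\tilde y_{s}-\xi_{s}\in\mathcal{Q}_{s}$ still hold (the latter because each liquidation remainder $m^{r}_{s}$ itself lies in $\mathcal{Q}_{s}$). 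Next, choose a liquidation $\tilde\ell^{s}$ of $\tilde y_{s}-\xi_{s}$ and, writing $\bar\chi_{t}:=\sum_{s\le t}\chi_{s}$, define
\[
Y^{\chi}_{0}:=y_{0},\qquad Y^{\chi}_{t+1}:=(1-\bar\chi_{t})\,\tilde y_{t+1}+\sum_{s=0}^{t}\chi_{s}\,\tilde\ell^{s}_{t+1}\quad\text{for }t=0,\ldots,T.
\]
Predictability, the non-anticipation condition (\ref{eq:non-anticipate}) and $Y^{\chi}_{T+1}=0$ (because $\bar\chi_{T}=1$ and $\tilde\ell^{s}_{T+1}=0$ for every $s$) are immediate from this formula.

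The main obstacle is verifying the rebalancing condition $Y^{\chi}_{t}-\chi_{t}\xi_{t}-Y^{\chi}_{t+1}\in\mathcal{K}_{t}$. A direct algebraic rearrangement expresses this increment as
\[
(1-\bar\chi_{t})(\tilde y_{t}-\tilde y_{t+1})\,+\,\chi_{t}\bigl(\tilde y_{t}-\xi_{t}-\tilde\ell^{t}_{t+1}\bigr)\,+\,\sum_{s<t}\chi_{s}\bigl(\tilde\ell^{s}_{t}-\tilde\ell^{s}_{t+1}\bigr),
\]
whose three summands have nonnegative scalar coefficients and lie in $\mathcal{K}_{t}$ by construction of $\tilde y$ and the $\tilde\ell^{s}$. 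The crux, and the whole reason for the preliminary passage from $y$ to $\tilde y$, is that with the original $y$ the first summand would only lie in $\mathcal{Q}_{t}$; strengthening the self-financing increments to $\mathcal{K}_{t}$ before forming the convex combination is exactly what converts a deferred-solvency hedge against instant exercise into a non-anticipating hedge against gradual exercise.
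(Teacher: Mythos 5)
Your proof is correct and follows essentially the same route as the paper: the easy inclusion via $Y^{\chi^T}$, the Dirac stopping times $\chi^\tau$ and non-anticipation, and the hard inclusion via a $\chi$-weighted convex combination of the continuation strategy with liquidation strategies for $y_t-y_{t+1}$ and $y_t-\xi_t$. The only (harmless) differences are that you fold the increment-liquidations into a genuinely self-financing $\tilde y$ before forming the convex combination, where the paper writes the single formula $Y^\chi_t=\chi^\ast_t y_t+\sum_{s<t}\chi^\ast_{s+1}z^s_t+\sum_{s<t}\chi_s x^s_t$ directly, and that you verify $y_\tau-\xi_\tau\in\mathcal{Q}_\tau$ straight from the liquidation-strategy definition rather than by the paper's backward induction through the recursion for $\mathcal{Q}_t$.
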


For the proof of Proposition~\ref{prop:am:seller:immediate-ultimate}, see Appendix~\ref{Sect:Appendix:technical:seller}.

We now present an iterative construction of the set of initial endowments that allow superhedging for the seller under deferred solvency. By Proposition~\ref{prop:am:seller:immediate-ultimate}, this also gives the set of initial endowments that allow superhedging for the seller under gradual exercise.

\begin{construction}\label{const:am:seller}
 Construct adapted sequences $\mathcal{U}^\mathrm{ad}_t$, $\mathcal{V}^\mathrm{ad}_t$, $\mathcal{W}^\mathrm{ad}_t$, $\mathcal{Z}^\mathrm{ad}_t$ for $t=0,\ldots,T$ by
 \begin{align}
  \mathcal{U}^\mathrm{ad}_t&:=\xi_t+\mathcal{Q}_t \text{ for all }t=0,\ldots,T,\label{eq:seller:mathcal-Ut}\\
  \mathcal{Z}^\mathrm{ad}_T&:=
  \mathcal{V}^\mathrm{ad}_T:=
  \mathcal{W}^\mathrm{ad}_T:=
  \mathcal{U}^\mathrm{ad}_T,\nonumber\\
  \intertext{and by backward induction on all $t=0,\ldots,T-1$}
  \mathcal{W}^\mathrm{ad}_t&:=\mathcal{Z}^\mathrm{ad}_{t+1}\cap\mathcal{L}_t,\label{eq:seller:mathcal-Wt}\\
  \mathcal{V}^\mathrm{ad}_t&:=\mathcal{W}^\mathrm{ad}_t+\mathcal{Q}_t,\label{eq:seller:mathcal-Vt}\\
  \mathcal{Z}^\mathrm{ad}_t&:=\mathcal{U}^\mathrm{ad}_t\cap\mathcal{V}^\mathrm{ad}_t.\label{eq:seller:mathcal-Zt}
 \end{align}
\end{construction}

It follows by backward induction that the sets
\[
\mathcal{W}^\mathrm{ad\mu}_t=\bigcap_{\nu\in\successors\mu}\mathcal{Z}^\mathrm{ad\nu}_{t+1},
\quad\quad
\mathcal{V}^\mathrm{ad\mu}_t=\mathcal{W}^\mathrm{ad\mu}+\mathcal{Q}^\mu_t,
\quad\quad
\mathcal{Z}^\mathrm{ad\mu}_t=\mathcal{U}^\mathrm{ad\mu}_t\cap\mathcal{V}^\mathrm{ad\mu}_t
\]
are convex and polyhedral for each $t=0,\ldots,T-1$ and $\mu\in\Omega_t$ because the algebraic sum and the intersection of a finite number of convex polyhedral sets are convex and polyhedral, and
\[
\mathcal{Z}^{\mathrm{ad}\mu}_T=\mathcal{V}^{\mathrm{ad}\mu}_T
=\mathcal{W}^{\mathrm{ad}\mu}_T=\mathcal{U}^{\mathrm{ad}\mu}_T=\xi^\mu_T+\mathcal{Q}^\mu_T
\]
are convex polyhedral sets for each $\mu\in\Omega_T$.
Moreover, $\mathcal{U}^\mathrm{ad}_t$, $\mathcal{V}^\mathrm{ad}_t$, $\mathcal{W}^\mathrm{ad}_t$, $\mathcal{Z}^\mathrm{ad}_t$ are non-empty for each $t=0,\ldots,T$ because the portfolio $(z,\ldots,z)\in\mathbb{R}^d$ belongs to all of them when $z\in\mathbb{R}$ is large enough.

\begin{theorem}\label{prop:seller:Zau0=initial-endowments}
  The set of initial endowments that superhedge the seller's position in the American option~$\xi$ under gradual exercise is equal to
 \begin{equation}\label{eq:20130727-Zad0}
  \mathcal{Z}^\mathrm{ad}_0
  = \{x\in\mathbb{R}^d\st \exists Y\in\Phi^\mathrm{ag}(\xi):x=Y_0\},
 \end{equation}
and the ask (seller's) price of the option in currency~$j=1,\ldots,d$ can be computed as
\[
 \pi^\mathrm{ag}_\cc(\xi)
 =\min\{x\in\mathbb{R}\st xe^\cc\in\mathcal{Z}^\mathrm{ad}_0\}.
\]
Moreover, a strategy $Y\in\Phi^\mathrm{ag}(\xi)$ can be constructed such that
\[
\pi^\mathrm{ag}_\cc(\xi)e^\cc = Y_0.
\]
\end{theorem}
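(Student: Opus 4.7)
The plan is to reduce the statement via Proposition~\ref{prop:am:seller:immediate-ultimate} to proving the equality $\mathcal{Z}^\mathrm{ad}_0 = \{x \in \mathbb{R}^d : \exists y \in \Phi^\mathrm{ad}(\xi),\ y_0 = x\}$, and to establish this by backward induction on $t$ from the stronger node-wise dynamic-programming claim: at each $\mu \in \Omega_t$, the set $\mathcal{Z}^{\mathrm{ad}\mu}_t$ consists of exactly those $x \in \mathbb{R}^d$ for which there exists a continuation $y_t, y_{t+1}, \ldots, y_{T+1}$ with $y_t(\mu) = x$ that superhedges the remaining American payoff on the subtree rooted at $\mu$ under deferred solvency, i.e.\ $y_s - y_{s+1} \in \mathcal{Q}_s$ for $t \le s < \tau$ and $y_\tau - \xi_\tau \in \mathcal{Q}_\tau$ for every $\tau \in \mathcal{T}$ with $\tau \geq t$.

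The induction base at $t = T$ is immediate since only $\tau = T$ is available, making $y_T - \xi_T \in \mathcal{Q}_T$ equivalent to $y_T \in \mathcal{U}^\mathrm{ad}_T = \mathcal{Z}^\mathrm{ad}_T$. For the step from $t+1$ to $t$, superhedging at $\mu \in \Omega_t$ requires simultaneously (i) surviving immediate exercise at $t$, i.e.\ $y_t(\mu) - \xi_t(\mu) \in \mathcal{Q}^\mu_t$, or equivalently $y_t(\mu) \in \mathcal{U}^{\mathrm{ad}\mu}_t$; and (ii) the existence of a predictable $y_{t+1}$ with $y_t(\mu) - y_{t+1}(\mu) \in \mathcal{Q}^\mu_t$ that superhedges on every successor subtree. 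Because $y_{t+1}$ is $\mathcal{F}_t$-measurable and hence constant across the successors of $\mu$, the induction hypothesis identifies (ii) with $y_{t+1}(\mu) \in \bigcap_{\nu \in \successors \mu} \mathcal{Z}^{\mathrm{ad}\nu}_{t+1} = \mathcal{W}^{\mathrm{ad}\mu}_t$, i.e.\ $y_t(\mu) \in \mathcal{W}^{\mathrm{ad}\mu}_t + \mathcal{Q}^\mu_t = \mathcal{V}^{\mathrm{ad}\mu}_t$. Combining (i) and (ii) gives $y_t(\mu) \in \mathcal{U}^{\mathrm{ad}\mu}_t \cap \mathcal{V}^{\mathrm{ad}\mu}_t = \mathcal{Z}^{\mathrm{ad}\mu}_t$, closing the induction. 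This is the most delicate step, because the predictability constraint is precisely what forces the intersection over successors encoded in the definition of $\mathcal{W}^\mathrm{ad}_t$.

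For the pricing formula, $\mathcal{Z}^\mathrm{ad}_0 \subseteq \mathbb{R}^d$ is polyhedral and therefore closed, so the section $\{x \in \mathbb{R} : xe^\cc \in \mathcal{Z}^\mathrm{ad}_0\}$ is closed; it is non-empty (take $z$ large enough so that $ze^\cc$ dominates the payoffs componentwise) and bounded below by no-arbitrage: for any $(\mathbb{Q}, S) \in \Pc$, the polar inequality $S_t \cdot (y_t - y_{t+1}) \geq 0$ from $S_t \in \mathcal{Q}_t^\ast$, combined with the $\mathbb{Q}$-martingale property of $S$ and the $\mathcal{F}_t$-measurability of $y_{t+1}$, telescopes to $S_0 \cdot y_0 \geq \mathbb{E}_\mathbb{Q}(S_\tau \cdot \xi_\tau)$ for every $\tau \in \mathcal{T}$, and specialising to $\tau = 0$ with $y_0 = xe^\cc$ yields $x \geq S_0 \cdot \xi_0$. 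Hence the infimum is attained. The optimal strategy is built by forward induction: starting from $y_0 := \pi^\mathrm{ag}_\cc(\xi) e^\cc \in \mathcal{Z}^\mathrm{ad}_0$, at each step membership in $\mathcal{V}^\mathrm{ad}_t = \mathcal{W}^\mathrm{ad}_t + \mathcal{Q}_t$ allows us to pick a predictable $y_{t+1} \in \mathcal{W}^\mathrm{ad}_t \subseteq \mathcal{Z}^\mathrm{ad}_{t+1}$ with $y_t - y_{t+1} \in \mathcal{Q}_t$; this yields $y \in \Phi^\mathrm{ad}(\xi)$, and Proposition~\ref{prop:am:seller:immediate-ultimate} supplies the conversion to $Y \in \Phi^\mathrm{ag}(\xi)$ with $Y_0 = \pi^\mathrm{ag}_\cc(\xi) e^\cc$.
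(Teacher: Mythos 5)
Your proof is correct and follows essentially the same route as the paper: reduction to $\Phi^{\mathrm{ad}}(\xi)$ via Proposition~\ref{prop:am:seller:immediate-ultimate}, followed by identifying $\mathcal{Z}^{\mathrm{ad}}_t$ with the time-$t$ superhedging portfolios through the recursion of Construction~\ref{const:am:seller} (the paper merely splits your single node-wise dynamic-programming equivalence into a forward construction of a strategy from $x\in\mathcal{Z}^{\mathrm{ad}}_0$ and a separate backward-induction verification of the converse, but the content is identical). Your explicit consistent-price-system argument for why $\{x\in\mathbb{R}\st xe^{\cc}\in\mathcal{Z}^{\mathrm{ad}}_0\}$ is bounded below is a welcome addition, since the paper only asserts that $xe^{\cc}\notin\mathcal{Z}^{\mathrm{ad}}_0$ for $x$ small enough.
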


The proof of Theorem~\ref{prop:seller:Zau0=initial-endowments} can be found in Appendix~\ref{Sect:Appendix:technical:seller}.

We can conclude that the set of initial endowments~$\mathcal{Z}^\mathrm{ad}_0$ superhedging the seller's position, the option ask price~$\pi^\mathrm{ag}_\cc(\xi)$, and a superhedging strategy~$Y$ realising the ask price can be computed by means of standard operations on convex polyhedral sets.

\begin{example}
\upshape Working within the setting of Example~\ref{Exl:new}, we can now apply
the constructions described in the current section to compute the sets
$\mathcal{Z}_{t}^{\mathrm{ad}}$ of superhedging portfolios for the seller.
These are sets of portfolios $(x^{1},x^{2})\in\mathbb{R}^{2}$ satisfying the
inequalities%
\[%
\begin{tabular}
[c]{|c|c|c|c|}\hline
$\rule[-0.2cm]{0pt}{0.6cm}$ & $\mathcal{Z}_{0}^{\mathrm{ad}}$ & $\mathcal{Z}%
_{1}^{\mathrm{ad}}$ & $\mathcal{Z}_{2}^{\mathrm{ad}}$\\\hline
$\omega_{1}\rule[-0.4cm]{0cm}{1.05cm}$ & $5x^{1}+x^{2}\geq5$ & $%
\begin{array}
[c]{l}%
8x^{1}+x^{2}\geq8\\
4x^{1}+x^{2}\geq0
\end{array}
$ & $%
\begin{array}
[c]{l}%
8x^{1}+x^{2}\geq8\\
4x^{1}+x^{2}\geq0
\end{array}
$\\\cline{4-4}%
$\omega_{2}\rule[-0.4cm]{0cm}{1.05cm}$ & $5x^{1}+x^{2}\geq5$ & $%
\begin{array}
[c]{l}%
8x^{1}+x^{2}\geq8\\
4x^{1}+x^{2}\geq0
\end{array}
$ & $4x^{1}+x^{2}\geq0$\\\cline{3-4}%
$\omega_{3}\rule[-0.4cm]{0cm}{1.05cm}$ & $5x^{1}+x^{2}\geq5$ & $2x^{1}%
+x^{2}\geq0$ & $3x^{1}+x^{2}\geq0$\\\cline{4-4}%
$\omega_{4}\rule[-0.4cm]{0cm}{1.05cm}$ & $5x^{1}+x^{2}\geq5$ & $2x^{1}%
+x^{2}\geq0$ & $x^{1}+x^{2}\geq0$\\\hline
\end{tabular}
\]
From $\mathcal{Z}_{0}^{\mathrm{ad}}$ we obtain the ask price
\[
\pi_{2}^{\mathrm{ag}}(\xi)=\min\{x\in\mathbb{R}\st(0,x)\in\mathcal{Z}%
_{0}^{\mathrm{ad}}\}=5.
\]
We can also construct a superhedging strategy $Y\in\Phi^{\mathrm{ag}}(\xi)$
such that
\[
(0,\pi_{2}^{\mathrm{ag}}(\xi))=Y_{0}=(0,5).
\]
It is the strategy~$Y$ specified in Example~\ref{Exl:new}.
\end{example}

\subsection{Dual representation of seller's price}

A dual representation of the seller's price $\pi^\mathrm{ag}_\cc(\xi)$ can be obtained with the aid of the support function $Z^\mathrm{ad}_0$ of~$-\mathcal{Z}^\mathrm{ad}_0$. For the definition of the support function of a convex set, see Appendix~\ref{Subsect:conv-anal}. More generally, let $U^\mathrm{ad}_t$, $V^\mathrm{ad}_t$, $W^\mathrm{ad}_t$,~$Z^\mathrm{ad}_t$ be the support functions of the sets $-\mathcal{U}^\mathrm{ad}_t$, $-\mathcal{V}^\mathrm{ad}_t$, $-\mathcal{W}^\mathrm{ad}_t$, $-
\mathcal{Z}^\mathrm{ad}_t$ of Construction~\ref{const:am:seller}.
The functions $U^\mathrm{ad}_t$, $V^\mathrm{ad}_t$ $W^\mathrm{ad}_t$, $Z^\mathrm{ad}_t$ are polyhedral \cite[Corollary~19.2.1]{rockafellar1996}, hence continuous.
Proposition~\ref{prop:seller:dual} in Appendix~\ref{Sect:Appendix:technical:seller} lists a number of properties of support functions, which will prove useful in what follows.

\begin{proposition}\label{Prop:20130727:pi-ag-dual}
The seller's price of an American option~$\xi$ with gradual exercise can be written as
\[
 \pi^\mathrm{ag}_\cc(\xi)
 =\max\{-Z^\mathrm{ad}_0(s)\st s\in\sigmac(\mathbb{R}^d)\}
 =\mathbb{E}_\mathbb{Q}((\xi\cdot S)_\chi)
\]
for some mixed stopping time $\chi\in\mathcal{X}$, a probability measure $\mathbb{Q}$ and an $\mathbb{R}^d$-valued adapted process~$S$ such that
\begin{equation}\label{eq:chi-approx-u}
S_t\in\mathcal{Q}^\ast_t\setminus\{0\}\ \ \text{and}\ \,\, \mathbb{E}_\mathbb{Q}(S^{\chi\ast}_{t+1}|\mathcal{F}_t)\in\mathcal{Q}^\ast_t
\quad\text{for each }t=0,\ldots,T,
\end{equation}
and $S^j_t=1$ for all $t=0,\ldots,T$. Such $\chi$, $\mathbb{Q}$ and~$S$ can be constructed by a recursive procedure.
\end{proposition}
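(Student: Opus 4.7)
The plan is to establish the two equalities in sequence using polyhedral convex analysis and unravelling Construction~\ref{const:am:seller} backwards. For the first equality, I would exploit that $\mathcal{Z}^\mathrm{ad}_0\subseteq\mathbb{R}^d$ is a non-empty convex polyhedron. The support function characterisation of $-\mathcal{Z}^\mathrm{ad}_0$ gives $xe^\cc\in\mathcal{Z}^\mathrm{ad}_0 \iff -x s^\cc \leq Z^\mathrm{ad}_0(s)$ for every $s\in\mathbb{R}^d$. Combining this with Theorem~\ref{prop:seller:Zau0=initial-endowments} and the positive homogeneity of $Z^\mathrm{ad}_0$ (so that one may normalise by restricting to $s \in \sigmac(\mathbb{R}^d)$), strong LP duality in the polyhedral setting produces the first equality with the maximum attained.

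The second equality is obtained by unravelling $Z^\mathrm{ad}_0$ through the construction. Using the standard rules that the support function of $A+B$ is the sum of support functions and of $A\cap B$ is the inf-convolution (exact for polyhedral $A,B$), I would show by backward induction that $U^\mathrm{ad}_t(s)=-s\cdot\xi_t$ on $\mathcal{Q}^\ast_t$ (and $+\infty$ off it), $V^\mathrm{ad}_t(s)=W^\mathrm{ad}_t(s)$ on $\mathcal{Q}^\ast_t$ (and $+\infty$ off it), $Z^\mathrm{ad}_t(s)=\min_{s_1+s_2=s}[U^\mathrm{ad}_t(s_1)+V^\mathrm{ad}_t(s_2)]$, and $W^{\mathrm{ad}\mu}_t(s)=\min\{\sum_{\nu\in\successors\mu}Z^{\mathrm{ad}\nu}_{t+1}(s_\nu) \st \sum_\nu s_\nu=s\}$. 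These follow respectively from $\mathcal{U}^\mathrm{ad}_t=\xi_t+\mathcal{Q}_t$, from $\mathcal{V}^\mathrm{ad}_t=\mathcal{W}^\mathrm{ad}_t+\mathcal{Q}_t$ (with the support function of $-\mathcal{Q}_t$ being the indicator of $\mathcal{Q}^\ast_t$), from the intersection rule, and from the nodewise decomposition of the $\mathcal{F}_t$-measurability constraint defining $\mathcal{W}^\mathrm{ad}_t$.

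Unrolling these recursions from $t=0$ identifies $-Z^\mathrm{ad}_0(s)$ with a maximum over collections of dual vectors $s_t^\nu\in\mathcal{Q}^\ast_t$ at every node, with successor-summation providing transitions between times and exercise-continuation splittings $s=s_1+s_2$ at each node. Interpreting these probabilistically, the weights of the exercise components $s_1$ give the mixed stopping time $\chi\in\mathcal{X}$; the successor-transition weights, properly normalised via $s^\cc=1$, define a probability $\mathbb{Q}$ together with the associated consistent price process $S$ (ensuring $S^\cc_t=1$ for all $t$); the dual-feasibility $s\in\mathcal{Q}^\ast_t$ at each node gives $S_t\in\mathcal{Q}^\ast_t\setminus\{0\}$; and the feasibility of the continuation component $s_2$ yields $\mathbb{E}_\mathbb{Q}(S^{\chi\ast}_{t+1}\mid\mathcal{F}_t)\in\mathcal{Q}^\ast_t$. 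The constructive procedure is nothing but the recursive selection of the inf-convolution minimisers at each node, each implementable by standard polyhedral operations.

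The main obstacle lies in this last step: ensuring that the node-level splittings and normalisations combine consistently into a single triple $(\chi,\mathbb{Q},S)$ satisfying all three conditions simultaneously, and that the ``continuation'' dual vector propagates cleanly into the conditional polarity statement $\mathbb{E}_\mathbb{Q}(S^{\chi\ast}_{t+1}\mid\mathcal{F}_t)\in\mathcal{Q}^\ast_t$. Careful tracking of the $\chi$-weighted continuation vector $S^{\chi\ast}_{t+1}$ is the principal bookkeeping challenge, but is manageable because each recursive step operates on finitely generated polyhedral cones.
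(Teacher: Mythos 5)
Your proposal is correct and follows essentially the same route as the paper: the first equality via the support-function characterisation of the polyhedral set $\mathcal{Z}^\mathrm{ad}_0$ together with normalisation to $\sigmac$, and the second via a recursive dual decomposition of $Z^\mathrm{ad}_0$ through Construction~\ref{const:am:seller}, reading off $\chi$ from the exercise/continuation splitting weights, $\mathbb{Q}$ from the successor weights, and $S$ from the normalised dual vectors. Your inf-convolution formulas are exactly the paper's convex-hull-of-support-functions identities (Proposition~\ref{prop:seller:dual} and Lemma~\ref{lem:Roux_Zastawniak2011}), since for positively homogeneous functions the two coincide, and the bookkeeping you flag is precisely what the paper resolves with its two backward inductions establishing $\mathbb{E}_\mathbb{Q}(S^{\chi\ast}_{t+1}|\mathcal{F}_t)=\chi^\ast_{t+1}x_t$ and the value identity.
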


The notation $\sigmac(\mathbb{R}^d)$, $(\xi\cdot S)_\chi$ and $S^{\chi\ast}$ used in Proposition~\ref{Prop:20130727:pi-ag-dual} is defined by (\ref{eq:2012-10-03:sigmai}), (\ref{eq:20130726-X-chi-star}) and~(\ref{eq:20130726-X-stopped-at-chi}). The proof is provided in Appendix~\ref{Sect:Appendix:technical:seller}.

For any $\chi\in\mathcal{X}$ denote by $\bar{\mathcal{P}}^\mathrm{d}(\chi)$ the set of pairs $(\mathbb{Q},S)$ such that $\mathbb{Q}$ is a probability measure and $S$ is an $\mathbb{R}^d$-valued adapted process satisfying~(\ref{eq:chi-approx-u}).
Also define for $\cc=1,\ldots,d$
\[
  \Pbaruc(\chi):=\{(\mathbb{Q},S)\in\bar{\mathcal{P}}^\mathrm{d}(\chi)\st S^\cc_t=1\text{ for all }t=0,\ldots,T\}.
\]
The lack of arbitrage opportunities and Proposition~\ref{prop:mart-ito-Q} {ensure} that
\begin{align*}
 \emptyset &\neq \mathcal{P} \subseteq \bar{\mathcal{P}}^\mathrm{d}(\chi),&
 \emptyset &\neq \mathcal{P}_\cc \subseteq \Pbaruc(\chi)
\end{align*}
for all $\chi\in\mathcal{X}$.

\begin{remark}\upshape
The superscript $\mathrm{d}$ indicating deferred solvency distinguishes $\bar{\mathcal{P}}^\mathrm{d}(\chi)$ and $\Pbaruc(\chi)$ from the collections $\bar{\mathcal{P}}(\chi)$ and $\Pbarc(\chi)$ defined by \cite{Roux_Zastawniak2011} in a similar way as above, but with the weaker condition
\[
S_t\in\mathcal{K}^\ast_t\setminus\{0\}\ \ \text{and}\ \,\, \mathbb{E}_\mathbb{Q}(S^{\chi\ast}_{t+1}|\mathcal{F}_t)\in\mathcal{K}^\ast_t
\quad\text{for each }t=0,\ldots,T
\]
in place of~\eqref{eq:chi-approx-u}.
\end{remark}

The following result provides a representation of $\pi^\mathrm{ag}_\cc(\xi)$ dual to the representation~(\ref{eq:20130728:def-pi-ag}) in terms of superhedging strategies.

\begin{theorem}\label{Thm:ask-price-representation}
 The ask price in currency $j=1,\ldots,d$ of an American option~$\xi$ with gradual exercise can be written as
 \[
  \pi^\mathrm{ag}_\cc(\xi) = \max_{\chi\in\mathcal{X}}\max_{(\mathbb{Q},S)\in\Pbaruc(\chi)}\mathbb{E}_\mathbb{Q}((\xi\cdot S)_\chi).
 \]
 Moreover, we can algorithmically construct $\hat{\chi}\in\mathcal{X}$,  and $(\hat{\mathbb{Q}},\hat{S})\in\Pbaruc(\hat{\chi})$ such that
 \[
  \pi^\mathrm{ag}_\cc(\xi) = \mathbb{E}_{\hat{\mathbb{Q}}}((\xi\cdot \hat{S})_{\hat{\chi}}).
 \]
\end{theorem}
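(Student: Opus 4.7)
The plan is to combine the attaining triple supplied by Proposition~\ref{Prop:20130727:pi-ag-dual} with a weak-duality estimate that handles every feasible triple in $\Pbaruc(\chi)$. Proposition~\ref{Prop:20130727:pi-ag-dual} already provides an explicit recursive construction of a mixed stopping time $\hat{\chi}\in\mathcal{X}$, a probability measure $\hat{\mathbb{Q}}$, and an adapted process $\hat{S}$ satisfying~(\ref{eq:chi-approx-u}) and $\hat{S}^\cc_t=1$ for all $t$, such that $\pi^{\mathrm{ag}}_\cc(\xi)=\mathbb{E}_{\hat{\mathbb{Q}}}((\xi\cdot\hat{S})_{\hat{\chi}})$. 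These are precisely the conditions defining $(\hat{\mathbb{Q}},\hat{S})\in\Pbaruc(\hat{\chi})$, so the algorithmic part of the theorem is immediate and we obtain $\pi^{\mathrm{ag}}_\cc(\xi)\leq\max_{\chi}\max_{(\mathbb{Q},S)\in\Pbaruc(\chi)}\mathbb{E}_\mathbb{Q}((\xi\cdot S)_\chi)$.

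For the reverse inequality I would use weak duality. By Theorem~\ref{prop:seller:Zau0=initial-endowments} there exists $Y\in\Phi^{\mathrm{ag}}(\xi)$ with $Y_0=\pi^{\mathrm{ag}}_\cc(\xi)e^\cc$. Fix any $\chi\in\mathcal{X}$ and $(\mathbb{Q},S)\in\Pbaruc(\chi)$. The rebalancing condition~(\ref{eq:seller-self-fin-superhedge}), together with $\mathcal{K}_t\subseteq\mathcal{Q}_t$, gives $Y_t^\chi-\chi_t\xi_t-Y_{t+1}^\chi\in\mathcal{Q}_t$; pairing this with $S_t\in\mathcal{Q}_t^\ast$ yields the pointwise bound $\chi_t\xi_t\cdot S_t\leq(Y_t^\chi-Y_{t+1}^\chi)\cdot S_t$. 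Summing over $t=0,\ldots,T$, taking $\mathbb{E}_\mathbb{Q}$, and applying Abel summation (using $Y_{T+1}^\chi=0$, $Y_0^\chi=Y_0$ and the $\mathcal{F}_{t-1}$-measurability of $Y_t^\chi$) reduces the problem to showing that $Y_0\cdot S_0$, which equals $\pi^{\mathrm{ag}}_\cc(\xi)$ because $S_0^\cc=1$, dominates the expectation up to a cross term built from the conditional increments of~$S$.

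The hard part will be controlling this cross term. The quantities that appear are of the form $\mathbb{E}_\mathbb{Q}\bigl[Y_{t+1}^\chi\cdot(\mathbb{E}_\mathbb{Q}(S_{t+1}\,|\,\mathcal{F}_t)-S_t)\bigr]$, which are not individually signed, since $S$ is not a $\mathbb{Q}$-martingale. The correct manoeuvre is to reorganise the sum so that the conditional-expectation clause $\mathbb{E}_\mathbb{Q}(S^{\chi\ast}_{t+1}\,|\,\mathcal{F}_t)\in\mathcal{Q}_t^\ast$ from~(\ref{eq:chi-approx-u}) can be invoked: the mixed-stopping weights must be absorbed into the pairing so that what is dotted against $\mathbb{E}_\mathbb{Q}(S^{\chi\ast}_{t+1}\,|\,\mathcal{F}_t)$ is an $\mathcal{F}_t$-measurable deferred-solvent portfolio, making the pairing non-negative and delivering the required sign. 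This telescoping is the gradual-exercise analogue of the instant-exercise supermartingale argument of \cite{Roux_Zastawniak2011}, and it is precisely where the use of the deferred-solvency dual cone $\mathcal{Q}_t^\ast$ rather than $\mathcal{K}_t^\ast$ is essential, reflecting that $Y$ is a deferred-solvency superhedge. Combining the two inequalities yields equality and shows that the supremum is attained at $(\hat{\chi},\hat{\mathbb{Q}},\hat{S})$.
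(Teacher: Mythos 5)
The first half of your argument (the inequality $\le$ and the algorithmic construction of $\hat{\chi}$ and $(\hat{\mathbb{Q}},\hat{S})$) is exactly the paper's: Proposition~\ref{Prop:20130727:pi-ag-dual} delivers the attaining triple, and its conditions are precisely membership in $\Pbaruc(\hat{\chi})$. The gap is in the reverse inequality. You correctly identify that after Abel summation the cross terms $\mathbb{E}_\mathbb{Q}\bigl[Y^\chi_{t+1}\cdot(\mathbb{E}_\mathbb{Q}(S_{t+1}\st\mathcal{F}_t)-S_t)\bigr]$ are not signed, but the ``reorganisation'' you then invoke is not a proof step --- it is a description of what one would like to be true, and its natural implementations fail. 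The obstruction is visible already at $t=T$: from the merged rebalancing condition you only know $(Y^\chi_T-\chi_T\xi_T)\cdot S_T\ge 0$, whereas the backward induction you would need (with $\mathbb{E}_\mathbb{Q}(S^{\chi\ast}_t\st\mathcal{F}_t)$ on the left) requires $\chi_T(Y^\chi_T-\xi_T)\cdot S_T\ge 0$, and these differ whenever $0<\chi_T<1$. More generally, the clause $\mathbb{E}_\mathbb{Q}(S^{\chi\ast}_{t+1}\st\mathcal{F}_t)\in\mathcal{Q}^\ast_t$ can only be exploited if what is dotted against it is known to lie in $\mathcal{Q}_t$, and neither $Y^\chi_t-Y^\chi_{t+1}$ nor $Y^\chi_{t+1}$ has that property for a gradual-exercise strategy (the hedger still owes future payoff fractions).

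The paper resolves this by \emph{not} working with $Y\in\Phi^{\mathrm{ag}}(\xi)$ at all in the weak-duality step. Instead it takes the strategy $\hat{y}\in\Phi^{\mathrm{ad}}(\xi)$ with $\hat{y}_0=\pi^{\mathrm{ag}}_\cc(\xi)e^\cc$ produced in the proof of Theorem~\ref{prop:seller:Zau0=initial-endowments} (whose existence rests on the equivalence of initial endowments in Proposition~\ref{prop:am:seller:immediate-ultimate}). For such a strategy the two conditions $\hat{y}_t-\xi_t\in\mathcal{Q}_t$ and $\hat{y}_t-\hat{y}_{t+1}\in\mathcal{Q}_t$ hold \emph{separately}, and they pair exactly with the two clauses of~(\ref{eq:chi-approx-u}): $\chi_t(\hat{y}_t-\xi_t)\cdot S_t\ge0$ and $(\hat{y}_t-\hat{y}_{t+1})\cdot\mathbb{E}_\mathbb{Q}(S^{\chi\ast}_{t+1}\st\mathcal{F}_t)\ge0$. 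A clean backward induction then gives $\hat{y}_t\cdot\mathbb{E}_\mathbb{Q}(S^{\chi\ast}_t\st\mathcal{F}_t)\ge\mathbb{E}_\mathbb{Q}((\xi\cdot S)^{\chi\ast}_t\st\mathcal{F}_t)$ for all $t$, and $t=0$ with $\mathbb{E}_\mathbb{Q}(S^\cc_\chi)=1$ yields the reverse inequality. To repair your proof you should either switch to $\Phi^{\mathrm{ad}}(\xi)$ as the paper does, or supply the explicit decomposition of $Y^\chi$ into the self-financing and liquidation pieces (which amounts to reproving Proposition~\ref{prop:am:seller:immediate-ultimate} inside the duality argument).
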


This theorem is proved in Appendix~\ref{Sect:Appendix:technical:seller}.

\begin{example}
\upshape We continue working in the setting of Example~\ref{Exl:new}. The
mixed stopping time $\hat{\chi}\in\mathcal{X}$ and a pair $(\mathbb{\hat{Q}%
},\hat{S})\in\bar{\mathcal{P}}^\mathrm{d}_{\!2}(\hat{\chi})$ such that
\[
\pi_{2}^{\mathrm{ag}}(\xi)=\mathbb{E}_{\hat{\mathbb{Q}}}((\xi\cdot\hat
{S})_{\hat{\chi}})=5
\]
are%
\[%
\begin{tabular}
[c]{|c|c|c|c|c|c|c|c|}\hline
$\rule[-0.2cm]{0pt}{0.6cm}$ & $\mathbb{\hat{Q}}$ & $\hat{S}_{0}$ & $\hat
{S}_{1}$ & $\hat{S}_{2}$ & $\hat{\chi}_{0}$ & $\hat{\chi}_{1}$ & $\hat{\chi
}_{2}$\\\hline
$\omega_{1}\rule[-0.2cm]{0pt}{0.6cm}$ & $1$ & $(5,1)$ & $(4,1)$ & $(8,1)$ &
$0$ & $\frac{3}{4}$ & $\frac{1}{4}$\\\cline{2-2}\cline{5-5}%
$\omega_{2}\rule[-0.2cm]{0pt}{0.6cm}$ & $0$ & $(5,1)$ & $(4,1)$ & $(4,1)$ &
$0$ & $\frac{3}{4}$ & $\frac{1}{4}$\\\cline{0-2}\cline{4-5}\cline{7-8}%
$\omega_{3}\rule[-0.2cm]{0pt}{0.6cm}$ & $0$ & $(5,1)$ & $(2,1)$ & $(3,1)$ &
$0$ & $0$ & $1$\\\cline{2-2}\cline{5-5}%
$\omega_{4}\rule[-0.2cm]{0pt}{0.6cm}$ & $0$ & $(5,1)$ & $(2,1)$ & $(1,1)$ &
$0$ & $0$ & $1$\\\hline
\end{tabular}
\]
\end{example}

\section{Pricing and superhedging for the buyer under gradual exercise}\label{Sect:buyer}

The buyer of an American option $\xi$ is entitled to receive the payoff according to a mixed stopping time $\chi\in\mathcal{X}$ of his choosing. In other words, the buyer receives~$\chi_t\xi_t$ at each time $t=0,\ldots,T$. The family $\Phi^\mathrm{bg}(\xi)$ of superhedging strategies for the buyer and the bid price (buyer's price) $\pi^\mathrm{bg}_\cc(\xi)$ under gradual exercise are defined in Section~\ref{Sect:grad-exe}. We turn to the task of computing the bid price and an optimal superhedging strategy for the buyer.

\subsection{Construction of buyer's price and superhedging strategy}

We start by computing the set if initial endowments that allow superhedging for the buyer.

\begin{construction}\label{const:2012-07-26.1}
 Construct adapted sequences $\mathcal{U}^\mathrm{bd}_t$, $\mathcal{V}^\mathrm{bd}_t$, $\mathcal{W}^\mathrm{bd}_t$, $\mathcal{Z}^\mathrm{bd}_t$ for $t=0,\ldots,T$ by
 \begin{align}
  \mathcal{U}^\mathrm{bd}_t&:=-\xi_t+\mathcal{Q}_t \text{ for all }t=0,\ldots,T,\label{eq:mathcal-Ut}\\
  \mathcal{Z}^\mathrm{bd}_T&:=
  \mathcal{V}^\mathrm{bd}_T:=
  \mathcal{W}^\mathrm{bd}_T:=
  \mathcal{U}^\mathrm{bd}_T,\nonumber\\
  \intertext{and by backward induction on all $t<T$}
  \mathcal{W}^\mathrm{bd}_t&:=\mathcal{Z}^\mathrm{bd}_{t+1}\cap\mathcal{L}_t,\label{eq:mathcal-Wt}\\
  \mathcal{V}^\mathrm{bd}_t&:=\mathcal{W}^\mathrm{bd}_t+\mathcal{Q}_t,\label{eq:mathcal-Vt}\\
  \mathcal{Z}^\mathrm{bd}_t&:=\conv\left\{\mathcal{U}^\mathrm{bd}_t,\mathcal{V}^\mathrm{bd}_t\right\}.\label{eq:mathcal-Zt}
 \end{align}
\end{construction}

For each $t$ the convex hull in \eqref{eq:mathcal-Zt} is taken on each atom of $\mathcal{F}_t$, i.e.\ for all $\mu\in\Omega_t$
\[
 \mathcal{Z}^{\mathrm{bd}\mu}_t = \conv\left\{\mathcal{U}^{\mathrm{bd}\mu}_t,\mathcal{V}^{\mathrm{bd}\mu}_t\right\}.
\]
The index $\mathrm{d}$ indicates that the deferred solvency cones $\mathcal{Q}_t$ are used in this construction.
The sets $\mathcal{U}^\mathrm{bd}_t$, $\mathcal{V}^\mathrm{bd}_t$, $\mathcal{W}^\mathrm{bd}_t$, $\mathcal{Z}^\mathrm{bd}_t$ are non-empty for each $t=0,\ldots,T$ because the portfolio $(z,\ldots,z)\in\mathbb{R}^d$ belongs to all of them when $z\in\mathbb{R}$ is large enough.

In contrast with Construction~4.6 of \cite{Roux_Zastawniak2011}, which was used the case of instant exercise at an ordinary stopping time, we have the convex hull of $\mathcal{U}^\mathrm{bd}_t,\mathcal{V}^\mathrm{bd}_t$ in~(\ref{eq:mathcal-Zt}) rather than the union of sets. This means that $\mathcal{U}^\mathrm{bd}_t$, $\mathcal{V}^\mathrm{bd}_t$, $\mathcal{W}^\mathrm{bd}_t$, $\mathcal{Z}^\mathrm{bd}_t$ are convex sets, unlike their counterparts in Construction~4.6 of \cite{Roux_Zastawniak2011}. This is important because, once it is established in the next proposition that the~$\mathcal{Z}^\mathrm{bd}_t$ are polyhedral, it becomes possible to implement techniques from convex analysis to compute them.

 \begin{proposition}\label{prop:2012-09-19:zt-closed}
  The set~$\mathcal{Z}^\mathrm{bd}_t$ in Construction~\ref{const:2012-07-26.1} is polyhedral with recession cone~$\mathcal{Q}_t$ for each $t=0,\ldots,T$.
 \end{proposition}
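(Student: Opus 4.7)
The proof proceeds by backward induction on $t$, tracking polyhedrality and the recession cone at every step. The key tools are the classical facts \citep[Chapter~19]{rockafellar1996}: the intersection, the Minkowski sum and the convex hull of the union of finitely many (non-empty) polyhedral convex sets are each polyhedral, and for polyhedral sets the recession cone of a Minkowski sum is the sum of the recession cones. I will also use the representation of any non-empty polyhedral convex set as $\conv\{v_1,\dots,v_n\}+C$, where $C$ is the recession cone.

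For the base case $t=T$ one has $\mathcal{Z}^{\mathrm{bd}}_T=-\xi_T+\mathcal{Q}_T$. Since $\mathcal{Q}_T=\mathcal{K}_T$ is a polyhedral cone, this is a translate of a polyhedral cone, so it is polyhedral with recession cone $\mathcal{Q}_T$.

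For the inductive step, fix $\mu\in\Omega_t$ and assume that $\mathcal{Z}^{\mathrm{bd}\nu}_{t+1}$ is polyhedral with recession cone $\mathcal{Q}^\nu_{t+1}$ for every $\nu\in\successors\mu$. Identifying $\mathcal{W}^{\mathrm{bd}\mu}_t$ with $\bigcap_{\nu\in\successors\mu}\mathcal{Z}^{\mathrm{bd}\nu}_{t+1}$ in $\mathbb{R}^d$, we see that it is polyhedral with recession cone $\bigcap_{\nu\in\successors\mu}\mathcal{Q}^\nu_{t+1}$. Because $0\in\mathcal{K}^\mu_t$, formula~\eqref{eq:2013-07-12-Q-mu-t} gives
\[
\bigcap_{\nu\in\successors\mu}\mathcal{Q}^\nu_{t+1}\subseteq\mathcal{Q}^\mu_t.
\]
Consequently $\mathcal{V}^{\mathrm{bd}\mu}_t=\mathcal{W}^{\mathrm{bd}\mu}_t+\mathcal{Q}^\mu_t$ is polyhedral and its recession cone is
\[
\Bigl(\bigcap_{\nu\in\successors\mu}\mathcal{Q}^\nu_{t+1}\Bigr)+\mathcal{Q}^\mu_t=\mathcal{Q}^\mu_t.
\]
Also $\mathcal{U}^{\mathrm{bd}\mu}_t=-\xi^\mu_t+\mathcal{Q}^\mu_t$ is a translate of $\mathcal{Q}^\mu_t$, hence polyhedral with recession cone $\mathcal{Q}^\mu_t$.

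It remains to show that $\mathcal{Z}^{\mathrm{bd}\mu}_t=\conv\{\mathcal{U}^{\mathrm{bd}\mu}_t,\mathcal{V}^{\mathrm{bd}\mu}_t\}$ is polyhedral with recession cone $\mathcal{Q}^\mu_t$. Writing
\[
\mathcal{U}^{\mathrm{bd}\mu}_t=\conv\{u_1,\dots,u_m\}+\mathcal{Q}^\mu_t,\qquad
\mathcal{V}^{\mathrm{bd}\mu}_t=\conv\{v_1,\dots,v_n\}+\mathcal{Q}^\mu_t
\]
by the decomposition theorem and using that $\mathcal{Q}^\mu_t$ is a convex cone (so any non-negative linear combination of its elements lies in it), one checks that
\[
\mathcal{Z}^{\mathrm{bd}\mu}_t=\conv\{u_1,\dots,u_m,v_1,\dots,v_n\}+\mathcal{Q}^\mu_t.
\]
This exhibits $\mathcal{Z}^{\mathrm{bd}\mu}_t$ as a polyhedral set with recession cone $\mathcal{Q}^\mu_t$, closing the induction. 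The only mildly delicate point is confirming that the recession cone of $\mathcal{V}^{\mathrm{bd}\mu}_t$ does not strictly exceed $\mathcal{Q}^\mu_t$; this is what forces one to invoke the recursive formula~\eqref{eq:2013-07-12-Q-mu-t} rather than treating the recession cones as a formality.
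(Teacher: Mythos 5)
Your proof is correct and follows essentially the same backward induction as the paper, tracking polyhedrality and recession cones through $\mathcal{W}^{\mathrm{bd}}_t$, $\mathcal{V}^{\mathrm{bd}}_t$, $\mathcal{U}^{\mathrm{bd}}_t$ and using $\mathcal{Q}_{t+1}\cap\mathcal{L}_t\subseteq\mathcal{Q}_t$ in exactly the same way. The only (harmless) variation is in the last step: where the paper invokes Rockafellar's Corollary~9.8.1 and Theorem~19.6 to get closedness and polyhedrality of the convex hull, you derive the explicit representation $\conv\{u_1,\dots,u_m,v_1,\dots,v_n\}+\mathcal{Q}^\mu_t$ directly, which yields the same conclusion.
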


The proof of Proposition~\ref{prop:2012-09-19:zt-closed} can be found in Appendix~\ref{Sect:Appendix:technical:buyer}.

The next result shows that Construction~\ref{const:2012-07-26.1} produces the set of initial endowments that superhedges~$\xi$ for the buyer, which in turn makes it possible to compute the option bid price and also to construct a strategy that realises this price. This is similar to Theorem~\ref{prop:seller:Zau0=initial-endowments} for the seller.

\begin{theorem}\label{prop:2012-07-26:hedging-construct}
 The set of initial endowments that superhedge the buyer's position in the American option~$\xi$ with gradual exercise is equal to
\begin{equation}\label{eq:2012-07-26:constr-equivalence}
 \mathcal{Z}^\mathrm{bd}_0
  = \{x\in\mathbb{R}^d\st \exists (y,\chi)\in\Phi^\mathrm{bg}(\xi):x=y_0\},
\end{equation}
and the bid (buyer's) price of the option in currency $j=1,\ldots,d$ can be computed as
\[
\pi_{\cc}^{\mathrm{bg}}(\xi)=\max\left\{  -x\in\mathbb{R}\st xe^{\cc}%
\in\mathcal{Z}_{0}^{\mathrm{bd}}\right\}  .
\]
Moreover, a strategy $(y,\chi)\in\Phi^\mathrm{bg}(\xi)$ can be constructed such that
\[
\pi^\mathrm{bg}_\cc(\xi)e^\cc=-y_0.
\]
\end{theorem}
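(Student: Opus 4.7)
The strategy is to prove the set equality \eqref{eq:2012-07-26:constr-equivalence} by establishing both inclusions, after which the bid price formula and the existence of an optimal strategy realising it will follow from the polyhedral nature of $\mathcal{Z}^\mathrm{bd}_0$ furnished by Proposition~\ref{prop:2012-09-19:zt-closed}. Throughout, the guiding invariant is that if a strategy has committed a fraction $\sum_{s<t}\chi_s$ of the option by time~$t$, leaving $r_t := 1 - \sum_{s<t}\chi_s$ still to exercise, then on $\{r_t > 0\}$ the scaled portfolio $y_t/r_t$ should lie in $\mathcal{Z}^\mathrm{bd}_t$, while on $\{r_t=0\}$ one should have $y_t \in \mathcal{Q}_t$.

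For the inclusion $\mathcal{Z}^\mathrm{bd}_0 \subseteq \{y_0 : \exists (y,\chi) \in \Phi^\mathrm{bg}(\xi)\}$, I would proceed by forward induction on $t$. Starting from $y_0 \in \mathcal{Z}^\mathrm{bd}_0 = \conv\{\mathcal{U}^\mathrm{bd}_0, \mathcal{V}^\mathrm{bd}_0\}$, write $y_0 = \lambda_0 u + (1-\lambda_0) v$ with $u \in -\xi_0 + \mathcal{Q}_0$ and $v = w + q$ for some $w \in \mathcal{W}^\mathrm{bd}_0 = \mathcal{Z}^\mathrm{bd}_1 \cap \mathcal{L}_0$ and $q \in \mathcal{Q}_0$; set $\chi_0 := \lambda_0$. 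The excess $\chi_0(u+\xi_0) + (1-\chi_0)q \in \mathcal{Q}_0$ splits, via $\mathcal{Q}_0 = \mathcal{Q}_1 \cap \mathcal{L}_0 + \mathcal{K}_0$ from \eqref{eq:2013-07-12-Q-mu-t}, as $p + k$ with $p \in \mathcal{Q}_1 \cap \mathcal{L}_0$ and $k \in \mathcal{K}_0$; defining $y_1 := (1-\chi_0)w + p \in \mathcal{L}_0$, the rebalancing condition $y_0 + \chi_0\xi_0 - y_1 = k \in \mathcal{K}_0$ holds, and $y_1 \in (1-\chi_0)\mathcal{Z}^\mathrm{bd}_1 + \mathcal{Q}_1 = (1-\chi_0)\mathcal{Z}^\mathrm{bd}_1$ by the recession cone property of Proposition~\ref{prop:2012-09-19:zt-closed}. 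Iterating node-by-node produces $(y,\chi) \in \Phi^\mathrm{bg}(\xi)$; the edge case where $r_t$ collapses to $0$ at some node is handled by setting $\chi_s = 0$ for $s \geq t$ there and completing $y_{t+1},\ldots,y_{T+1}$ as a liquidation strategy from $y_t \in \mathcal{Q}_t$, which exists by the very definition of the deferred solvency cone.

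The reverse inclusion is dual: backward induction shows that any $(y,\chi) \in \Phi^\mathrm{bg}(\xi)$ satisfies the same invariant, so specialising to $t=0$ gives $y_0 \in \mathcal{Z}^\mathrm{bd}_0$. At the terminal time, $y_T + \chi_T\xi_T \in \mathcal{K}_T = \mathcal{Q}_T$ together with $r_T = \chi_T$ (since $\chi$ is a mixed stopping time) gives $y_T \in \chi_T \mathcal{U}^\mathrm{bd}_T \subseteq r_T \mathcal{Z}^\mathrm{bd}_T$. For the inductive step, the rebalancing condition $y_t + \chi_t\xi_t - y_{t+1} \in \mathcal{K}_t \subseteq \mathcal{Q}_t$ together with the inductive hypothesis $y_{t+1}/r_{t+1} \in \mathcal{W}^\mathrm{bd}_t$ (when $r_{t+1}>0$) allows $y_t$ to be expressed as $\chi_t$ times an element of $\mathcal{U}^\mathrm{bd}_t$ plus $(r_t - \chi_t)$ times an element of $\mathcal{V}^\mathrm{bd}_t$, placing $y_t/r_t$ in the convex hull $\mathcal{Z}^\mathrm{bd}_t$.

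The bid price formula is then immediate from \eqref{eq:buyer-bid-price-gradual} combined with \eqref{eq:2012-07-26:constr-equivalence}; the supremum is attained because $\mathcal{Z}^\mathrm{bd}_0$ is closed (polyhedral) and the no-arbitrage hypothesis, via Proposition~\ref{prop:mart-ito-Q}, forces $-e^\cc \notin \mathcal{Q}_0$, so that the set $\{x \in \mathbb{R} : xe^\cc \in \mathcal{Z}^\mathrm{bd}_0\}$ is bounded below. Applying the forward-direction construction to the extremal endowment $-\pi^\mathrm{bg}_\cc(\xi)e^\cc \in \mathcal{Z}^\mathrm{bd}_0$ yields the claimed strategy. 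The principal obstacle is the bookkeeping of the remaining fraction $r_t$, which varies across nodes and may vanish, together with bridging the gap between the deferred solvency cones $\mathcal{Q}_t$ built into the construction of $\mathcal{Z}^\mathrm{bd}_t$ and the solvency cones $\mathcal{K}_t$ demanded by the rebalancing condition in $\Phi^\mathrm{bg}(\xi)$; this tension is resolved at each step by the recursion $\mathcal{Q}_t = \mathcal{Q}_{t+1} \cap \mathcal{L}_t + \mathcal{K}_t$ and the recession cone property of $\mathcal{Z}^\mathrm{bd}_{t+1}$.
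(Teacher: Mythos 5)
Your proposal is correct and follows the same overall architecture as the paper: a backward-induction invariant ($y_t\in\chi^\ast_t\mathcal{Z}^\mathrm{bd}_t$ on $\{\chi^\ast_t>0\}$, $y_t\in\mathcal{Q}_t$ on $\{\chi^\ast_t=0\}$, with $\chi^\ast_t$ equal to your $r_t$) for the inclusion of initial endowments of superhedging strategies into $\mathcal{Z}^\mathrm{bd}_0$, a forward construction for the converse, and closedness of the polyhedral set $\mathcal{Z}^\mathrm{bd}_0$ for attainment of the supremum. The one genuine difference is in the forward direction. The paper first builds an intermediate predictable process $z$ with $z_t\in p_t\mathcal{Z}^\mathrm{bd}_t$ satisfying the rebalancing condition only with respect to the deferred solvency cones $\mathcal{Q}_t$, and then repairs it to a true strategy $y$ (rebalancing in $\mathcal{K}_t$) by superimposing, at every time step, the tails of liquidation strategies started from each deferred residual $z_t+\chi_t\xi_t-z_{t+1}\in\mathcal{Q}_t$. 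You instead perform the repair inline: at each step you split the $\mathcal{Q}_t$-residual as an element of $\mathcal{Q}_{t+1}\cap\mathcal{L}_t$ plus an element of $\mathcal{K}_t$ via the recursion \eqref{eq:Qt-recursive}, discharge the $\mathcal{K}_t$ part immediately, and absorb the deferred part into $y_{t+1}$, using the recession-cone identity $\mathcal{Z}^\mathrm{bd}_{t+1}+\mathcal{Q}_{t+1}=\mathcal{Z}^\mathrm{bd}_{t+1}$ from Proposition~\ref{prop:2012-09-19:zt-closed} to preserve the invariant $y_{t+1}\in r_{t+1}\mathcal{Z}^\mathrm{bd}_{t+1}$ (with the liquidation argument reserved for nodes where $r_{t+1}$ vanishes). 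This single-pass version is a little cleaner and leans harder on Proposition~\ref{prop:2012-09-19:zt-closed}, at the cost of an explicit case split on $\{r_{t+1}=0\}$ that the paper's two-stage construction avoids; your justification of boundedness of $\{x\st xe^\cc\in\mathcal{Z}^\mathrm{bd}_0\}$ via $-e^\cc\notin\mathcal{Q}_0=0^+\mathcal{Z}^\mathrm{bd}_0$ is also slightly more explicit than the paper's. Both routes are sound.
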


The proof of this theorem is also in Appendix~\ref{Sect:Appendix:technical:buyer}.

\begin{example}
\upshape Still within the setting of Example~\ref{Exl:new}, we apply the
constructions described in the current section to compute the sets
$\mathcal{Z}_{t}^{\mathrm{bd}}$ of superhedging portfolios for the buyer.
These are sets of portfolios $(x^{1},x^{2})\in\mathbb{R}^{2}$ satisfying the
inequalities%
\[%
\begin{tabular}
[c]{|c|c|c|c|}\hline
$\rule[-0.2cm]{0pt}{0.6cm}$ & $\mathcal{Z}_{0}^{\mathrm{bd}}$ & $\mathcal{Z}%
_{1}^{\mathrm{bd}}$ & $\mathcal{Z}_{2}^{\mathrm{bd}}$\\\hline
$\omega_{1}\rule[-0.6cm]{0cm}{1.45cm}$ & $5x^{1}+x^{2}\geq-3$ & $%
\begin{array}
[c]{c}%
8x^{1}+x^{2}\geq-8\\
6x^{1}+x^{2}\geq-4\\
4x^{1}+x^{2}\geq-4
\end{array}
$ & $%
\begin{array}
[c]{l}%
8x^{1}+x^{2}\geq-8\\
4x^{1}+x^{2}\geq0
\end{array}
$\\\cline{1-1}\cline{4-4}%
$\omega_{2}\rule[-0.6cm]{0cm}{1.45cm}$ & $5x^{1}+x^{2}\geq-3$ & $%
\begin{array}
[c]{c}%
8x^{1}+x^{2}\geq-8\\
6x^{1}+x^{2}\geq-4\\
4x^{1}+x^{2}\geq-4
\end{array}
$ & $4x^{1}+x^{2}\geq0$\\\cline{1-1}\cline{3-4}%
$\omega_{3}\rule[-0.6cm]{0cm}{1.45cm}$ & $5x^{1}+x^{2}\geq-3$ & $2x^{1}%
+x^{2}\geq0$ & $3x^{1}+x^{2}\geq0$\\\cline{1-1}\cline{4-4}%
$\omega_{4}\rule[-0.6cm]{0cm}{1.45cm}$ & $5x^{1}+x^{2}\geq-3$ & $2x^{1}%
+x^{2}\geq0$ & $x^{1}+x^{2}\geq0$\\\hline
\end{tabular}
\]
From $\mathcal{Z}_{0}^{\mathrm{bd}}$ we obtain the ask price
\[
\pi_{2}^{\mathrm{bg}}(\xi)=\max\{-x\in\mathbb{R}\st(0,x)\in\mathcal{Z}%
_{0}^{\mathrm{bd}}\}=3.
\]
We can also construct a superhedging strategy $(y,\chi)\in\Phi^{\mathrm{bg}%
}(\xi)$ such that
\[
(0,\pi_{2}^{\mathrm{bg}}(\xi))=-y_{0}=(0,3).
\]
It is the strategy~$(y,\chi)$ specified in Example~\ref{Exl:new}.
\end{example}

\subsection{Dual representation of buyer's price}

Since the $\mathcal{U}_{t}^{\mathrm{bd}}$, $\mathcal{V}_{t}^{\mathrm{bd}}$, $\mathcal{W}_{t}^{\mathrm{bd}}$, $\mathcal{Z}_{t}^{\mathrm{bd}}$ are convex,
it becomes possible to apply convex duality methods not just in the seller's
case but also in the buyer's case. (This was impossible to do in
\cite{Roux_Zastawniak2011} for American options with instant
exercise because of the lack of convexity in the buyer's case.)

In particular, in a similar way as in the proof of
Proposition~\ref{Prop:20130727:pi-ag-dual}, we can show that the bid price of
an American option with payoff~$\xi$ under gradual exercise can be expressed
as%
\[
\pi_{\cc}^{\mathrm{bg}}(\xi)=\max\left\{  Z_{0}^{\mathrm{bd}}(s)\st s\in
\sigma_{j}(\mathbb{R}^{d})\right\}
\]
in terms of the support function~$Z_{0}^{\mathrm{bd}}$ of $-\mathcal{Z}%
_{0}^{\mathrm{bd}}$.

However, we follow a different approach to obtain a
representation of the bid price $\pi_{\cc}^{\mathrm{bg}}(\xi)$ dual to the
representation~(\ref{eq:buyer-bid-price-gradual}) of $\pi_{\cc}^{\mathrm{bg}}(\xi)$
by means of superhedging strategies.
In Theorem~\ref{prop:2012-07-26:hedging-construct} a mixed
stopping time $\chi\in\mathcal{X}$ has already been constructed as part of a
superhedging strategy $(y,\chi)\in\Phi^{\mathrm{bg}}(\xi)$ such that
$\pi_{\cc}^{\mathrm{bg}}(\xi)=y_{0}$. As a result, the bid price given
by~(\ref{eq:buyer-bid-price-gradual}) can be written as%
\[
\pi_{\cc}^{\mathrm{bg}}(\xi)=\max\left\{  -x\in\mathbb{R}\st\exists y\in
\Phi:(y,\chi)\in\Phi^{\mathrm{bg}}(\xi),xe^{\cc}=y_{0}\right\}
\]
for this mixed stopping time~$\chi$. It turns out that the set on the
right-hand side can be expressed by means of the family $\Psi^\mathrm{a}(-\xi_\chi)$ of superhedging strategies for the seller of a
European option with expiry time $T$ and payoff $-\xi_\chi$ as described in Appendix~\ref{Sect:Eur-opt}, where $\xi_\chi$ is defined by~{\upshape(\ref{eq:20130726-X-stopped-at-chi})}.

\begin{proposition}\label{Prop:Am-Eur}
For any American option~$\xi$ and any mixed stopping time $\chi
\in\mathcal{X}$ we have%
\begin{multline*}
\{-x\in\mathbb{R}^{d}\st\exists y\in\Phi:(y,\chi)\in\Phi^{\mathrm{bg}}%
(\xi),x=y_{0}\}\\
=\{-x\in\mathbb{R}^{d}\st\exists z\in\Psi^{\mathrm{a}}(-\xi_\chi):x=z_{0}\}.
\end{multline*}
\end{proposition}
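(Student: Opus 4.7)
The plan is to construct an explicit bijection between the two sets of strategies that leaves the initial endowment fixed. Concretely, for $(y,\chi)\in\Phi^{\mathrm{bg}}(\xi)$ I would set
\[
z_t := y_t - \sum_{s=0}^{t-1}\chi_s\xi_s\quad(t=0,\ldots,T),\qquad z_{T+1}:=0,
\]
and show that $z\in\Psi^{\mathrm{a}}(-\xi_\chi)$, while for $z\in\Psi^{\mathrm{a}}(-\xi_\chi)$ the inverse assignment $y_t:=z_t+\sum_{s=0}^{t-1}\chi_s\xi_s$ (with $y_{T+1}:=0$) recovers some $(y,\chi)\in\Phi^{\mathrm{bg}}(\xi)$. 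Both assignments preserve $y_0=z_0$, so the equality of the two sets of initial endowments follows at once.

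The verification splits into three routine steps. First, predictability is automatic: since $\chi_s\xi_s$ is $\mathcal{F}_s$-measurable for $s\le t-1$, the partial sum $\sum_{s<t}\chi_s\xi_s$ is $\mathcal{F}_{t-1}$-measurable, so $z_t\in\mathcal{L}_{t-1}$ follows from $y_t\in\mathcal{L}_{t-1}$ and $z\in\Phi$. Second, the self-financing conditions telescope: for $t=0,\ldots,T-1$,
\[
z_t-z_{t+1}=y_t+\chi_t\xi_t-y_{t+1}\in\mathcal{K}_t
\]
is precisely the buyer's rebalancing condition~(\ref{eq:buyer-self-fin-superhedge}) at time $t$. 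Third, the European terminal condition is reconstituted from~(\ref{eq:buyer-self-fin-superhedge}) at $t=T$ together with $y_{T+1}=0$:
\[
z_T-(-\xi_\chi) = y_T+\chi_T\xi_T \in \mathcal{K}_T.
\]
The reverse direction is the same computation read backwards, so no new idea is required.

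I do not anticipate any genuine obstacle beyond careful bookkeeping. The one point worth highlighting is that the last-step rebalancing condition~(\ref{eq:buyer-self-fin-superhedge}) at $t=T$, which has no analogue in the instant-exercise setting, is exactly what converts into the European seller's solvency at time~$T$ for the payoff~$-\xi_\chi$; without this extra index everything would be off by one. Once this is observed, the whole argument is a one-line change of variables that trades accumulated payoffs $\sum_{s<t}\chi_s\xi_s$ against the strategy process, and the conclusion of the proposition is immediate.
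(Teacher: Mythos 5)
Your proposal is correct and coincides with the paper's own proof: the same change of variables $z_t=y_t-\sum_{s=0}^{t-1}\chi_s\xi_s$ (and its inverse), the same telescoping identification of the rebalancing conditions with the European self-financing conditions, and the same use of the $t=T$ rebalancing condition together with $y_{T+1}=0$ to obtain $z_T+\xi_\chi\in\mathcal{K}_T$. No further comment is needed.
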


This proposition is proved in Appendix~\ref{Sect:Appendix:technical:buyer}.

We are now in a position to state a representation of the bid price dual to~(\ref{eq:buyer-bid-price-gradual}), and to prove it with the aid of Proposition~\ref{Prop:Am-Eur}.

\begin{theorem}\label{th:bu-buyer}
The buyer's (bid) price of an American option $\xi$ in currency $j=1,\ldots,d$ can be represented as
\begin{equation}\label{eq:pi-bu}
 \pi^\mathrm{bg}_\cc(\xi)
 = \max_{\chi\in\mathcal{X}}\min_{(\mathbb{Q},S)\in\Pbarc}\mathbb{E}_\mathbb{Q}\left((\xi\cdot S)_\chi\right),
\end{equation}
where $(\xi\cdot S)_\chi$ is defined by~{\upshape(\ref{eq:20130726-X-stopped-at-chi})}. Moreover, we can algorithmically construct $\hat{\chi}\in\mathcal{X}$ and $(\hat{\mathbb{Q}},\hat{S})\in\Pbarc$ such that
\[
 \pi^\mathrm{bg}_\cc(\xi)= \mathbb{E}_{\hat{\mathbb{Q}}}((\xi\cdot \hat{S})_{\hat{\chi}})
 =\min_{(\mathbb{Q},S)\in\Pbarc}\mathbb{E}_\mathbb{Q}\left((\xi\cdot S)_{\hat{\chi}}\right).
\]
\end{theorem}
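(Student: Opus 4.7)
My plan is to reduce the buyer's problem, for each fixed mixed stopping time $\chi\in\mathcal{X}$, to the superhedging problem for the seller of a European option with payoff $-\xi_\chi$ at the terminal date~$T$, using Proposition~\ref{Prop:Am-Eur}, and then to feed it into the classical European dual representation over $\Pbarc$ provided in Appendix~\ref{Sect:Eur-opt}.

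First I split the supremum in~(\ref{eq:buyer-bid-price-gradual}):
\[
\pi^{\mathrm{bg}}_\cc(\xi)=\sup_{\chi\in\mathcal{X}}\sup\bigl\{-x\in\mathbb{R}\st\exists y\in\Phi:(y,\chi)\in\Phi^{\mathrm{bg}}(\xi),\ xe^\cc=y_0\bigr\}.
\]
By Proposition~\ref{Prop:Am-Eur} the inner supremum equals $-\pi^{\mathrm{a,Eur}}_\cc(-\xi_\chi)$, where $\pi^{\mathrm{a,Eur}}_\cc$ denotes the European seller's ask price in currency~$\cc$ for a payoff at time~$T$. The classical dual representation for the European seller (Appendix~\ref{Sect:Eur-opt}) then gives
\[
\pi^{\mathrm{a,Eur}}_\cc(-\xi_\chi)=\max_{(\mathbb{Q},S)\in\Pbarc}\mathbb{E}_\mathbb{Q}(-\xi_\chi\cdot S_T),
\]
together with an algorithmic construction of a pair attaining this maximum.

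Next I convert $\mathbb{E}_\mathbb{Q}(\xi_\chi\cdot S_T)$ into the form $\mathbb{E}_\mathbb{Q}((\xi\cdot S)_\chi)$ via the $\mathbb{Q}$-martingale property of~$S$: since $\chi_t\xi_t$ is $\mathcal{F}_t$-measurable,
\[
\mathbb{E}_\mathbb{Q}(\chi_t\xi_t\cdot S_T)=\mathbb{E}_\mathbb{Q}\bigl(\chi_t\xi_t\cdot\mathbb{E}_\mathbb{Q}(S_T\mid\mathcal{F}_t)\bigr)=\mathbb{E}_\mathbb{Q}(\chi_t\xi_t\cdot S_t),
\]
and summing on $t=0,\ldots,T$ yields $\mathbb{E}_\mathbb{Q}(\xi_\chi\cdot S_T)=\mathbb{E}_\mathbb{Q}((\xi\cdot S)_\chi)$. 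Substituting and flipping the sign converts the $\max$ into a $\min$, so
\[
-\pi^{\mathrm{a,Eur}}_\cc(-\xi_\chi)=\min_{(\mathbb{Q},S)\in\Pbarc}\mathbb{E}_\mathbb{Q}((\xi\cdot S)_\chi).
\]
Taking the supremum over $\chi\in\mathcal{X}$ then delivers~(\ref{eq:pi-bu}).

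For the constructive claim, Theorem~\ref{prop:2012-07-26:hedging-construct} already produces an optimal $\hat\chi\in\mathcal{X}$ together with a superhedging strategy realising $\pi^{\mathrm{bg}}_\cc(\xi)$, so the outer supremum is attained and may be written as $\max$. Feeding $-\xi_{\hat\chi}$ into the European seller's algorithm from Appendix~\ref{Sect:Eur-opt} yields a pair $(\hat{\mathbb{Q}},\hat S)\in\Pbarc$ attaining the inner minimum at $\chi=\hat\chi$, which is the pair sought. The main conceptual point to verify is that the dual set here is $\Pbarc$ and not the larger $\Pbaruc(\chi)$ encountered for the seller in Theorem~\ref{Thm:ask-price-representation}: this is because the buyer's trading strategy $y$ in $\Phi^{\mathrm{bg}}(\xi)$ is self-financing with respect to the ordinary solvency cones $\mathcal{K}_t$ (not the deferred solvency cones $\mathcal{Q}_t$), so only the polars $\mathcal{K}_t^\ast$ enter the dual. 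Beyond this bookkeeping, the argument is essentially a composition of Proposition~\ref{Prop:Am-Eur} with the standard European duality and the martingale identity above.
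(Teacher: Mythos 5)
Your proposal is correct and follows essentially the same route as the paper's own proof: splitting the supremum over $\chi$, invoking Proposition~\ref{Prop:Am-Eur} to reduce to the European seller's problem for $-\xi_\chi$, applying the European dual representation of Lemma~\ref{Lem:Eur-ask-price-dual-repr}, using the martingale property to rewrite $\mathbb{E}_\mathbb{Q}(\xi_\chi\cdot S_T)$ as $\mathbb{E}_\mathbb{Q}((\xi\cdot S)_\chi)$, and obtaining attainment of the outer maximum at $\hat\chi$ from Theorem~\ref{prop:2012-07-26:hedging-construct}. No gaps.
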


The proof of Theorem~\ref{th:bu-buyer} is in Appendix~\ref{Sect:Appendix:technical:buyer}.

\begin{example}
\upshape We revisit Example~\ref{Exl:new} one more time to construct a mixed
stopping time $\hat{\chi}\in\mathcal{X}$ and a pair $(\mathbb{\hat{Q}},\hat
{S})\in\bar{\mathcal{P}}_{2}$ such that
\[
\pi_{2}^{\mathrm{bg}}(\xi)=\mathbb{E}_{\hat{\mathbb{Q}}}((\xi\cdot\hat
{S})_{\hat{\chi}})=\min_{(\mathbb{Q},S)\in\bar{\mathcal{P}}_{2}}%
\mathbb{E}_{\mathbb{Q}}((\xi\cdot S)_{\hat{\chi}})=3.
\]
They are%
\[%
\begin{tabular}
[c]{|c|c|c|c|c|c|c|c|}\hline
$\rule[-0.2cm]{0pt}{0.6cm}$ & $\mathbb{\hat{Q}}$ & $\hat{S}_{0}$ & $\hat
{S}_{1}$ & $\hat{S}_{2}$ & $\hat{\chi}_{0}$ & $\hat{\chi}_{1}$ & $\hat{\chi
}_{2}$\\\hline
$\omega_{1}\rule[-0.2cm]{0pt}{0.6cm}$ & $1$ & $(5,1)$ & $(5,1)$ & $(5,1)$ &
$0$ & $\frac{1}{2}$ & $\frac{1}{2}$\\\cline{2-2}\cline{5-5}%
$\omega_{2}\rule[-0.2cm]{0pt}{0.6cm}$ & $0$ & $(5,1)$ & $(5,1)$ & $(4,1)$ &
$0$ & $\frac{1}{2}$ & $\frac{1}{2}$\\\cline{0-2}\cline{4-5}\cline{7-8}%
$\omega_{3}\rule[-0.2cm]{0pt}{0.6cm}$ & $0$ & $(5,1)$ & $(2,1)$ & $(3,1)$ &
$0$ & $0$ & $1$\\\cline{2-2}\cline{5-5}%
$\omega_{4}\rule[-0.2cm]{0pt}{0.6cm}$ & $0$ & $(5,1)$ & $(2,1)$ & $(1,1)$ &
$0$ & $0$ & $1$\\\hline
\end{tabular}
\]
\end{example}

\section{Numerical example}\label{sec:num-example}

In this section we present a three-dimensional numerical example with a realistic flavour to illustrate Constructions~\ref{const:am:seller} and~\ref{const:2012-07-26.1}. The numerical procedures below were implemented in \emph{Maple} with the aid of the \emph{Convex} package \citep{Franz2009}.

Consider a model involving a domestic currency and two foreign currencies, with time horizon $\tau=1$ and with \textbf{$T=10$} time steps. The friction-free nominal exchange rates $E_t=(E^1_t,E^2_t)$ between the domestic currency and the two foreign currencies follow the two-asset {recombinant} \citet{Korn_Muller2009} model with Cholesky decomposition. That is, there are $(t+1)^2$ possibilities for the exchange rates at each time step $t=0,\ldots,T$, indexed by pairs $(j_1,j_2)$ with $1\le j_1,j_2\le t+1$, and each non-terminal node with exchange rates $E_t(j_1,j_2)$ has four successors, associated with exchange rates $E_{t+1}(j_1,j_2)$, $E_{t+1}(j_1+1,j_2)$, $E_{t+1}(j_1,j_2+1)$ and $E_{t+1}(j_1+1,j_2+1)$.  With $\Delta=\tfrac{\tau}{T}$ defined for convenience, the exchange rates are given by
\begin{align*}
 E^1_t(j_1,j_2) &= E^1_0e^{-\tfrac{1}{2}\sigma_1^2t\Delta +(2j_1-t-2)\sigma_1\sqrt{\Delta}},\\
 E^2_t(j_1,j_2) &= E^2_0e^{-\tfrac{1}{2}\sigma_2^2t\Delta +\left((2j_1-t-2)\rho + (2j_2-t-2)\sqrt{1-\rho^2}\right)\sigma_2\sqrt{\Delta}}
\end{align*}
for $t=0,\ldots,T$ and $j_1,j_2=1,\ldots,t+1$, where $E^1_0=40$ and $E^2_0=50$ are the initial exchange rates, $\sigma_1=15\%$ and $\sigma_2=10\%$ are the volatilities and $\rho=50\%$ is the correlation between the logarithmic growth of the exchange rates.

Assume that proportional transaction costs of $0.5\%$ are payable on all currency exchanges, except at time step $1$, when $10\%$ is payable, modelling a temporary loss of liquidity. In other words, the matrix of exchange rates between each pair among the three currencies at each time step $t$ is
\[
 \begin{pmatrix}\pi^{11}_t&\pi^{12}_t&\pi^{13}_t\\ \pi^{21}_t&\pi^{22}_t&\pi^{23}_t\\ \pi^{31}_t&\pi^{32}_t&\pi^{33}_t\end{pmatrix}
 =\begin{pmatrix}1&\frac{E^2_t}{E^1_t}(1+k_t)&\frac{1}{E^1_t}(1+k_t)\\ \frac{E^1_t}{E^2_t}(1+k_t)&1&\frac{1}{E^2_t}(1+k_t)\\ E^1_t(1+k_t)&E^2_t(1+k_t)&1\end{pmatrix},
\]
where
\[
 k_t =
 \begin{cases}
  0.1 &\text{if }t=1,\\
  0.005 &\text{otherwise}.
 \end{cases}
\]

Consider an American put option with physical delivery and strike~$90$ on a basket containing one unit of each of the foreign currencies. It offers the payoff
\[
 \xi_t=(-1,-1,90)\quad\text{for }t=0,\ldots,T.
\]
We allow for the possibility that the option may never be exercised by adding an extra time step $T+1$ to the model and setting the payoff to be $(0,0,0)$ at that time step. Constructions~\ref{const:am:seller} and~\ref{const:2012-07-26.1} give
\begin{multline*}
\begin{aligned}
\mathcal{Z}^\text{ad}_0 &= \conv\left\{
\left(\begin{array}{d}-0.749\\ -0.218\\ 47.587\end{array}\right),
\left(\begin{array}{d}-0.166\\ -0.727\\ 49.773\end{array}\right)
\right\} + \mathcal{Q}_0,\\
\mathcal{Z}^\text{bd}_0 &= \conv\left\{
\left(\begin{array}{d}0.631\\ 0.783\\ -65.310\end{array}\right),
\left(\begin{array}{d}0.453\\ 0.286\\ -33.404\end{array}\right),
\left(\begin{array}{d}0.419\\ 0.782\\ -56.798\end{array}\right),
\left(\begin{array}{d}0.249\\ 0.267\\ -24.342\end{array}\right),\right.
\end{aligned}\\
\left.
\left(\begin{array}{d}0.329\\ 0.513\\ -39.790\end{array}\right)
\right\} + \mathcal{Q}_0,
\end{multline*}
where $\mathcal{Q}_0$ is the convex cone generated by the vectors
\[
\left(\begin{array}{d}0.000 \\0.020 \\-1.000\end{array}\right),
\left(\begin{array}{d}-1.000 \\0.804 \\0.000\end{array}\right),
\left(\begin{array}{d}1.246 \\-1.000 \\0.000\end{array}\right),
\left(\begin{array}{d}-1.000 \\0.000 \\40.200\end{array}\right),
\left(\begin{array}{d}0.025 \\0.000 \\-1.000\end{array}\right),
\left(\begin{array}{d}0.000 \\-1.000 \\50.250\end{array}\right).
\]
The sets $\mathcal{Z}^\text{ad}_0$ and $\mathcal{Z}^\text{bd}_0$, which appear in Figure~\ref{fig:2}, yield the ask and bid prices
\begin{align*}
\pi_1^\text{ag}(\xi) &= 0.174, &\pi_2^\text{ag}(\xi) &= 0.140, & \pi_3^\text{ag}(\xi) &= 6.941,\\
\pi_1^\text{bg}(\xi) &= 0.022, &\pi_2^\text{bg}(\xi) &= 0.017, & \pi_3^\text{bg}(\xi) &= 0.879
\end{align*}
in each of the three currencies.
\begin{figure}
\begin{center}%
\begin{tabular}
[c]{cc}%
\includegraphics[width=0.5\figurewidth]{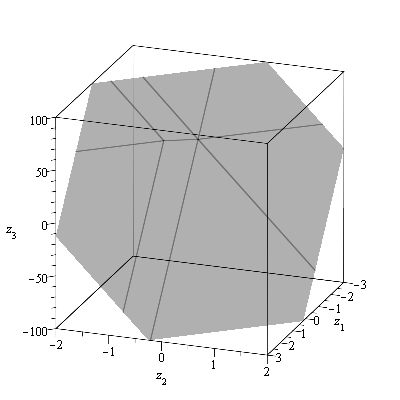} &
\includegraphics[width=0.5\figurewidth]{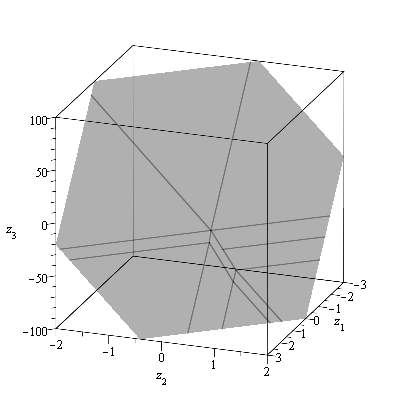} \\
$\mathcal{Z}^\text{ad}_0$ & $\mathcal{Z}^\text{bd}_0$
\end{tabular}
\end{center}
\caption{{Boundaries of the sets} of superhedging endowments for the seller and buyer}%
\label{fig:2}%
\end{figure}

\section{Conclusions and Outlook}\label{Sect:conclusions}

In this paper we have explored American options with gradual exercise within Kabanov's model
\citep{kabanov1999} of many assets under transaction costs, along with the
related notion of deferred solvency, which helps to deal with a temporary
loss of liquidity (large bid-ask spreads) in the market. We have demonstrated that gradual
exercise (at a mixed stopping time chosen by the buyer) can reduce the ask
(seller's) price and increase the bid (buyer's) price of the option compared
with the case when the option can only be exercised instantly (at an ordinary stopping time).

In this context we have constructed and implemented algorithms
to compute the ask and bid option
prices, the buyer's and seller's optimal hedging portfolios and strategies, and
their optimal mixed stopping times. We have studied dual representations for both the buyer and the seller of an American option with gradual
exercise. The results have been illustrated by numerical examples.

Compared to options with instant exercise, a novel feature is that pricing and
hedging an American option is a convex optimisation problem not just for the
seller but also for the buyer of the option, making it possible to use convex
duality in both cases. Ramifications to be explored further may
include an extension of Bouchard and Temam's representation of the strategies
hedging the seller's (short) position \citep{Bouchard_Temam2005} to the case
of hedging the buyer's (long) position in the option.

We also conjecture that it should be possible to adapt the constructions
presented here so that linear vector optimisation methods can be used to price and
hedge both the seller's and buyer's positions in an American option with
gradual exercise, along similar lines as was done by
\cite{Loehne_Rudloff2011} for European options under transaction costs.

\appendix
\section{Appendix}

\subsection{Some notation and facts from convex analysis\label{Subsect:conv-anal}}

For any non-empty convex cone $A\subseteq\mathbb{R}^d$, denote by $A^\ast$ the \emph{polar} of $-A$, i.e.
\begin{equation}\label{eq:2012-09-20:Aast}
 A^\ast:=\{y\in\mathbb{R}^d \st y\cdot x\ge0 \text{ for all }x\in A\}.
\end{equation}
For any set $A\subseteq\mathbb{R}^d$ define the \emph{cone generated by $A$} as
\[
 \cone A := \{\lambda x \st \lambda\ge0,x\in A\}.
\]
The \emph{recession cone} of a non-empty convex set $A\subseteq\mathbb{R}^d$ is defined as
\[
 0^+A:=\{y\in\mathbb{R}^d \st A +\lambda y\subseteq A \text{ for all } \lambda\ge0\}.
\]
It is a convex cone containing the origin \citep[Theorem 8.1]{rockafellar1996}. If $A$ is a polyhedral cone, then $0^+A=A$ \citep[Corollary 8.3.2]{rockafellar1996}.

The \emph{convex hull} of sets $A_1,\ldots,A_n$ in $\mathbb{R}^d$ is the smallest convex set in $\mathbb{R}^d$ that contains $A_1,\ldots,A_n$, and is denoted by $\conv\{A_1,\ldots,A_n\}$. The \emph{convex hull} of convex functions $f_1,\ldots,f_n:\mathbb{R}\rightarrow\mathbb{R}\cup\{\infty\}$ is the function $f:=\conv \{f_1,\ldots,f_n\}$ defined by
\[
 f(x) := \inf\left\{\sum_{i=1}^n\lambda_i f_i(x_i) \st
 0\le\lambda_1,\ldots,\lambda_n\le1,
 \sum_{i=1}^n\lambda_i=1,
 \sum_{i=1}^n\lambda_i x_i=x\right\}.
\]
The \emph{effective domain} of a convex function $f:\mathbb{R}^d\rightarrow\mathbb{R}\cup\{\infty\}$ is defined as
\[
 \dom f:= \{y\in\mathbb{R}^d \st f(y)<\infty\}.
\]
The \emph{support function} $\delta^\ast_A$ of a convex set $A\subseteq\mathbb{R}^d$ is defined as
\[
 \delta^\ast_A(x):=\sup\{x\cdot y \st y\in A\}.
\]

\subsection{Compactly generated cones\label{Subsect:comp-gen-cones}}

For any set $A\subseteq\mathbb{R}^d$ and $\cc=1,\ldots,d$ define
\begin{equation}\label{eq:2012-10-03:sigmai}
 \sigmac(A) := \{x=(x^1,\ldots,x^d)\in A \st x^\cc=1\}.
\end{equation}
We say that a cone $A\subset\mathbb{R}^d$ is \emph{compactly $\cc$-generated} if $\sigmac(A)$ is compact, non-empty and $A$ is generated by $\sigmac(A)$.

\begin{lemma} \label{lem:2012-10-03:sigmai-intersect}
 If two cones $A\subset\mathbb{R}^d$ and $B\subset\mathbb{R}^d$ are compactly $\cc$-generated and \mbox{$A\cap B\setminus \{0\}\neq\emptyset$}, then $A\cap B$ is compactly $\cc$-generated and
 \begin{equation}\sigmac(A\cap B) = \sigmac(A)\cap\sigmac(B).\label{eq:2012-10-03:sigmai-intersect} \end{equation}
\end{lemma}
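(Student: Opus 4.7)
The plan is to dispatch the set equality first and then verify the three defining properties of a compactly $j$-generated cone for $A\cap B$. The identity $\sigma_j(A\cap B)=\sigma_j(A)\cap\sigma_j(B)$ is immediate by unpacking the definition~\eqref{eq:2012-10-03:sigmai}: a point $x$ lies in $\sigma_j(A\cap B)$ iff $x\in A\cap B$ and $x^j=1$, which is the same as $x\in\sigma_j(A)$ and $x\in\sigma_j(B)$.

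The key auxiliary observation I would record is that for \emph{any} compactly $j$-generated cone $C$, every $x\in C$ can be written as $\lambda y$ with $\lambda\ge 0$ and $y\in\sigma_j(C)$, whence $x^j=\lambda$; in particular $x^j\ge 0$, and if $x\neq 0$ then $x^j>0$ and $x/x^j\in\sigma_j(C)$. With this in hand, compactness of $\sigma_j(A\cap B)$ follows from the equality already proved, since the intersection of two compact sets is compact. For non-emptiness, I would pick any $x\in(A\cap B)\setminus\{0\}$ (which exists by hypothesis); applying the observation to $C=A$ yields $x^j>0$ and $x/x^j\in\sigma_j(A)$, and applying it to $C=B$ similarly yields $x/x^j\in\sigma_j(B)$, so $x/x^j\in\sigma_j(A)\cap\sigma_j(B)=\sigma_j(A\cap B)$.

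It remains to show that $A\cap B$ is generated by $\sigma_j(A\cap B)$, i.e.\ $A\cap B=\cone\sigma_j(A\cap B)$. The inclusion $\cone\sigma_j(A\cap B)\subseteq A\cap B$ is clear because $\sigma_j(A\cap B)\subseteq A\cap B$ and $A\cap B$ is a cone. For the reverse inclusion, take $x\in A\cap B$: if $x=0$, then $x=0\cdot y\in\cone\sigma_j(A\cap B)$ using any $y\in\sigma_j(A\cap B)$ (which exists by the previous step); if $x\neq 0$, then $x^j>0$ and $x=x^j\cdot(x/x^j)$ with $x/x^j\in\sigma_j(A\cap B)$ by the argument above.

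I do not expect any genuine obstacle here: the whole argument is driven by the normalisation identity $x=x^j\cdot(x/x^j)$ coupled with the fact that compact $j$-generation forces the $j$-th coordinate to serve as the canonical scaling factor. The only point requiring a moment's care is checking that the scaling factor read off from the representation in $A$ agrees with that read off from the representation in $B$, which is automatic because both equal $x^j$.
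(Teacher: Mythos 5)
Your proof is correct and follows essentially the same route as the paper's: both arguments hinge on the observation that for a compactly $\cc$-generated cone the $\cc$-th coordinate determines the scaling factor uniquely, so the normalised representatives of a nonzero $x\in A\cap B$ in $\sigmac(A)$ and $\sigmac(B)$ must coincide (the paper phrases this as $\lambda_A=x^\cc=\lambda_B$, you phrase it as $x/x^\cc$ lying in both sections). Your write-up is marginally more explicit in spelling out the generation property $A\cap B=\cone\sigmac(A\cap B)$, including the $x=0$ case, but there is no substantive difference.
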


\begin{proof}
 Equality \eqref{eq:2012-10-03:sigmai-intersect} follows directly from \eqref{eq:2012-10-03:sigmai}. A vector $x=(x^1,\ldots,x^d)$ is an element of $\sigmac(A\cap B)$ if and only if $x\in A\cap B$ and $x^\cc=1$, if and only if $x\in A$ and $x\in B$ and $x^\cc=1$, if and only if $x\in\sigmac(A)$ and $x\in\sigmac(B)$, if and only if $x\in\sigmac(A)\cap\sigmac(B)$.

 The set $\sigmac(A\cap B)$ is compact since it is the intersection of two compact sets $\sigmac(A)$ and $\sigmac(B)$. It remains to show that $\sigmac(A\cap B)$ is non-empty and generates $A\cap B$. To this end, fix any $x\in A\cap B\setminus\{0\}$. As $A$ and $B$ are generated, respectively, by $\sigmac(A)$ and $\sigmac(B)$, there exist $\lambda_A\ge0$, $\lambda_B\ge0$, $x_A\in\sigmac(A)$ and $x_B\in\sigmac(B)$ such that
 \[
  \lambda_Ax_A = x = \lambda_Bx_B.
 \]
 As $x\neq0$, we must have $\lambda_A>0$ and $\lambda_B>0$. Moreover, since $x_{\!A}^\cc=x_{\!B}^\cc=1$, we have
 \[
  \lambda_A = \lambda_Ax_{\!A}^\cc = x^\cc = \lambda_Bx_{\!B}^\cc = \lambda_B,
 \]
 which in turn implies $x_A=x_B\in\sigmac(A)\cap\sigmac(B)=\sigmac(A\cap B)$, completing the proof.
 \end{proof}

In this paper we also make use of the following result by \citet[Lemma~A.1]{Roux_Zastawniak2011}.

\begin{lemma}\label{lem:Roux_Zastawniak2011}
 Fix any $\cc=1,\ldots,d$, and suppose that $A_1,\ldots,A_n$ are non-empty closed convex sets in $\mathbb{R}^d$ such that $A:=\bigcap_{i=1}^nA_i\neq\emptyset$ and $\dom\delta^\ast_{A_i}$ is compactly $\cc$-generated for all $k$. Then
 \[
  \delta^\ast_A = \conv\{\delta^\ast_{A_1},\ldots,\delta^\ast_{A_n}\},
 \]
 the cone $\dom\delta^\ast_A$ is compactly $\cc$-generated and
 \[
  \dom\delta^\ast_A=\conv\{\dom\delta^\ast_{A_1}, \ldots, \dom\delta^\ast_{A_n}\},
 \]
 and for each $x\in\sigmac(\dom\delta^\ast_A)$ there exist $\alpha_1,\ldots,\alpha_n\ge0$ and $x_1,\ldots,x_n$ with $x_i\in\sigmac(\dom\delta^\ast_{A_i})$ for all $i=1,\ldots,n$ such that
 \begin{align*}
  \delta^\ast_A(x) &= \sum_{i=1}^n\alpha_i\delta^\ast_{A_i}(x_i), &
  x &= \sum_{i=1}^n\alpha_i x_i, &
  1 &= \sum_{i=1}^n\alpha_i.
 \end{align*}
\end{lemma}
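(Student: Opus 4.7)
The plan is to reduce the lemma to two standard facts of convex analysis — the conjugate-of-sum $=$ infimal-convolution formula and the coincidence of infimal convolution with convex hull for positively homogeneous functions — and then to use the compactly $\cc$-generated hypothesis to avoid the closure operation that usually appears in such identities.

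First I would record the indicator-function identity $\delta_A=\sum_{i=1}^n\delta_{A_i}$, valid because $A=\bigcap_i A_i$ and each $\delta_{A_i}$ takes values in $\{0,\infty\}$. Taking conjugates and applying Rockafellar's theorem on the conjugate of a sum, one gets $\delta^\ast_A=\mathrm{cl}\bigl(\delta^\ast_{A_1}\square\cdots\square\delta^\ast_{A_n}\bigr)$, where $\square$ denotes infimal convolution. Since every support function is positively homogeneous, the identity $\lambda_i\delta^\ast_{A_i}(x_i)=\delta^\ast_{A_i}(\lambda_i x_i)$ for $\lambda_i\ge0$ immediately shows that the infimal convolution of the $\delta^\ast_{A_i}$ coincides with their convex hull $\conv\{\delta^\ast_{A_1},\dots,\delta^\ast_{A_n}\}$. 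So, up to closure, the first claimed identity holds and, restricted to effective domains, one sees that $\dom\delta^\ast_A\supseteq\conv\{\dom\delta^\ast_{A_i}\}$ with the reverse containment following because any point in the convex-hull representation must lie in $\conv\bigcup_i\dom\delta^\ast_{A_i}$.

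The main obstacle — and the reason the lemma is stated in this precise form — is to remove the closure, i.e.\ to show that the convex hull of the $\delta^\ast_{A_i}$ is already lower semicontinuous. Here I would exploit the compactly $\cc$-generated hypothesis: each $\dom\delta^\ast_{A_i}$ is the cone generated by the compact set $\sigmac(\dom\delta^\ast_{A_i})$. Therefore any $x\in\conv\{\dom\delta^\ast_{A_i}\}$ admits a representation $x=\sum_i\alpha_i x_i$ with $x_i\in\sigmac(\dom\delta^\ast_{A_i})$, $\alpha_i\ge0$, chosen from the compact sets $\sigmac(\dom\delta^\ast_{A_i})$. A standard compactness/selection argument — taking a minimising sequence of such representations and extracting a convergent subsequence inside the compact $\sigmac$-slices — shows that the infimum defining $\conv\{\delta^\ast_{A_1},\dots,\delta^\ast_{A_n}\}(x)$ is attained, which both proves closedness of the convex hull (so the outer closure is redundant) and yields the explicit representation with the $\alpha_i$ and $x_i$ asserted at the end of the lemma.

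Finally, from the attained representation $x=\sum_i\alpha_i x_i$ with $x_i\in\sigmac(\dom\delta^\ast_{A_i})$ and $\alpha_i\ge0$, the normalisation step is to observe that if in addition $x\in\sigmac(\dom\delta^\ast_A)$, so that $x^\cc=1$, then $1=x^\cc=\sum_i\alpha_i x_i^\cc=\sum_i\alpha_i$, giving the convex-combination condition $\sum\alpha_i=1$. The equality $\dom\delta^\ast_A=\conv\{\dom\delta^\ast_{A_1},\dots,\dom\delta^\ast_{A_n}\}$ then follows from what has been established, and compact $\cc$-generation of $\dom\delta^\ast_A$ is inherited: its slice $\sigmac(\dom\delta^\ast_A)$ is the image of the compact set $\{(\alpha,x_1,\dots,x_n):\alpha_i\ge0,\sum\alpha_i=1,x_i\in\sigmac(\dom\delta^\ast_{A_i})\}$ under the continuous map $(\alpha,x_1,\dots,x_n)\mapsto\sum\alpha_i x_i$, hence compact, and it clearly generates the cone. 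Non-emptiness of $\sigmac(\dom\delta^\ast_A)$ is guaranteed by $A\neq\emptyset$ together with a standard Hahn–Banach/support argument producing at least one $x\in\dom\delta^\ast_A$ with $x^\cc>0$, which can then be rescaled.
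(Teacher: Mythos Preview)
The paper does not prove this lemma; it is quoted verbatim from \cite[Lemma~A.1]{Roux_Zastawniak2011} and used as a black box. So there is no in-paper proof to compare against, and your outline is a genuine attempt to supply one. The overall strategy---reduce to the conjugate-of-sum/infimal-convolution identity, identify infimal convolution with convex hull via positive homogeneity, and then use compactness of the $\sigmac$-slices---is sound and is essentially how the cited lemma is proved.

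There is, however, a gap in the step where you write that attainment of the infimum at each fixed $x$ ``proves closedness of the convex hull (so the outer closure is redundant)''. Pointwise attainment does \emph{not} by itself imply lower semicontinuity of the value function. What you need is to run the same compactness argument along a sequence: given $y_k\to y$ with $g(y_k)$ bounded, choose (near-)optimal representations $(\alpha^{(k)},x_1^{(k)},\ldots,x_n^{(k)})$; the constraint $\sum_i\alpha_i^{(k)}=y_k^{\,\cc}$ keeps the $\alpha^{(k)}$ bounded, and the $x_i^{(k)}$ lie in the compact sets $\sigmac(\dom\delta^\ast_{A_i})$, so a subsequence converges; lower semicontinuity of each $\delta^\ast_{A_i}$ (together with the uniform lower bound on the compact slice, to handle indices where $\alpha_i^{(k)}\to 0$) then yields $\liminf g(y_k)\ge g(y)$. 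This is the argument that removes the closure; your sketch has the right ingredients but draws the wrong inference from them.

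A smaller point: the Hahn--Banach appeal at the end is unnecessary. Since $A\subseteq A_i$ gives $\delta^\ast_A\le\delta^\ast_{A_i}$ and hence $\dom\delta^\ast_A\supseteq\dom\delta^\ast_{A_i}$, non-emptiness of $\sigmac(\dom\delta^\ast_A)$ follows immediately from the hypothesis that each $\sigmac(\dom\delta^\ast_{A_i})$ is non-empty.
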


\subsection{Mixed stopping times}\label{Sect:mixed-stop-times}

A \emph{mixed} (or \emph{randomised}) \emph{stopping time} is a non-negative adapted process $\chi=(\chi_t)_{t=0}^T$ with values in $[0,1]$ such that
\[
\sum_{t=0}^T\chi_t=1.
\]
The family of mixed stopping times will be denoted by~$\mathcal{X}$.

For any $\chi\in\mathcal{X}$ we put
\begin{align}\label{eq:2013-07-13-chi-star}
\chi^\ast_t&:=\sum_{s=t}^T\chi_s\text{ for }t=0,\ldots,T,
&
\chi^\ast_{T+1}&:=0.
\end{align}%
Moreover, for any adapted process $X$ and for any $\chi\in\mathcal{X}$ we put
\begin{align}\label{eq:20130726-X-chi-star}
X^{\chi\ast}_t&:=\sum_{s=t}^T\chi_s X_s\text{ for }t=0,\ldots,T,
&
X^{\chi\ast}_{T+1}&:=0.
\end{align}
We also define \emph{$X$ evaluated at $\chi$} by
\begin{align}\label{eq:20130726-X-stopped-at-chi}
X_\chi:=\sum_{s=0}^T\chi_s X_s.
\end{align}
With each ordinary stopping time $\tau\in\mathcal{T}$ we associate the mixed stopping time $\chi^\tau\in\mathcal{X}$ defined as
\begin{equation}\label{eq:20130812-chi-tau}
\chi_{t}^{\tau}:=\left\{
\begin{array}
[c]{cc}%
1 & \text{on }\{\tau=t\}\\
0 & \text{on }\{\tau\neq t\}
\end{array}
\right.  \quad\text{for each }t=0,\ldots,T.
\end{equation}

\subsection{Proofs and technical results}

\subsubsection{Deferred solvency}\label{Sect:Appendix:defer-solv}

\begin{proof}[\textbf{\upshape Proof of Proposition~\ref{prop:constr:ultimate-solvent}.}]
Equality (\ref{eq:QT=KT}) is obvious. By the definition of the deferred solvency cones, for any $t=0,\ldots,T-1$ the following conditions are equivalent: $z\in\mathcal{Q}_{t+1}\cap\mathcal{L}_t+\mathcal{K}_t$ if and only if there is a $y_{t+1}\in\mathcal{L}_t$ such that
\begin{align*}
y_{t+1}&\in\mathcal{Q}_{t+1},& z-y_{t+1}&\in\mathcal{K}_t,
\end{align*}
if and only if there is a sequence $y_{t+1},\ldots,y_{T+1}$ such that $y_s\in\mathcal{L}_{s-1}$ for each $s=t+1,\ldots,T$ and
\begin{align*}
z-y_{t+1}&\in\mathcal{K}_t,&
y_s-y_{s+1}&\in\mathcal{K}_s\mathrm{~for~all~}s=t+1\ldots,T,&
y_{T+1}&=0,
\end{align*}
if and only if $z\in\mathcal{Q}_t$. This proves~(\ref{eq:Qt-recursive}).
\end{proof}

\begin{proof}[\textbf{\upshape Proof of Proposition~\ref{prop:mart-ito-Q}.}]
In view of \eqref{eq:Qt-recursive}, we have $\mathcal{K}_t\subseteq\mathcal{Q}_t$, so $\mathcal{Q}^*_t\subseteq\mathcal{K}^*_t$, and (\ref{eq:2012-10-03:Qastt-not-empty}) implies~\eqref{eq:th:2012-10-03:ftap}.

Conversely, suppose that $S$ is an $\mathbb{R}^d$-valued $\mathbb{Q}$-martingale that satisfies~\eqref{eq:th:2012-10-03:ftap}. To show that it satisfies (\ref{eq:2012-10-03:Qastt-not-empty}) we proceed by backward induction. By \eqref{eq:QT=KT}, we have $S_T\in\mathcal{K}_T^\ast\setminus\{0\}=\mathcal{Q}_T^\ast\setminus\{0\}$. For any $t=0,\ldots,T-1$ suppose that $S_{t+1}\in\mathcal{Q}_{t+1}^\ast\setminus\{0\}$. As $S$ is a $\mathbb{Q}$-martingale, we have for all $\mu\in\Omega_t$ that
 \[
  S^\mu_t=\mathbb{E}_\mathbb{Q}(S_{t+1}\,|\,\mu)
  \in \conv\{\mathcal{Q}_{t+1}^{\nu\ast}\st \nu\in\successors\mu\}.
 \]
For every $t=0,\ldots,T-1$ and $\mu\in\Omega_t$, observe from \eqref{eq:2013-07-12-Q-mu-t} that
\[
 \mathcal{Q}^{\mu\ast}_t = \left[\bigcap_{\nu\in\successors\mu}\mathcal{Q}^\nu_{t+1}+\mathcal{K}_t^\mu\right]^\ast.
\]
Successive application of Corollaries 16.4.2 and 16.5.2 in \citet{rockafellar1996} then gives
\begin{equation} \label{eq:2012-10-03:Qast-ito-domWKt}
 \mathcal{Q}^{\mu\ast}_t = \left[\bigcap_{\nu\in\successors\mu}\mathcal{Q}^\nu_{t+1}\right]^\ast\cap\mathcal{K}^{\mu\ast}_t = \conv\{\mathcal{Q}^{\nu\ast}_{t+1}\st \nu\in\successors\mu\}\cap\mathcal{K}^{\mu\ast}_t.
\end{equation}
Since $S^\mu_t\in\mathcal{K}^{\mu\ast}_t\setminus\{0\}$ by \eqref{eq:th:2012-10-03:ftap}, it follows that $S^\mu_t\in\mathcal{Q}^{\mu\ast}_t\setminus\{0\}$, which concludes the inductive step.
\end{proof}

\subsubsection{Seller's case}\label{Sect:Appendix:technical:seller}

\begin{proof}[\textbf{\upshape Proof of Proposition~\ref{prop:am:seller:immediate-ultimate}.}]
  We show first that for any $y\in\Phi^\mathrm{ad}(\xi)$ there exists $Y\in\Phi^\mathrm{ag}(\xi)$ such that $Y_0=y_0$.
  If $y\in\Phi^\mathrm{ad}(\xi)$, then for each $t=0,\ldots,T-1$ we have $y_t-y_{t+1}\in\mathcal{Q}_t$, i.e.\ there exists a liquidation strategy $z^t_{t+1},\ldots,z^t_{T+1}$ starting from $y_t-y_{t+1}$ at time~$t$.
  We also put $z^T_{T+1}:=0$ for notational convenience.
  Moreover, for each $t=0,\ldots,T$ we have $y_t-\xi_t\in\mathcal{Q}_t$, i.e.\ there exists a liquidation strategy $x^t_{t+1},\ldots,x^t_{T+1}$ starting from $y_t-\xi_t$ at time~$t$. For each $\chi\in\mathcal{X}$ define
\begin{align*}
 Y^\chi_0 &:=y_0,\\
 Y^\chi_t &:= \chi^\ast_ty_t
 + \sum_{s=0}^{t-1} \chi^\ast_{s+1} z^s_t
 + \sum_{s=0}^{t-1} \chi_sx^s_t \text{ for } t=1,\ldots,T+1,
\end{align*}
where $\chi^\ast$ is defined by~(\ref{eq:2013-07-13-chi-star}).
The process $Y^\chi$ belongs to~$\Phi$ and satisfies the non-anticipation condition~(\ref{eq:non-anticipate}). Moreover, for each $t=0,\ldots,T$
\begin{align*}
 Y^\chi_t-\chi_t\xi_t-Y^\chi_{t+1}
&= \chi^\ast_ty_t + \sum_{s=0}^{t-1} \chi^\ast_{s+1} z^s_t
+ \sum_{s=0}^{t-1} \chi_sx^s_t - \chi_t\xi_t\\
&\quad\quad\quad\quad - \chi^\ast_{t+1}y_{t+1}
- \sum_{s=0}^{t} \chi^\ast_{s+1} z^s_{t+1}
 - \sum_{s=0}^t \chi_sx^s_{t+1} \\
&= \chi^\ast_{t+1}(y_t-y_{t+1}-z^t_{t+1}) +\chi_t(y_t-\xi_t-x^t_{t+1})\\
&\quad\quad\quad\quad + \sum_{s=0}^{t-1} \chi^\ast_{s+1}(z^s_t - z^s_{t+1})
+ \sum_{s=0}^{t-1} \chi_s(x^s_t - x^s_{t+1}) \\
&\in \chi^\ast_{t+1}\mathcal{K}_t +\chi_t\mathcal{K}_t
+ \sum_{s=0}^{t-1} \chi^\ast_{s+1}\mathcal{K}_t
+ \sum_{s=0}^{t-1} \chi_s\mathcal{K}_t
 \subseteq \mathcal{K}_t
\end{align*}
because $\chi^\ast_t=\chi_t+\chi^\ast_{t+1}$ and $\mathcal{K}_t$ is a convex cone. Hence $Y$  satisfies (\ref{eq:seller-self-fin-superhedge}) in addition to~(\ref{eq:non-anticipate}), and so $Y\in\Phi^\mathrm{ag}(\xi)$.

Conversely, fix any $Y\in\Phi^\mathrm{ag}(\xi)$ and put $y:=Y^{\chi^T}\in\Phi$, where $\chi^T$ is defined by~(\ref{eq:20130812-chi-tau}). Then for all {$s=0,\ldots,T-1$} we have {$\chi^T_s=0$ and}
\[
 y_{s} - y_{{s}+1} = Y_{s}^{\chi^T} - Y^{\chi^T}_{{s}+1} = Y_{s}^{\chi^T} - \chi^T_{s}\xi_{s} - Y^{\chi^T}_{{s}+1} \in \mathcal{K}_{s}\subseteq\mathcal{Q}_{s}.
\]
Fix any $t=0,\ldots,T$. Then $\chi_{s}^{T}=\chi_{s}^{t}=0$ for each
$s=0,\ldots,t-1$, and the non-anticipation property~(\ref{eq:non-anticipate}) of $Y$ gives
$y_{t}=Y_{t}^{\chi^{T}}=Y_{t}^{\chi^{t}}$. Since $\chi_{t}^{t}=1$,
it means that
\begin{equation}\label{eq:2012-12-29_2}
y_{t}-\xi_{t}-Y_{t+1}^{\chi^{t}}
=Y_{t}^{\chi^{t}}-\chi_{t}^{t}\xi_{t}-Y_{t+1}^{\chi^{t}}
\in\mathcal{K}_{t}\subseteq\mathcal{Q}_{t}.
\end{equation}
Moreover, for each $s=t+1,\ldots,T$ we have $\chi_{s}^{t}=0$, and
so
\begin{equation}\label{eq:2012-12-29_1}
Y_{s}^{\chi^{t}}-Y_{s+1}^{\chi^{t}}
=Y_{s}^{\chi^{t}}-\chi_{s}^{t}\xi_{s}-Y_{s+1}^{\chi^{t}}
\in\mathcal{K}_{s}\subseteq\mathcal{Q}_{s}.
\end{equation}
We verify by backward induction that $Y_{s+1}^{\chi^{t}}\in\mathcal{Q}_{s}$
for each $s=t,\ldots,T$. Clearly, $Y_{T+1}^{\chi^{t}}=0\in\mathcal{Q}_{T}$.
Now suppose that $Y_{s+1}^{\chi^{t}}\in\mathcal{Q}_{s}$ for some
$s=t+1,\ldots,T$. From (\ref{eq:2012-12-29_1}) we can see that $Y_{s}^{\chi^{t}}=(Y_{s}^{\chi^{t}}-Y_{s+1}^{\chi^{t}})+Y_{s+1}^{\chi^{t}}\in\mathcal{Q}_{s}+\mathcal{Q}_{s}=\mathcal{Q}_{s}$.
Because $Y^{\chi^{t}}$ is predictable, we have $Y_{s}^{\chi^{t}}\in\mathcal{Q}_{s}\cap\mathcal{L}_{s-1}\subseteq\mathcal{Q}_{s-1}$
by~(\ref{eq:Qt-recursive}), completing the backward induction argument.
In particular, this means that $Y_{t+1}^{\chi^{t}}\in\mathcal{Q}_{t}$.
Together with (\ref{eq:2012-12-29_2}) it gives $y_{t}-\xi_{t}\in\mathcal{Q}_{t}$
for any $t=0,\ldots,T$. As a result, we have constructed $y\in\Phi^\mathrm{ad}(\xi)$ such that $y_0=Y_0$.
\end{proof}

\begin{proof}[\textbf{\upshape Proof of Theorem~\ref{prop:seller:Zau0=initial-endowments}.}]
 Suppose that $x\in\mathcal{Z}^\mathrm{ad}_0$. We construct a sequence $y=(y_t)_{t=0}^{T+1}$ of random variables by induction. First take $y_0=x$. Now suppose that we have already constructed $y_t\in\mathcal{Z}^\mathrm{ad}_t$ such that $y_t\in\mathcal{L}_{(t-1)\vee 0}$ for some $t=0,\ldots,T-1$. From \eqref{eq:seller:mathcal-Zt} we obtain $y_t\in\mathcal{U}^\mathrm{ad}_t$, whence
 \[
 y_t-\xi_t\in\mathcal{Q}_t
 \]
 by \eqref{eq:seller:mathcal-Ut}. We also obtain $y_t\in\mathcal{V}^\mathrm{ad}_t$, and by~(\ref{eq:seller:mathcal-Vt}) there exists a random variable $y_{t+1}\in\mathcal{W}^\mathrm{ad}_t$ such that $y_t-y_{t+1}\in\mathcal{Q}_t$. From \eqref{eq:seller:mathcal-Wt} we have $y_{t+1}\in\mathcal{Z}^\mathrm{ad}_{t+1}\cap\mathcal{L}_t$, which concludes the inductive step. Finally, we put $y_{T+1}:=0$. It follows that $y\in\Phi^\mathrm{ad}(\xi)$ with $y_0=x$. By Proposition~\ref{prop:am:seller:immediate-ultimate}, a strategy $Y\in\Phi^\mathrm{ag}(\xi)$ can be constructed such that $Y_0=y_0=x$.

 Suppose now that $Y\in\Phi^\mathrm{ag}(\xi)$. By Proposition~\ref{prop:am:seller:immediate-ultimate}, there is a $y\in\Phi^\mathrm{ad}(\xi)$ such that $Y_0=y_0$. Clearly,
 \begin{equation}\label{eq:seller:hedging:Ut}
 y_t\in\xi_t+\mathcal{Q}_t=\mathcal{U}^\mathrm{ad}_t
 \end{equation}
 for all $t=0,\ldots,T$, and in particular $y_T\in\mathcal{Z}^\mathrm{ad}_T=\mathcal{U}^\mathrm{ad}_T$. We now show by backward induction that $y_t\in\mathcal{Z}^\mathrm{ad}_t$ for all~$t=0,\ldots,T$. Suppose that $y_{t+1}\in\mathcal{Z}^\mathrm{ad}_{t+1}$ for some $t=0,\ldots,T-1$. Since $y_{t+1}\in\mathcal{L}_t$, this means by \eqref{eq:seller:mathcal-Wt} that $y_{t+1}\in\mathcal{W}^\mathrm{ad}_t$. The condition $y_t-y_{t+1}\in\mathcal{Q}_t$ implies that $y_t\in\mathcal{W}^\mathrm{ad}_t+\mathcal{Q}_t=\mathcal{V}^\mathrm{ad}_t$. Property~\eqref{eq:seller:hedging:Ut} then gives $y_t\in\mathcal{Z}^\mathrm{ad}_t$ by \eqref{eq:seller:mathcal-Zt}, which completes the inductive step. We conclude that $Y_0=y_0\in\mathcal{Z}^\mathrm{ad}_0$.

We have proved~(\ref{eq:20130727-Zad0}). It follows that
\begin{align*}
 \pi^\mathrm{ag}_\cc(\xi)
 &=\inf\{x\in\mathbb{R}\st \exists Y\in\Phi^\mathrm{ag}(\xi): xe^\cc=Y_0\}\\
 &=\inf\{x\in\mathbb{R}\st xe^\cc\in\mathcal{Z}^\mathrm{ad}_0\}.
\end{align*}
We know that $\mathcal{Z}^\mathrm{ad}_0$ is polyhedral, hence closed, so $\{x\in\mathbb{R}\st xe^\cc\in\mathcal{Z}^\mathrm{ad}_0\}$ is also a closed set. It is non-empty and bounded below because $xe^\cc\in\mathcal{Z}^\mathrm{ad}_0$ for any $x\in\mathbb{R}$ large enough, and $xe^\cc\notin\mathcal{Z}^\mathrm{ad}_0$ for any $x\in\mathbb{R}$ small enough. As a result, the infimum is attained. It means, in particular, that $\pi^\mathrm{ag}_\cc(\xi)e^\cc\in\mathcal{Z}^\mathrm{ad}_0$, so we know that a strategy $Y\in\Phi^\mathrm{ag}(\xi)$ can be constructed such that $\pi^\mathrm{ag}_\cc(\xi)e^\cc=Y_0$.
 \end{proof}

The following result is similar to Lemma~5.5 in \citep{Roux_Zastawniak2011}, but with the solvency cones~$\mathcal{K}_t$ replaced by the deferred solvency cones~$\mathcal{Q}_t$.

\begin{proposition}\label{prop:seller:dual} {\ }
\begin{enumerate}[$(1)$]
\item \label{item:prop:seller:dual:0} For each $t=0,\ldots,T$ the set $\mathcal{Q}^\ast_t$ is compactly $\cc$-generated.
\item \label{item:prop:seller:dual:1} For all $t=0,\ldots,T$ and $y\in\mathcal{L}_t$ we have
 \begin{align}
   U^\mathrm{ad}_t(y) &=
   \begin{cases}
    -y\cdot\xi_t & \text{on } \{y\in\mathcal{Q}^\ast_t\},\\
    \infty & \text{on } \{y\notin\mathcal{Q}^\ast_t\},
   \end{cases}\label{eq:seller:Uaut}\\
   V^\mathrm{ad}_t(y) &=
   \begin{cases}
    W^\mathrm{ad}_t(y) & \text{on } \{y\in\mathcal{Q}^\ast_t\},\\
    \infty & \text{on } \{y\notin\mathcal{Q}^\ast_t\}.
   \end{cases}\label{eq:seller:Vaut}
   \end{align}

   \item \label{item:prop:seller:dual:2} We have $\dom Z^\mathrm{ad}_t=\mathcal{Q}^\ast_t$ for all $t=0,\ldots,T$. For all $t=0,\ldots,T-1$ and $\mu\in\Omega_t$ we have
   \begin{equation}
      Z^\mathrm{ad\mu}_t = \conv\{U^\mathrm{ad\mu}_t,V^\mathrm{ad\mu}_t\},\label{eq:seller:Zaut}
    \end{equation}
    and for each $y\in\sigmac(\mathcal{Q}^{\mu\ast}_t)$ there exist $\lambda\in[0,1]$, $x\in\sigmac(\dom V^\mathrm{ad\mu}_t)$ and ${s}\in\sigmac(\mathcal{Q}^{\mu\ast}_t)$ such that
    \begin{align*}
      Z^\mathrm{ad\mu}_t(y) &= \lambda U^\mathrm{ad\mu}_t({s}) + (1-\lambda)V^\mathrm{ad\mu}_t(x),&
      y &= \lambda {s} + (1-\lambda)x.
    \end{align*}

    \item \label{item:prop:seller:dual:3}For every $t=0,\ldots,T-1$ and $\mu\in\Omega_t$ we have
   \[
      W^\mathrm{ad\mu}_t = \conv\{Z^\mathrm{ad\nu}_{t+1}
      \st \nu\in\successors\mu\},
    \]
    and for each $x\in\sigmac(\dom W^\mathrm{ad\mu}_t)$ there exist $q^\nu\in[0,1]$ and $y^\nu\in\sigmac(\mathcal{Q}^{\nu\ast}_{t+1})$ for all $\nu\in\successors\mu$ such that
    \begin{align*}
      W^\mathrm{ad\mu}_t(x) &= \sum_{\nu\in\successors\mu}q^\nu Z^\mathrm{ad\nu}_{t+1}(y^\nu),& x &= \sum_{\nu\in\successors\mu}q^\nu y^\nu, & 1 &= \sum_{\nu\in\successors\mu}q^\nu.
    \end{align*}
\end{enumerate}
\end{proposition}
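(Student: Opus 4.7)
The plan is a backward induction on $t$ that establishes all four assertions simultaneously, with the ordering inside each inductive step being (1), then (2) for $U^\mathrm{ad}_t$, then (4), then (2) for $V^\mathrm{ad}_t$, and finally (3). The workhorse for both (4) and (3) will be Lemma~\ref{lem:Roux_Zastawniak2011}, whose hypotheses on compact $\cc$-generation of the relevant effective domains are supplied, at each stage, by (1) and (2) of the current or previous level.

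At the base case $t=T$ the four sets all coincide with $\xi_T+\mathcal{Q}_T$, so (1) reduces to the fact (noted after Theorem~\ref{th:2012-10-03:ftap}) that $\mathcal{K}^\ast_T=\mathcal{Q}^\ast_T$ is compactly $\cc$-generated, and (2) falls out of a one-line support-function computation: $U^\mathrm{ad}_T(y)=-y\cdot\xi_T+\sup_{z\in-\mathcal{Q}_T}y\cdot z$, where by the definition of the polar the second term equals $0$ on $\mathcal{Q}^\ast_T$ (realised at $z=0\in-\mathcal{Q}_T$) and $+\infty$ off $\mathcal{Q}^\ast_T$ (by homogeneity of any $z'\in\mathcal{Q}_T$ on which $y\cdot z'<0$). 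Claims (3) and (4) are trivial or vacuous at $t=T$ since all four sets agree.

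For the inductive step, (1) at $t$ will come from combining~\eqref{eq:2012-10-03:Qast-ito-domWKt}, namely $\mathcal{Q}^{\mu\ast}_t=\conv\{\mathcal{Q}^{\nu\ast}_{t+1}\st\nu\in\successors\mu\}\cap\mathcal{K}^{\mu\ast}_t$, with the inductive hypothesis for (1) at $t+1$, compact $\cc$-generation of $\mathcal{K}^{\mu\ast}_t$, and Lemma~\ref{lem:2012-10-03:sigmai-intersect} (non-triviality of the intersection using $\mathcal{P}_\cc\neq\emptyset$). Then (2) for $U^\mathrm{ad}_t$ repeats the support-function computation of the base case. Next, Lemma~\ref{lem:Roux_Zastawniak2011} applied to $\mathcal{W}^{\mathrm{ad}\mu}_t=\bigcap_{\nu\in\successors\mu}\mathcal{Z}^{\mathrm{ad}\nu}_{t+1}$ will deliver (4) at $t$: its hypothesis is met because the inductive (3) at $t+1$ identifies each $\dom Z^{\mathrm{ad}\nu}_{t+1}$ with $\mathcal{Q}^{\nu\ast}_{t+1}$, which is compactly $\cc$-generated by (1) at $t+1$. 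The same Minkowski-sum reasoning as for $U^\mathrm{ad}_t$, applied to $-\mathcal{V}^\mathrm{ad}_t=-\mathcal{W}^\mathrm{ad}_t+(-\mathcal{Q}_t)$, then yields the expression for $V^\mathrm{ad}_t$ in (2), once one observes from (4) and~\eqref{eq:2012-10-03:Qast-ito-domWKt} that $\mathcal{Q}^{\mu\ast}_t\subseteq\conv\{\mathcal{Q}^{\nu\ast}_{t+1}\st\nu\in\successors\mu\}=\dom W^{\mathrm{ad}\mu}_t$, so $W^\mathrm{ad}_t$ is finite on $\mathcal{Q}^\ast_t$. Finally, (3) at $t$ is produced by a second application of Lemma~\ref{lem:Roux_Zastawniak2011} to $\mathcal{Z}^\mathrm{ad}_t=\mathcal{U}^\mathrm{ad}_t\cap\mathcal{V}^\mathrm{ad}_t$, whose hypothesis is now supplied by (2) at $t$ (both $\dom U^\mathrm{ad}_t$ and $\dom V^\mathrm{ad}_t$ equal $\mathcal{Q}^\ast_t$) together with (1) at $t$; the conclusion also gives $\dom Z^\mathrm{ad}_t=\mathcal{Q}^\ast_t$ as a byproduct.

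The main obstacle will be the choreography of the induction rather than any single estimate: (3) at $t$ needs $\dom V^\mathrm{ad}_t$ compactly $\cc$-generated, which through (2) relies on finiteness of $W^\mathrm{ad}_t$ on $\mathcal{Q}^\ast_t$ that only (4) at $t$ provides, while (4) at $t$ in turn consumes (3) from level $t+1$. Arranging the five sub-steps in the order indicated above keeps every invocation of Lemma~\ref{lem:Roux_Zastawniak2011} legitimate, and beyond this scheduling the arguments reduce to routine manipulations of support functions, polars, and Minkowski sums.
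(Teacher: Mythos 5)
Your proposal is correct and follows essentially the same route as the paper: claim (2) via additivity of support functions over Minkowski sums, and claims (1), (3), (4) by backward induction powered by Lemma~\ref{lem:Roux_Zastawniak2011}, Lemma~\ref{lem:2012-10-03:sigmai-intersect}, identity \eqref{eq:2012-10-03:Qast-ito-domWKt} and the no-arbitrage condition. The only divergence is your placing (1) before (4) within each inductive step, which obliges you to check directly that $\conv\{\mathcal{Q}^{\nu\ast}_{t+1}\st\nu\in\successors\mu\}$ is compactly $\cc$-generated (an easy but unstated fact), whereas the paper proves (4) first and reads this off as $\dom W^{\mathrm{ad}\mu}_t$ from the conclusion of Lemma~\ref{lem:Roux_Zastawniak2011}.
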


\begin{proof}
We first consider claim~\eqref{item:prop:seller:dual:1}.
As $\mathcal{Q}_t$ is a cone,
\[
 \delta^\ast_{-\mathcal{Q}_t}(x)
  = \begin{cases}
     0 & \text{if } x\in\mathcal{Q}^\ast_t,\\
     \infty & \text{otherwise}.
    \end{cases}
\]
 Note in particular that $\dom\delta^\ast_{-\mathcal{Q}_t}=\mathcal{Q}^\ast_t$.
For all $t$ and $y\in\mathcal{L}_t$ we have \citep[p.~113]{rockafellar1996}
   \[
    U^\mathrm{ad}_t(y) = \delta^\ast_{-\xi_t-\mathcal{Q}_t}(y) = \delta^\ast_{\{-\xi_t\}}(y) + \delta^\ast_{-\mathcal{Q}_t}(y),
   \]
    which leads to~\eqref{eq:seller:Uaut}. Equation \eqref{eq:seller:Vaut} follows similarly from
   \[
    V^\mathrm{ad}_t = \delta^\ast_{-\mathcal{W}^\mathrm{ad}_t-\mathcal{Q}_t} = \delta^\ast_{-\mathcal{W}^\mathrm{ad}_t} + \delta^\ast_{-\mathcal{Q}_t} = W^\mathrm{ad}_t +\delta^\ast_{-\mathcal{Q}_t}. \]

    Claims \eqref{item:prop:seller:dual:0}, \eqref{item:prop:seller:dual:2} and \eqref{item:prop:seller:dual:3} can be obtained by backward induction. We clearly have
    \[
     \dom Z^\mathrm{ad}_T = \dom U^\mathrm{ad}_T = \dom \delta^\ast_{-\mathcal{Q}_T} = \mathcal{Q}^\ast_T,
    \]
    and this set is compactly $\cc$-generated because $\mathcal{Q}^\ast_T=\mathcal{K}^\ast_T$.

    Now fix any $t=0,\ldots,T-1$ and $\mu\in\Omega_t$, and suppose that $\dom Z^\mathrm{ad}_{t+1}=\mathcal{Q}^\ast_{t+1}$ and that this set is compactly $\cc$-generated. Since
    \[
     \bigcap_{\nu\in\successors\mu}\mathcal{Z}^{\mathrm{ad}\nu}_{t+1} = \mathcal{W}^{\mathrm{ad}\mu}_t\neq\emptyset,
    \]
    Lemma~\ref{lem:Roux_Zastawniak2011} can be applied to the sets $-\mathcal{Z}^{\mathrm{ad}\nu}_{t+1}$ for all $\nu\in\successors\mu$. This justifies claim \eqref{item:prop:seller:dual:3} for this~$t$ and also that
    \[
     \dom W^{\mathrm{ad}\mu}_t = \conv\left\{\dom Z^{\mathrm{ad}\nu}_{t+1}
     \st \nu\in\successors\mu\right\} = \conv\left\{\mathcal{Q}^{\nu\ast}_{t+1}
     \st \nu\in\successors\mu\right\}
    \]
    is compactly $\cc$-generated.

By Theorem~\ref{th:2012-10-03:ftap} and Proposition~\ref{prop:mart-ito-Q}, the lack of arbitrage opportunities implies that there is a pair $(\mathbb{Q},S)\in\mathcal{P}$ such that $S_{t}^{\mu}\in\mathcal{Q}_{t}^{\mu*}\setminus\{0\}\subseteq\mathcal{K}_{t}^{\mu*}\setminus\{0\}$ and
$S_{t+1}^{\nu}\in\mathcal{Q}_{t+1}^{\nu*}$
for each $\nu\in\successors\mu$. Since $S$ is a martingale under $\mathbb{Q}$, it follows that \[
S_{t}^{\mu}=\mathbb{E}_\mathbb{Q}(S_{t+1}\,|\,\mu)
\in\conv\{\mathcal{Q}_{t+1}^{\nu*}\st \nu\in\successors\mu\},
\]
and so
$S_{t}^{\mu}\in\conv\{\mathcal{Q}_{t+1}^{\nu*}\st \nu\in\successors\mu\}
\cap\mathcal{K}_{t}^{\mu*}\setminus\{0\}\ne\emptyset$.
As $\conv\left\{\mathcal{Q}^{\nu\ast}_{t+1}\st \nu\in\successors\mu\right\}$ and $\mathcal{K}^{\mu\ast}_t$ are compactly $\cc$-generated, it follows from Lemma~\ref{lem:2012-10-03:sigmai-intersect} and \eqref{eq:2012-10-03:Qast-ito-domWKt} that $\mathcal{Q}^{\mu\ast}_t=\conv\left\{\mathcal{Q}^{\nu\ast}_{t+1}
\st \nu\in\successors\mu\right\}\cap \mathcal{K}^{\mu\ast}_t$ is compactly $\cc$-generated, which justifies claim \eqref{item:prop:seller:dual:0} for this value of~$t$.

    In view of \eqref{eq:seller:Uaut} and \eqref{eq:seller:Vaut}, Lemma \ref{lem:2012-10-03:sigmai-intersect} consequently shows that
    \[
     \dom V^{\mathrm{ad}\mu}_t
     = \dom W^{\mathrm{ad}\mu}_t \cap \dom U^{\mathrm{ad}\mu}_t
     = \dom W^{\mathrm{ad}\mu}_t \cap \mathcal{Q}^{\mu\ast}_t
    \]
    is also compactly $\cc$-generated. We may apply Lemma~\ref{lem:Roux_Zastawniak2011} to the sets $-\mathcal{U}^{\mathrm{ad}\mu}_t$ and $-\mathcal{V}^{\mathrm{ad}\mu}_t$ since
    \[
     \mathcal{U}^{\mathrm{ad}\mu}_t \cap \mathcal{V}^{\mathrm{ad}\mu}_t = \mathcal{Z}^{\mathrm{ad}\mu}_t \neq \emptyset.
    \]
    Claim \eqref{item:prop:seller:dual:2} for this value of $t$ follows upon observing that
    \[
     \dom Z^{\mathrm{ad}\mu}_t = \conv\{\dom U^{\mathrm{ad}\mu}_t, \dom V^{\mathrm{ad}\mu}_t\} = \conv\{\mathcal{Q}^{\mu\ast}_t, \dom W^{\mathrm{ad}\mu}_t \cap \mathcal{Q}^{\mu\ast}_t\} = \mathcal{Q}^{\mu\ast}_t.
    \]
This concludes the inductive step.
\end{proof}

\begin{proof}[\textbf{\upshape Proof of Proposition~\ref{Prop:20130727:pi-ag-dual}.}]
  By Proposition~\ref{prop:seller:dual},  $\dom Z^\mathrm{ad}_0=\mathcal{Q}^\ast_0$ is compactly $\cc$-generated. Since~$-Z^\mathrm{ad}_0$ is polyhedral, it is continuous on its effective domain and therefore attains a maximum on the non-empty compact set $\sigmac(\mathcal{Q}^\ast_0)$. From Theorem~\ref{prop:seller:Zau0=initial-endowments} it follows \cite[Theorem~13.1]{rockafellar1996} that%
\begin{align*}
 \pi^\mathrm{ag}_\cc(\xi)
 &= \min\{x\in\mathbb{R}\st xe^\cc\in\mathcal{Z}^\mathrm{ad}_0\} \\
 &= \min\{x\in\mathbb{R}\st xe^\cc\cdot s \ge -Z^\mathrm{ad}_0(s)\text{ for all }s\in\mathbb{R}^d\} \\
 &= \min\{x\in\mathbb{R}\st xe^\cc\cdot s \ge -Z^\mathrm{ad}_0(s)\text{ for all }s\in\mathcal{Q}^\ast_0\} \\
 &= \min\{x\in\mathbb{R}\st x \ge -Z^\mathrm{ad}_0(s)\text{ for all }s\in\sigmac(\mathcal{Q}^\ast_0)\} \\
 &= \max\{-Z^\mathrm{ad}_0(s)\st s\in\sigmac(\mathcal{Q}^\ast_0)\}\\
 &= \max\{-Z^\mathrm{ad}_0(s)\st s\in\sigmac(\mathbb{R}^d)\}.
\end{align*}%

 The following construction produces adapted processes $y_t$, $\lambda_t$, $x_t$ and $S_t$ for $t=0,\ldots,T$, and $q_t$ for $t=1,\ldots,T$.  We already know that the maximum of $-Z^\mathrm{ad}_0$ over the set $\sigmac(\mathbb{R}^d)$ is attained, i.e.\ there exists some $y_0\in\sigmac(\mathcal{Q}^\ast_0)$ such that
\begin{equation}\label{eq:20130210-Z0}
 \pi^\mathrm{ag}_\cc(\xi)=-Z^\mathrm{ad}_0(y_0) = \max\{-Z^\mathrm{ad}_0(s)\st s\in\sigmac(\mathbb{R}^d)\}.
\end{equation}
  For any $t=0,\ldots,T-1$, suppose that $y_t\in\sigmac(\mathcal{Q}^\ast_t)$ is given, and fix any $\mu\in\Omega_t$. Then by Proposition~\ref{prop:seller:dual}\eqref{item:prop:seller:dual:2}, there exist $\lambda^\mu_t\in[0,1]$, $x^\mu_t\in\sigmac(\dom V_t^{\mathrm{ad}\mu})$ and $S^\mu_t\in\sigmac(\mathcal{Q}^{\mu\ast}_t)$ such that
 \begin{align}
  Z^{\mathrm{ad}\mu}_t(y^\mu_t) &= \lambda^\mu_tU^{\mathrm{ad}\mu}_t(S^\mu_t) + (1-\lambda^\mu_t)V^{\mathrm{ad}\mu}_t(x^\mu_t),  \label{eq:seller:dual-opt:1} \\
  y^\mu_t &= \lambda^\mu_t S^\mu_t + (1-\lambda^\mu_t)x^\mu_t. \label{eq:seller:dual-opt:2}
 \end{align}
By \eqref{eq:seller:Vaut} and Proposition~\ref{prop:seller:dual}\eqref{item:prop:seller:dual:3}, there exist $q^\nu_{t+1}\in[0,1]$ and $y^\nu_{t+1}\in\sigmac(\mathcal{Q}^{\nu\ast}_{t+1})$ for all $\nu\in\successors\mu$ such that
    \begin{align}
      V^\mathrm{ad\mu}_t(x^\mu_t) = W^\mathrm{ad\mu}_t(x^\mu_t) &= \sum_{\nu\in\successors\mu}q^\nu_{t+1} Z^\mathrm{ad\nu}_{t+1}(y^\nu_{t+1}), \label{eq:seller:dual-opt:3}\\
      x^\mu_t &= \sum_{\nu\in\successors\mu}q^\nu_{t+1} y^\nu_{t+1},  \label{eq:seller:dual-opt:4}\\
      1 &= \sum_{\nu\in\successors\mu}q^\nu_{t+1}.\nonumber
    \end{align}
This completes the inductive step.
Also define for all $\mu\in\Omega_T$
 \begin{align*}
  \lambda^\mu_T &:= 1, & x^\mu_T &:= S^\mu_T := y^\mu_T.
 \end{align*}
 Then \eqref{eq:seller:dual-opt:1}, \eqref{eq:seller:dual-opt:2} are also satisfied when $t=T$.

 Define the probability measure $\mathbb{Q}$ on $\mathcal{F}_T=2^\Omega$ as
 \[
  \mathbb{Q}(\{\omega\}) := \prod_{t=1}^{T}q^{\omega\uparrow t}_{t} \text{ for all }\omega\in\Omega,
 \]
 where $\omega\uparrow t$ denotes the element of $\Omega_t$ that contains $\omega$. It then follows from \eqref{eq:seller:dual-opt:3}, \eqref{eq:seller:dual-opt:4} that for all $t=0,\ldots,T-1$
 \begin{align}
   V^\mathrm{ad}_t(x_t) &= \mathbb{E}_\mathbb{Q}(Z^\mathrm{ad}_{t+1}(y_{t+1})|\mathcal{F}_t),\label{eq:seller:dual-opt:8}\\
 x_t &= \mathbb{E}_\mathbb{Q}(y_{t+1}|\mathcal{F}_t).\label{eq:seller:dual-opt:9}
 \end{align}

The mixed stopping time $\chi$ is defined by setting $\chi_0:=\lambda_0$ and
\[
 \chi_t := \lambda_t\left(1-\sum_{s=0}^{t-1}\chi_s\right) \text{ for all }t=1,\ldots,T.
\]
It is straightforward to show by induction that $\chi_t\ge0$ for all $t$. Moreover, since $\lambda_T=1$, we have
\[
 \sum_{t=0}^T\chi_t=1.
\]
Observe also that
\begin{align*}
 \lambda_t\chi^\ast_t &= \chi_t, & (1-\lambda_t)\chi^\ast_t = \chi^\ast_t - \chi_t = \chi^\ast_{t+1}
\end{align*}
for all $t=0,\ldots,T$, where $\chi^\ast$ is defined by~(\ref{eq:2013-07-13-chi-star}). It then follows from \eqref{eq:seller:dual-opt:1}, \eqref{eq:seller:dual-opt:2} and \eqref{eq:seller:Uaut} that for all $t=0,\ldots,T$
\begin{align}
 \chi^\ast_tZ^\mathrm{ad}_t(y_t) &= -\chi_t\xi_t\cdot S_t + \chi^\ast_{t+1}V^\mathrm{ad}_t(x_t),\label{eq:seller:dual-opt:6}\\
 \chi^\ast_ty_t &= \chi_tS_t + \chi^\ast_{t+1}x_t.\label{eq:seller:dual-opt:7}
\end{align}

We now show by backward induction that for all $t=0,\ldots,T$
\begin{equation}
 \mathbb{E}_\mathbb{Q}(S^{\chi\ast}_{t+1}|\mathcal{F}_t) = \chi^\ast_{t+1}x_t.  \label{eq:seller:dual-opt:5}
\end{equation}
At time $T$ the result is trivial because $S^{\chi\ast}_{T+1} = \chi^\ast_{T+1}=0$. Suppose now that (\ref{eq:seller:dual-opt:5}) holds for some $t=1,\ldots,T$.
Then, by the tower property of conditional expectation,
\begin{align*}
 \mathbb{E}_\mathbb{Q}(S^{\chi\ast}_t|\mathcal{F}_{t-1})
 &= \mathbb{E}_\mathbb{Q}(\chi_tS_t + S^{\chi\ast}_{t+1}|\mathcal{F}_{t-1}) \\
 &= \mathbb{E}_\mathbb{Q}(\chi_tS_t + \mathbb{E}_\mathbb{Q}(S^{\chi\ast}_{t+1}|\mathcal{F}_t)|\mathcal{F}_{t-1}) \\
 &= \mathbb{E}_\mathbb{Q}(\chi_tS_t + \chi^\ast_{t+1}x_t|\mathcal{F}_{t-1})
\end{align*}
and, by \eqref{eq:seller:dual-opt:7}, the predictability of $\chi^\ast$, and \eqref{eq:seller:dual-opt:9},
\begin{align*}
 \mathbb{E}_\mathbb{Q}(S^{\chi\ast}_t|\mathcal{F}_{t-1}) = \mathbb{E}_\mathbb{Q}(\chi^\ast_ty_t|\mathcal{F}_{t-1}) = \chi^\ast_t\mathbb{E}_\mathbb{Q}(y_t|\mathcal{F}_{t-1}) = \chi^\ast_tx_{t-1}.
\end{align*}
This concludes the inductive step.

We also show by backward induction that for all $t=0,\ldots,T$
\begin{equation}\label{eq:seller:dual-opt:10}
 \chi^\ast_t Z^\mathrm{ad}_t(y_t) = -\mathbb{E}_\mathbb{Q}((\xi\cdot S)^{\chi\ast}_t|\mathcal{F}_t).
\end{equation}
At time~$T$
\[
 \chi^\ast_T Z^\mathrm{ad}_T(y_T) = \chi^\ast_T U^\mathrm{ad}_T(y_T) = -\chi_T\xi_T\cdot S_T = - (\xi\cdot S)^{\chi\ast}_T = -\mathbb{E}_\mathbb{Q}((\xi\cdot S)^{\chi\ast}_T|\mathcal{F}_T).
\]
Suppose now that (\ref{eq:seller:dual-opt:10}) holds for some $t=1,\ldots,T$.
Then by \eqref{eq:seller:dual-opt:6}, \eqref{eq:seller:dual-opt:8} and the tower property of conditional expectation, we have
\begin{align*}
 \chi^\ast_{t-1} Z^\mathrm{ad}_{t-1}(y_{t-1})
 &=-\chi_{t-1}\xi_{t-1}\cdot S_{t-1} + \chi^\ast_{t}V^\mathrm{ad}_{t-1}(x_{t-1})\\
 &=-\chi_{t-1}\xi_{t-1}\cdot S_{t-1} + \chi^\ast_{t}\mathbb{E}_\mathbb{Q}(Z^\mathrm{ad}_{t}(y_{t})|\mathcal{F}_{t-1}) \\
 &=-\chi_{t-1}\xi_{t-1}\cdot S_{t-1} + \mathbb{E}_\mathbb{Q}(\chi^\ast_{t}Z^\mathrm{ad}_{t}(y_{t})|\mathcal{F}_{t-1}) \\
 &=-\chi_{t-1}\xi_{t-1}\cdot S_{t-1} -\mathbb{E}_\mathbb{Q}(\mathbb{E}_\mathbb{Q}((\xi\cdot S)^{\chi\ast}_{t}|\mathcal{F}_{t})|\mathcal{F}_{t-1}) \\
 &=-\chi_{t-1}\xi_{t-1}\cdot S_{t-1} -\mathbb{E}_\mathbb{Q}((\xi\cdot S)^{\chi\ast}_{t}|\mathcal{F}_{t-1}) \\
 &=- \mathbb{E}_\mathbb{Q}((\xi\cdot S)^{\chi\ast}_{t-1}|\mathcal{F}_{t-1}).
\end{align*}
This concludes the inductive step.

Since $x_t\in\dom V_t^\mathrm{ad}\subseteq\mathcal{Q}^\ast_t$ for all $t$, property \eqref{eq:seller:dual-opt:5} implies that $\mathbb{E}_\mathbb{Q}(S^{\chi\ast}_{t+1}|\mathcal{F}_t)\in\mathcal{Q}^\ast_t$ for all $t=0,\ldots,T$. Moreover, by~\eqref{eq:seller:dual-opt:10},
\[
 -Z^\mathrm{ad}_0(y_0)=-\chi^\ast_0 Z^\mathrm{ad}_0(y_0) = \mathbb{E}_\mathbb{Q}((\xi\cdot S)^{\chi\ast}_0|\mathcal{F}_0) = \mathbb{E}_\mathbb{Q}((\xi\cdot S)\chi).
\]
From (\ref{eq:20130210-Z0}) we therefore have $\pi^\mathrm{ag}_\cc(\xi)=\mathbb{E}_\mathbb{Q}((\xi\cdot S)_\chi)$, as required.
\end{proof}

\begin{proof}[\textbf{\upshape Proof of Theorem~\ref{Thm:ask-price-representation}.}]
By Proposition \ref{Prop:20130727:pi-ag-dual}, a stopping time $\hat{\chi}\in\mathcal{X}$ and a pair $(\hat{\mathbb{Q}},\hat{S})\in\Pbaruc(\hat{\chi})$ can be constructed such that
\[
 \pi^\mathrm{ag}_\cc(\xi) =
 \mathbb{E}_{\hat{\mathbb{Q}}}((\xi\cdot \hat{S})_{\hat{\chi}}) \le \max_{\chi\in\mathcal{X}}\max_{(\mathbb{Q},S)\in\Pbaruc(\chi)}\mathbb{E}_\mathbb{Q}((\xi\cdot S)_\chi).
\]
To establish the reverse inequality we prove by backward induction that for any $y\in\Phi^\mathrm{ad}(\xi)$, $\chi\in\mathcal{X}$ and $(\mathbb{Q},S)\in\bar{\mathcal{P}}^\mathrm{d}(\chi)$
 \begin{equation}\label{eq:reverse-ineq-dual repr-seller}
  y_t \cdot \mathbb{E}_\mathbb{Q}(S^{\chi\ast}_t|\mathcal{F}_t) \ge \mathbb{E}_\mathbb{Q}((\xi\cdot S)^{\chi\ast}_t|\mathcal{F}_t)\text{ for all }t=0,\ldots,T.
 \end{equation}
When $t=T$,
 \[
  y^{}_T\cdot \mathbb{E}_\mathbb{Q}(S^{\chi\ast}_T|\mathcal{F}^{}_T) = \chi^{}_Ty^{}_T\cdot S^{}_T \ge \chi^{}_T\xi^{}_T\cdot S^{}_T = \mathbb{E}_\mathbb{Q}((\xi\cdot S)^{\chi\ast}_T|\mathcal{F}^{}_T)
 \]
 since $y_T-\xi_T\in\mathcal{Q}_T$ and $S_T\in\mathcal{Q}^\ast_T$.
 Now fix any $t=0,\ldots,T-1$, and suppose that
 \[
  y_{t+1} \cdot \mathbb{E}_\mathbb{Q}(S^{\chi\ast}_{t+1}|\mathcal{F}_{t+1}) \ge \mathbb{E}_\mathbb{Q}((\xi\cdot S)^{\chi\ast}_{t+1}|\mathcal{F}_{t+1}).
 \]
 Then, by the tower property of conditional expectation, and since $y_t-y_{t+1}\in\mathcal{Q}_t$ and $\mathbb{E}_\mathbb{Q}(S^{\chi\ast}_t|\mathcal{F}_t)\in\mathcal{Q}^\ast_t$, it follows that
 \begin{align*}
  y_t\cdot\mathbb{E}_\mathbb{Q}(S^{\chi\ast}_t|\mathcal{F}_t)
  &= \chi_ty_t\cdot S_t + y_t\cdot\mathbb{E}_\mathbb{Q}(S^{\chi\ast}_{t+1}|\mathcal{F}_t) \\
  &\ge \chi_t\xi_t\cdot S_t + y_{t+1}\cdot \mathbb{E}_\mathbb{Q}(S^{\chi\ast}_{t+1}|\mathcal{F}_t) \\
  &=\chi_t\xi_t\cdot S_t + \mathbb{E}_\mathbb{Q}( y_{t+1}\cdot \mathbb{E}_\mathbb{Q}(S^{\chi\ast}_{t+1}|\mathcal{F}_{t+1})|\mathcal{F}_t) \\
  &\ge\chi_t\xi_t\cdot S_t + \mathbb{E}_\mathbb{Q}( \mathbb{E}_\mathbb{Q}((\xi\cdot S)^{\chi\ast}_{t+1}|\mathcal{F}_{t+1})|\mathcal{F}_t) \\
  &=\chi_t\xi_t\cdot S_t + \mathbb{E}_\mathbb{Q}((\xi\cdot S)^{\chi\ast}_{t+1}|\mathcal{F}_t) = \mathbb{E}_\mathbb{Q}((\xi\cdot S)^{\chi\ast}_t|\mathcal{F}_t),
 \end{align*}
which proves~(\ref{eq:reverse-ineq-dual repr-seller}).
The construction in the proof of Theorem~\ref{prop:seller:Zau0=initial-endowments} with initial portfolio $\hat{y}_0=\pi^\mathrm{ag}_\cc(\xi)e^\cc\in\mathcal{Z}^\mathrm{ad}_0$ yields a strategy $\hat{y}\in\Phi^\mathrm{ad}(\xi)$. For any $\chi\in\mathcal{X}$ and $(\mathbb{Q},S)\in\Pbaruc(\chi)$ we have $\mathbb{E}_\mathbb{Q}(S^\cc_\chi)=1$, and therefore (\ref{eq:reverse-ineq-dual repr-seller}) with $t=0$ yields
\[
 \pi^\mathrm{ag}_\cc(\xi) = \hat{y}_0 \cdot \mathbb{E}_\mathbb{Q}(S_\chi)\ge \mathbb{E}_\mathbb{Q}((\xi\cdot S)_\chi).
\]
It follows that
\[
 \pi^\mathrm{ag}_\cc(\xi) \ge \max_{\chi\in\mathcal{X}}\max_{(\mathbb{Q},S)\in\Pbaruc(\chi)}\mathbb{E}_\mathbb{Q}((\xi\cdot S)_\chi).
\]
\end{proof}

\subsubsection{Buyer's case}\label{Sect:Appendix:technical:buyer}

 \begin{proof}[\textbf{\upshape Proof of Proposition~\ref{prop:2012-09-19:zt-closed}.}]
   As $\mathcal{Q}_t$ and $\mathcal{L}_t$ are polyhedral cones, they are closed and convex. We have $0^+\mathcal{Q}_t=\mathcal{Q}_t$ and $0^+\mathcal{L}_t=\mathcal{L}_t$. It follows that $0^+\mathcal{U}^\mathrm{bd}_t=\mathcal{Q}_t$ for all $t=0,\ldots,T$ \citep[Corollary~9.1.2]{rockafellar1996}.

   The set $\mathcal{Z}^\mathrm{bd}_T = \mathcal{U}^\mathrm{bd}_T=-\xi_T+\mathcal{Q}_T$ is clearly polyhedral with recession cone~$\mathcal{Q}_T$. For $t=0,\ldots,T-1$ we proceed by induction. Suppose that $\mathcal{Z}^\mathrm{bd}_{t+1}$ is polyhedral and its recession cone is $\mathcal{Q}_{t+1}$. Then $\mathcal{W}^\mathrm{bd}_t=\mathcal{Z}^\mathrm{bd}_{t+1}\cap\mathcal{L}_t$ is polyhedral and its recession cone is $\mathcal{Q}_{t+1}\cap\mathcal{L}_t$ \citep[Corollary~8.3.3]{rockafellar1996}. Being polyhedral, $\mathcal{W}^\mathrm{bd}_t$ is the convex hull of a finite set of points and directions, and its recession cone $\mathcal{Q}_{t+1}\cap\mathcal{L}_t$ is the convex hull of the origin and the directions in~$\mathcal{W}^\mathrm{bd}_t$.

   The set $\mathcal{V}^\mathrm{bd}_t=\mathcal{W}^\mathrm{bd}_t+\mathcal{Q}_t$ is polyhedral \citep[Corollary~19.3.2]{rockafellar1996} and hence it is the convex hull of a finite set of points and directions. Since the cone $\mathcal{Q}_t$ can be written as the convex hull of the origin and a finite number of directions, it is possible to write $\mathcal{V}^\mathrm{bd}_t$ as the convex hull of a finite set of points, all in $\mathcal{W}^\mathrm{bd}_t$, and a finite set of directions. These directions are exactly the directions in $\mathcal{W}^\mathrm{bd}_t$ and~$\mathcal{Q}_t$, i.e.~the directions in $\mathcal{Q}_{t+1}\cap\mathcal{L}_t$ and $\mathcal{Q}_t$. Thus the recession cone of $\mathcal{V}^\mathrm{bd}_t$ is
    \[
    0^+\mathcal{V}^\mathrm{bd}_t=\conv\{\mathcal{Q}_t,\mathcal{Q}_{t+1}\cap\mathcal{L}_t\}=\mathcal{Q}_t
    \]
    since $\mathcal{Q}_{t+1}\cap\mathcal{L}_t\subseteq\mathcal{Q}_{t}$ by~(\ref{eq:Qt-recursive}).
    This means that the set $\mathcal{Z}^\mathrm{bd}_t=\conv\{\mathcal{V}^\mathrm{bd}_t,\mathcal{U}^\mathrm{bd}_t\}$ is closed and its recession cone is $\mathcal{Q}_t$ \citep[Corollary~9.8.1]{rockafellar1996}. Moreover, since $\mathcal{V}^\mathrm{bd}_t$ and $\mathcal{U}^\mathrm{bd}_t$ are polyhedral, it follows that $\mathrm{cl}\mathcal{Z}^\mathrm{bd}_t=\mathrm{cl}(\conv\{\mathcal{V}^\mathrm{bd}_t,\mathcal{U}^\mathrm{bd}_t\})$ is polyhedral  \citep[Theorem~19.6]{rockafellar1996}, which means that $\mathcal{Z}^\mathrm{bd}_t=\mathrm{cl}\mathcal{Z}^\mathrm{bd}_t$ is polyhedral, concluding the inductive step.
 \end{proof}

\begin{proposition}\label{prop:2012-07-26:hedging-construct-converse}
 If $(y,\chi)\in\Phi^\mathrm{bg}(\xi)$, then for all $t=0,\ldots,T$
 \[
  y_t \in
  \begin{cases}
   \chi^\ast_t\mathcal{Z}^\mathrm{bd}_t & \text{on }\{\chi^\ast_t>0\},\\
   \mathcal{Q}_t & \text{on }\{\chi^\ast_t=0\}.
  \end{cases}
 \]
\end{proposition}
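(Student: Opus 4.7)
The plan is to prove the statement by backward induction on $t$, using at each step the single rebalancing condition $y_t+\chi_t\xi_t-y_{t+1}\in\mathcal{K}_t$ together with the structural fact from Proposition~\ref{prop:2012-09-19:zt-closed} that $\mathcal{Q}_t$ is the recession cone of $\mathcal{Z}^\mathrm{bd}_t$.

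For the base case $t=T$, since $y_{T+1}=0$, the rebalancing condition reads $y_T+\chi_T\xi_T\in\mathcal{K}_T\subseteq\mathcal{Q}_T$. On $\{\chi^\ast_T=0\}=\{\chi_T=0\}$ this immediately gives $y_T\in\mathcal{Q}_T$. On $\{\chi^\ast_T>0\}$, since $\mathcal{K}_T$ is a cone, dividing by $\chi^\ast_T=\chi_T$ yields $y_T/\chi^\ast_T\in-\xi_T+\mathcal{K}_T\subseteq-\xi_T+\mathcal{Q}_T=\mathcal{U}^\mathrm{bd}_T=\mathcal{Z}^\mathrm{bd}_T$.

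For the inductive step I would rewrite rebalancing as $y_t=-\chi_t\xi_t+y_{t+1}+k_t$ with $k_t\in\mathcal{K}_t$ and split into two cases. On $\{\chi^\ast_t=0\}$ both $\chi_t=0$ and $\chi^\ast_{t+1}=0$ pointwise, so the inductive hypothesis places $y_{t+1}$ in $\mathcal{Q}_{t+1}$ at each point where we need it; combined with $\mathcal{Q}_t=\mathcal{Q}_{t+1}\cap\mathcal{L}_t+\mathcal{K}_t$ from Proposition~\ref{prop:constr:ultimate-solvent} applied on the $\mathcal{F}_t$-atom, this gives $y_t\in\mathcal{Q}_t$. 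On $\{\chi^\ast_t>0\}$ I would divide through by $\chi^\ast_t$ and use $\chi^\ast_t=\chi_t+\chi^\ast_{t+1}$ to write, pointwise,
\[
 \frac{y_t}{\chi^\ast_t}
 = \frac{\chi_t}{\chi^\ast_t}(-\xi_t)
 + \frac{\chi^\ast_{t+1}}{\chi^\ast_t}\cdot\frac{y_{t+1}}{\chi^\ast_{t+1}}
 + \frac{k_t}{\chi^\ast_t},
\]
interpreting the middle term via the recession cone $\mathcal{Q}_{t+1}$ when $\chi^\ast_{t+1}=0$. The first term lies in $\mathcal{U}^\mathrm{bd}_t$ (since $-\xi_t\in-\xi_t+\mathcal{Q}_t$), by the inductive hypothesis the middle term belongs to $\mathcal{Z}^\mathrm{bd}_{t+1}$, and $k_t/\chi^\ast_t\in\mathcal{K}_t\subseteq\mathcal{Q}_t$. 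This exhibits $y_t/\chi^\ast_t$ as a convex combination of an element of $\mathcal{U}^\mathrm{bd}_t$ and an element of $\mathcal{Z}^\mathrm{bd}_{t+1}$ modulo a direction in $\mathcal{Q}_t$, which is exactly the kind of element that the construction $\mathcal{Z}^\mathrm{bd}_t=\conv\{\mathcal{U}^\mathrm{bd}_t,\mathcal{V}^\mathrm{bd}_t\}$ with $\mathcal{V}^\mathrm{bd}_t=\mathcal{W}^\mathrm{bd}_t+\mathcal{Q}_t$ is designed to capture.

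The main obstacle is the measurability mismatch: $y_{t+1}$ is $\mathcal{F}_t$-measurable and hence constant on each $\mathcal{F}_t$-atom $\mu$, while $\chi^\ast_{t+1}$ is only $\mathcal{F}_T$-measurable, so $y_{t+1}/\chi^\ast_{t+1}$ varies over $\mu$, whereas landing in $\mathcal{W}^{\mathrm{bd}\mu}_t=\bigcap_{\nu\in\successors\mu}\mathcal{Z}^{\mathrm{bd}\nu}_{t+1}$ requires a single representative that works uniformly across all successors $\nu$. The resolution is to exploit that $\mathcal{Q}^\nu_{t+1}$ is the common recession cone of $\mathcal{Z}^{\mathrm{bd}\nu}_{t+1}$ (Proposition~\ref{prop:2012-09-19:zt-closed}), so discrepancies between the per-$\omega$ witnesses can be absorbed into the $\mathcal{Q}^\mu_t$-summand of $\mathcal{V}^{\mathrm{bd}\mu}_t$, allowing a uniform representative in $\mathcal{W}^{\mathrm{bd}\mu}_t$ to be recovered; together with the convex-combination structure this places $y_t/\chi^\ast_t$ into $\mathcal{Z}^{\mathrm{bd}\mu}_t$ as required.
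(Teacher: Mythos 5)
Your backward induction follows the same route as the paper's proof (the paper works with $y_t\in\chi^\ast_t\mathcal{Z}^\mathrm{bd}_t$ directly rather than normalising by $\chi^\ast_t$, but the decomposition into the cases $\chi^\ast_{t+1}>0$, $\chi^\ast_{t+1}=0$ and the use of $\chi_t\mathcal{U}^\mathrm{bd}_t+\chi^\ast_{t+1}\mathcal{V}^\mathrm{bd}_t\subseteq\chi^\ast_t\conv\{\mathcal{U}^\mathrm{bd}_t,\mathcal{V}^\mathrm{bd}_t\}$ is the same). However, your final paragraph identifies a ``main obstacle'' that does not exist, and the workaround you offer for it is not a valid argument. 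The claim that $\chi^\ast_{t+1}$ is only $\mathcal{F}_T$-measurable is false: since $\chi$ is adapted and $\sum_{s=0}^T\chi_s=1$, one has $\chi^\ast_{t+1}=1-\sum_{s=0}^{t}\chi_s$, which is $\mathcal{F}_t$-measurable; that is, $\chi^\ast$ is predictable. This is exactly what the paper uses: from $y_{t+1},\chi^\ast_{t+1}\in\mathcal{L}_t$ and the inductive hypothesis one gets $y_{t+1}\in\chi^\ast_{t+1}\mathcal{W}^\mathrm{bd}_t$ on $\{\chi^\ast_{t+1}>0\}$ and $y_{t+1}\in\mathcal{Q}_{t+1}\cap\mathcal{L}_t$ on $\{\chi^\ast_{t+1}=0\}$, with no further work needed. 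Your middle term $y_{t+1}/\chi^\ast_{t+1}$ is therefore constant on each $\mathcal{F}_t$-atom and lies in $\mathcal{Z}^\mathrm{bd}_{t+1}\cap\mathcal{L}_t=\mathcal{W}^\mathrm{bd}_t$ directly.

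The reason this matters, rather than being a cosmetic slip, is that your proposed repair --- absorbing the ``discrepancies between the per-$\omega$ witnesses'' into the $\mathcal{Q}^\mu_t$-summand of $\mathcal{V}^{\mathrm{bd}\mu}_t$ --- would not work if the obstacle were real. A recession cone lets you add directions to an element of a set and remain in the set; it does not let you reconcile several distinct points, one in each $\mathcal{Z}^{\mathrm{bd}\nu}_{t+1}$ for $\nu\in\successors\mu$, into a single common element of $\bigcap_{\nu\in\successors\mu}\mathcal{Z}^{\mathrm{bd}\nu}_{t+1}$, since the differences between those points need not lie in $\mathcal{Q}^\mu_t$ or in any cone at all. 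As written, the inductive step therefore rests on an unjustified claim. Replace the last paragraph by the observation that $\chi^\ast_{t+1}\in\mathcal{L}_t$ and the argument closes; you should also spell out the remaining sub-case $\chi^\ast_t>0$, $\chi^\ast_{t+1}=0$ (where $\chi_t=\chi^\ast_t$ and $y_t\in\chi_t\mathcal{U}^\mathrm{bd}_t\subseteq\chi^\ast_t\mathcal{Z}^\mathrm{bd}_t$) rather than leaving it to an interpretation ``via the recession cone''.
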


\begin{proof}
 The proof is by backward induction. Since $y_{T+1}=0$, from~\eqref{eq:buyer-self-fin-superhedge} we have
 \[
  y_T \in -\chi_T\xi_T + \mathcal{K}_T \subseteq -\chi_T\xi_T + \mathcal{Q}_T.
 \]
 It immediately follows that $y_T\in\mathcal{Q}_T$ on the set $\{\chi^\ast_T=0\}=\{\chi_T=0\}$. On the set $\{\chi^\ast_T>0\}=\{\chi_T>0\}$ we have $\mathcal{Q}_T=\chi_T\mathcal{Q}_T$ because $\mathcal{Q}_T$ is a cone, and therefore
 \[
  y_T \in \chi_T(-\xi_T + \mathcal{Q}_T) = \chi_T\mathcal{U}^\mathrm{bd}_T = \chi^\ast_T\mathcal{Z}^\mathrm{bd}_T.
 \]

 Suppose now for some $t=0,\ldots,T-1$ that
  \[
  y_{t+1} \in
  \begin{cases}
   \chi^\ast_{t+1}\mathcal{Z}^\mathrm{bd}_{t+1} & \text{on }\{\chi^\ast_{t+1}>0\},\\
   \mathcal{Q}_{t+1} & \text{on }\{\chi^\ast_{t+1}=0\}.
  \end{cases}
 \]
 Because $y_{t+1},\chi^\ast_{t+1}\in\mathcal{L}_t$, this means that
   \[
  y_{t+1} \in
  \begin{cases}
   \chi^\ast_{t+1}\mathcal{W}^\mathrm{bd}_t & \text{on }\{\chi^\ast_{t+1}>0\},\\
   \mathcal{Q}_{t+1}\cap\mathcal{L}_t & \text{on }\{\chi^\ast_{t+1}=0\}.
  \end{cases}
 \]
Since $y_t\in-\chi_t\xi_t+y_{t+1}+\mathcal{K}_t\subseteq -\chi_t\xi_t+y_{t+1}+\mathcal{Q}_t$ by~\eqref{eq:buyer-self-fin-superhedge}, it follows that
   \[
  y_t \in
  \begin{cases}
   -\chi_t\xi_t+\chi^\ast_{t+1}\mathcal{W}^\mathrm{bd}_t+\mathcal{Q}_t & \text{on }\{\chi^\ast_{t+1}>0\},\\
   -\chi_t\xi_t+\mathcal{Q}_{t+1}\cap\mathcal{L}_t+\mathcal{Q}_t & \text{on }\{\chi^\ast_{t+1}=0\}.
  \end{cases}
 \]
 We consider the two possibilities separately.
 \begin{itemize}
  \item On the set $\{\chi^\ast_{t+1}>0\}$ we have $\chi^\ast_t>0$ and therefore
  \[
   \mathcal{Q}_t = \chi^\ast_t\mathcal{Q}_t \subseteq \chi_t\mathcal{Q}_t + \chi^\ast_{t+1}\mathcal{Q}_t,
  \]
  so that
  \begin{multline*}
   y_t\in-\chi_t\xi_t+\chi^\ast_{t+1}\mathcal{W}^\mathrm{bd}_t+\chi_t\mathcal{Q}_t + \chi^\ast_{t+1}\mathcal{Q}_t \\= \chi_t(-\xi_t+\mathcal{Q}_t)+\chi^\ast_{t+1}(\mathcal{W}^\mathrm{bd}_t+\mathcal{Q}_t) = \chi_t\mathcal{U}^\mathrm{bd}_t + \chi^\ast_{t+1}\mathcal{V}^\mathrm{bd}_t.
  \end{multline*}
  Since
  \[
   \chi_t\mathcal{U}^\mathrm{bd}_t + \chi^\ast_{t+1}\mathcal{V}^\mathrm{bd}_t = \chi^\ast_t\left(\frac{\chi_t}{\chi^\ast_t}\mathcal{U}^\mathrm{bd}_t + \frac{\chi^\ast_{t+1}}{\chi^\ast_t}\mathcal{V}^\mathrm{bd}_t\right) \subseteq \chi^\ast_t\mathcal{Z}^\mathrm{bd}_t,
  \]
  it follows that $y_t\in\chi^\ast_t\mathcal{Z}^\mathrm{bd}_t$ on $\{\chi^\ast_{t+1}>0\}$.

  \item On the set $\{\chi^\ast_{t+1}=0\}$ we have
  \[
   y_t \in -\chi_t\xi_t+\mathcal{Q}_{t+1}\cap\mathcal{L}_t+\mathcal{Q}_t \subseteq -\chi_t\xi_t+\mathcal{Q}_t
  \]
  because $\mathcal{Q}_{t+1}\cap\mathcal{L}_t\subseteq\mathcal{Q}_t$ by~\eqref{eq:Qt-recursive}. There are two further possibilities.
  \begin{itemize}
    \item On $\{\chi^\ast_t>0\}\cap\{\chi^\ast_{t+1}=0\}$ we have $\chi_t=\chi^\ast_t>0$ and therefore
  \[
   y_t\in-\chi_t\xi_t+\mathcal{Q}_t=\chi_t(-\xi_t+\mathcal{Q}_t)=\chi_t\mathcal{U}^\mathrm{bd}_t\subseteq\chi_t\mathcal{Z}^\mathrm{bd}_t=\chi^\ast_t\mathcal{Z}^\mathrm{bd}_t.
  \]

  \item On $\{\chi^\ast_t=0\}\subseteq\{\chi^\ast_{t+1}=0\}$ we have $\chi_t=0$ and therefore $y_t\in\mathcal{Q}_t$ as claimed.
  \end{itemize}
 \end{itemize}
\end{proof}

\begin{proof}[\textbf{\upshape Proof of Theorem~\ref{prop:2012-07-26:hedging-construct}.}]
In view of Proposition~\ref{prop:2012-07-26:hedging-construct-converse}, to verify~(\ref{eq:2012-07-26:constr-equivalence}) it is
sufficient to show that for every $x\in\mathcal{Z}_{0}^{\mathrm{bd}}$ there
exists a pair $(y,\chi)\in\Phi^{\mathrm{bg}}(\xi)$ such that $y_{0}=x$. To
this end, define $p_{0}:=1$ and $z_{0}:=x\in p_{0}\mathcal{Z}_{0}%
^{\mathrm{bd}}$. Suppose by induction that for some $t=0,\ldots,T-1$ we have
constructed predictable sequences $z_{0},\ldots,z_{t}$ and $p_{0},\ldots
,p_{t}$ such that $1= p_{0}\geq\cdots\geq p_{t}\geq0$ and
\begin{align*}
z_{s}+(p_{s}-p_{s+1})\xi_{s}-z_{s+1} &  \in\mathcal{Q}_{s} & \text{ for all }
&  s=0,\ldots,t-1,\\
z_{s} &  \in p_{s}\mathcal{Z}_{s}^{\mathrm{bd}} & \text{ for all } &
s=0,\ldots,t.
\end{align*}
Because of \eqref{eq:mathcal-Zt}, there exists an $\mathcal{F}_{t}$-measurable
random variable $\lambda_{t}$ such that $0\leq\lambda_{t}\leq1$ and
\[
z_{t}\in(1-\lambda_{t})p_{t}\mathcal{U}_{t}^{\mathrm{bd}}+\lambda_{t}%
p_{t}\mathcal{V}_{t}^{\mathrm{bd}}.
\]
Equations \eqref{eq:mathcal-Ut} and \eqref{eq:mathcal-Vt} then give
\[
z_{t}\in(1-\lambda_{t})p_{t}(-\xi_{t}+\mathcal{Q}_{t})+\lambda_{t}%
p_{t}(\mathcal{W}_{t}^{\mathrm{bd}}+\mathcal{Q}_{t})=-(1-\lambda_{t})p_{t}%
\xi_{t}+p_{t}\mathcal{Q}_{t}+\lambda_{t}p_{t}\mathcal{W}_{t}^{\mathrm{bd}},
\]
where
\[
(1-\lambda_{t})p_{t}\mathcal{Q}_{t}+\lambda_{t}p_{t}\mathcal{Q}_{t}%
=p_{t}\mathcal{Q}_{t}%
\]
follows from the fact that $\mathcal{Q}_{t}$ is a convex cone. This means
there exists a random variable
\[
z_{t+1}\in\lambda_{t}p_{t}\mathcal{W}_{t}^{\mathrm{bd}}=\lambda_{t}%
p_{t}(\mathcal{Z}_{t+1}^{\mathrm{bd}}\cap\mathcal{L}_{t})=(\lambda_{t}%
p_{t}\mathcal{Z}_{t+1}^{\mathrm{bd}})\cap\mathcal{L}_{t}%
\]
such that
\[
z_{t}+(1-\lambda_{t})p_{t}\xi_{t}-z_{t+1}\in p_{t}\mathcal{Q}_{t}%
\subseteq\mathcal{Q}_{t}.
\]
Put $p_{t+1}:=\lambda_{t}p_{t}$. Then $z_{t+1}\in p_{t+1}\mathcal{Z}%
_{t+1}^{\mathrm{bd}}$, which concludes the inductive step. Now define the
mixed stopping time $\chi=(\chi_{t})$ by
\[
\chi_{t}:=%
\begin{cases}
p_{T} & \text{if }t=T,\\
p_{t}-p_{t+1} & \text{if }t=0,\ldots,T-1.
\end{cases}
\]
We also put $z_{T+1}:=0$. We have constructed $z\in\Phi$ and $\chi\in\mathcal{X}$ such that
$z_{0}=x$ and%
\[
z_{t}+\chi_{t}\xi_{t}-z_{t+1}\in\mathcal{Q}_{t}\text{ for each }t=0,\ldots,T.
\]
Finally, we construct $y\in\Phi$ such that $(y,\chi)\in\Phi^{\mathrm{bg}}%
(\xi)$\ and $y_{0}=x$. By the definition of the deferred solvency
cones~$\mathcal{Q}_{t}$, for each $t=0,\ldots,T$ there is a liquidation strategy
$v_{t+1}^{t},\ldots,v_{T+1}^{t}$ starting from $z_{t}+\chi_{t}\xi_{t}-z_{t+1}$ at time~$t$.
We put%
\[
y_{t}:=z_{t}+\sum_{r=0}^{t-1}\sum_{s=t}^{T}\left(  v_{s}^{r}-v_{s+1}%
^{r}\right)  ,
\]
which means that%
\begin{align*}
y_{t}+\chi_{t}\xi_{t}-y_{t+1} &  =z_{t}+\sum_{r=0}^{t-1}\sum_{s=t}^{T}\left(
v_{s}^{r}-v_{s+1}^{r}\right)  +\chi_{t}\xi_{t}-z_{t+1}-\sum_{r=0}^{t}%
\sum_{s=t+1}^{T}\left(  v_{s}^{r}-v_{s+1}^{r}\right)  \\
&  =z_{t}+\chi_{t}\xi_{t}-z_{t+1}-v_{t+1}^{t}+\sum_{r=0}^{t-1}\left(
v_{t}^{r}-v_{t+1}^{r}\right)  \in\mathcal{K}_{t}%
\end{align*}
for each $t=0,\ldots,T$, with $y_{0}=z_{0}=x$, completing the proof of~(\ref{eq:2012-07-26:constr-equivalence}).

Next, if follows from~(\ref{eq:2012-07-26:constr-equivalence}) that
\begin{align*}
\pi_{\cc}^{\mathrm{bg}}(\xi)
& =\sup\left\{  -x\in\mathbb{R}\st
\exists (y,\chi)\in\Phi^\mathrm{bg}(\xi): xe^{\cc}=y_{0}\right\}  \\
& =\sup\left\{  -x\in\mathbb{R}\st xe^{\cc}\in\mathcal{Z}_{0}^{\mathrm{bd}%
}\right\}  .
\end{align*}
By Proposition~\ref{prop:2012-09-19:zt-closed}, $\mathcal{Z}_{0}^{\mathrm{bd}}$ is
polyhedral, hence closed. As a result, the set $\left\{  -x\in\mathbb{R}\st
xe^{\cc}\in\mathcal{Z}_{0}^{\mathrm{bd}}\right\}$ is also closed. It is non-empty and bounded above because $xe^\cc\in\mathcal{Z}^\mathrm{bd}_0$ for any $x\in\mathbb{R}$ large enough, and $xe^\cc\notin\mathcal{Z}^\mathrm{bd}_0$ for any $x\in\mathbb{R}$ small enough. This means that the supremum is attained. It follows that $-\pi_{\cc}^{\mathrm{bg}}(\xi)e^\cc\in\mathcal{Z}^\mathrm{bd}_0$, so we know that a strategy $(y,\chi)\in\Phi^\mathrm{bg}(\xi)$ can be constructed such that $\pi_{\cc}^{\mathrm{bg}}(\xi)e^\cc=-y_0$.
\end{proof}

\begin{proof}[\textbf{\upshape Proof of Proposition~\ref{Prop:Am-Eur}.}]
For any $y\in\Phi$ such that $\left(  y,\chi\right)  \in\Phi^{\mathrm{bg}}%
(\xi)$, put%
\[
z_{t}:=y_{t}-\sum_{s=0}^{t-1}\chi_{s}\xi_{s}%
\]
for each $t=0,\ldots,T$, and $z_{T+1}:=0$. Then%
\begin{align*}
z_{t}-z_{t+1} &  =y_{t}-\sum_{s=0}^{t-1}\chi_{s}\xi_{s}-y_{t+1}+\sum_{s=0}%
^{t}\chi_{s}\xi_{s}\\
&  =y_{t}+\chi_{t}\xi_{t}-y_{t+1}\in\mathcal{K}_{t}%
\end{align*}
for each $t=0,\ldots,T-1$, and%
\begin{align*}
z_{T}+\xi_\chi &  =y_{T}-\sum_{s=0}^{T-1}\chi_{s}\xi
_{s}+\sum_{s=0}^{T}\chi_{s}\xi_{s}\\
&  =y_{T}+\chi_{T}\xi_{T}-y_{T+1}\in\mathcal{K}_{T}%
\end{align*}
since $y_{T+1}=0$, so $z\in\Psi^{\mathrm{a}}(-\xi_\chi)$ with $z_{0}=y_{0}$.

Conversely, for any $z\in\Psi^{\mathrm{a}}(-\xi_\chi)$ we put%
\[
y_{t}:=z_{t}+\sum_{s=0}^{t-1}\chi_{s}\xi_{s}%
\]
for each $t=0,\ldots,T$, and $y_{T+1}:=0$. Then%
\begin{align*}
y_{t}+\chi_{t}\xi_{t}-y_{t+1} &  =z_{t}+\sum_{s=0}^{t-1}\chi_{s}\xi_{s}%
+\chi_{t}\xi_{t}-z_{t+1}-\sum_{s=0}^{t}\chi_{s}\xi_{s}\\
&  =z_{t}-z_{t+1}\in\mathcal{K}_{t}%
\end{align*}
for each $t=0,\ldots,T-1$, and%
\begin{align*}
y_{T}+\chi_{T}\xi_{T}-y_{T+1} &  =z_{T}+\sum_{s=0}^{T-1}\chi_{s}\xi_{s}%
+\chi_{T}\xi_{T}\\
&  =z_{t}+\sum_{s=0}^{T}\chi_{s}\xi_{s}\in\mathcal{K}_{T}.
\end{align*}
It follows that $\left(  y,\chi\right)  \in\Phi^{\mathrm{bg}}(\xi)$ and
$y_{0}=z_{0}$, completing the proof.
\end{proof}

\begin{proof}[\textbf{\upshape Proof of Theorem~\ref{th:bu-buyer}.}]
Theorem~\ref{prop:2012-07-26:hedging-construct} gives
\[
 \pi^\mathrm{bg}_\cc(\xi) = \max\{-x\in\mathbb{R}\st xe^\cc\in\mathcal{Z}^\mathrm{bd}_0\}.
\]
The maximum is attained, so $-\pi^\mathrm{bg}_\cc(\xi)e^\cc\in\mathcal{Z}^\mathrm{bd}_0$. The strategy $(\hat{y},\hat{\chi})\in\Phi^\mathrm{bg}(\xi)$ constructed by the method in the proof of Theorem~\ref{prop:2012-07-26:hedging-construct} from the initial portfolio $\hat{y}_0=-\pi^\mathrm{bg}_\cc(\xi)e^\cc$ therefore realises the supremum in~(\ref{eq:buyer-bid-price-gradual}). We write this supremum as a maximum,
\begin{align*}
 \pi^\mathrm{bg}_\cc(\xi)
 &= \max\{-x\in\mathbb{R}\st \exists (y,\chi)\in\Phi^\mathrm{bg}(\xi): xe^\cc=y_0\}\\
 &= \max_{\chi\in\mathcal{X}}
 \left[\max\{-x\in\mathbb{R}\st \exists y\in\Phi: (y,\chi)\in\Phi^\mathrm{bg}(\xi), xe^\cc=y_0\}\right],
\end{align*}
and apply Proposition~\ref{Prop:Am-Eur}, which gives
\begin{align*}
 \pi^\mathrm{bg}_\cc(\xi)
 &= \max_{\chi\in\mathcal{X}}
 \left[\max\{-x\in\mathbb{R}\st \exists y\in\Psi^\mathrm{a}(-\xi_\chi): xe^\cc=y_0,\}\right]\\
  &= \max_{\chi\in\mathcal{X}}
 \left[-p^\mathrm{a}_\cc(-\xi_\chi)\right],
\end{align*}
where $p^\mathrm{a}_\cc(-\xi_\chi)$ is the ask (seller's) price in currency~$j$ of a European option with expiry time $T$ and payoff $-\xi_\chi\in\mathcal{L}_{T}$ as defined in Appendix~\ref{Sect:Eur-opt}.
We can now apply Lemma~\ref{Lem:Eur-ask-price-dual-repr} to write
\[
-p^\mathrm{a}_\cc(-\xi_\chi)
=-\max_{(\mathbb{Q},S)\in\Pbarc}\mathbb{E}_\mathbb{Q}\left(  -\xi_{\chi}\cdot S_{T}\right)
=\min_{(\mathbb{Q},S)\in\Pbarc}\mathbb{E}_\mathbb{Q}\left(  \xi_{\chi}\cdot S_{T}\right).
\]
For any $(\mathbb{Q},S)\in\Pbarc$, since $S$ is a martingale under~$\mathbb{Q}$, we have
\[
\mathbb{E}_{\mathbb{Q}}\left(\xi_{\chi}\cdot S_{T}\right)
=\mathbb{E}_{\mathbb{Q}}\left(\textstyle\sum_{t=0}^{T}\chi_{t}\xi_{t}\cdot
S_{T}\right)
 =\mathbb{E}_{\mathbb{Q}}\left(\textstyle\sum_{t=0}^{T}\chi_{t}\xi
_{t}\cdot S_{t}\right)
 =\mathbb{E}_{\mathbb{Q}}((\xi\cdot S)_{\chi}).
\]
This means that
\[
\pi^\mathrm{bg}_\cc(\xi)
= \max_{\chi\in\mathcal{X}}\min_{(\mathbb{Q},S)\in\Pbarc}
\mathbb{E}_{\mathbb{Q}}((\xi\cdot S)_{\chi}),
\]
proving~(\ref{eq:pi-bu}). We know that $(\hat{y},\hat{\chi})$ realises the supremum in~(\ref{eq:buyer-bid-price-gradual}), and therefore the above maxima over $\chi\in\mathcal{X}$ are attained at~$\hat{\chi}$. A pair $(\hat{\mathbb{Q}},\hat{S})\in\Pc$ such that
\[
\pi^\mathrm{bg}_\cc(\xi)
=-p^\mathrm{a}_\cc(-\xi_{\hat{\chi}})
= \mathbb{E}_{\hat{\mathbb{Q}}}((\xi\cdot \hat{S})_{\hat{\chi}}) =\displaystyle\min_{(\mathbb{Q},S)\in\Pbarc}\mathbb{E}_\mathbb{Q}\left((\xi\cdot S)_{\hat{\chi}}\right)
\]
can be constructed by the method of \citet[Proposition~5.3]{Roux_Zastawniak2011} for the European option with payoff $-\xi_{\hat{\chi}}$, completing the proof.
\end{proof}

\subsection{European options}\label{Sect:Eur-opt}

We recall a result for European options in the market model with $d$
assets under transaction costs. This is needed in the proof of the dual representation for the bid price of an American option.

A European option obliges the seller (writer) to deliver a portfolio $\zeta
\in\mathcal{L}_{T}$ at time~$T$. The set of strategies superhedging the
seller's position is given as%
\[
\Psi^{\mathrm{a}}(\zeta):=\left\{  y\in\Phi\st y_{t}-y_{t+1}\in\mathcal{K}%
_{t}\text{ for }t=0,\ldots,T-1,y_{T}-\zeta\in\mathcal{K}_{T}\right\}
\]
and the \emph{ask price} (\emph{seller's price}) of such an option in currency
$j=1,\ldots,d$ is%
\[
p_{\cc}^{\mathrm{a}}(\zeta):=\inf\left\{  x\in\mathbb{R}\st \exists y\in\Psi^{\mathrm{a}}(\zeta): xe^{\cc}%
=y_{0}\right\}  .
\]
The following result can be found in \citet[Section~4.3.1]{Roux_Zastawniak2011}.

\begin{lemma}\label{Lem:Eur-ask-price-dual-repr}
The ask price in currency $j=1,\ldots,d$ of a European option~$\zeta$ can be represented as%
\[
p_{\cc}^{\mathrm{a}}(\zeta)=\max_{\left(  \mathbb{Q},S\right)  \in
\Pbarc}\mathbb{E}_{\mathbb{Q}}(\zeta\cdot S_{T}).
\]
Moreover, a pair $(\hat{\mathbb{Q}},\hat{S})\in\Pbarc$ such that
$p_{\cc}^{\mathrm{a}}(\zeta)=\mathbb{E}_{\hat{\mathbb{Q}}}(\zeta\cdot \hat{S}_{T})$ can be
constructed algorithmically.
\end{lemma}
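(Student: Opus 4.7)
The plan is to reduce this European lemma to exactly the same template used for the seller of an American option in Theorem~\ref{prop:seller:Zau0=initial-endowments} and Theorem~\ref{Thm:ask-price-representation}, but with $\mathcal{K}_t$ in place of $\mathcal{Q}_t$ and with no mixed stopping time involved. First I would set up the primal side by backward induction: define $\mathcal{Z}^\mathrm{E}_T:=\zeta+\mathcal{K}_T$ and $\mathcal{Z}^\mathrm{E}_t:=\mathcal{Z}^\mathrm{E}_{t+1}\cap\mathcal{L}_t+\mathcal{K}_t$ for $t=0,\ldots,T-1$. An induction essentially identical to the one in the proof of Theorem~\ref{prop:seller:Zau0=initial-endowments} (with $\mathcal{U}^\mathrm{ad}_t,\mathcal{V}^\mathrm{ad}_t,\mathcal{W}^\mathrm{ad}_t,\mathcal{Z}^\mathrm{ad}_t$ collapsing to a single sequence, since there is no optional exercise to intersect with) shows $\mathcal{Z}^\mathrm{E}_0=\{x\in\mathbb{R}^d\st \exists y\in\Psi^\mathrm{a}(\zeta):x=y_0\}$. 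Each $\mathcal{Z}^\mathrm{E}_t$ is polyhedral as an iterated intersection/Minkowski sum of polyhedral sets, so $\{x\in\mathbb{R}\st xe^\cc\in\mathcal{Z}^\mathrm{E}_0\}$ is closed and bounded below, the infimum defining $p^\mathrm{a}_\cc(\zeta)$ is attained, and $p^\mathrm{a}_\cc(\zeta)e^\cc\in\mathcal{Z}^\mathrm{E}_0$.

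Next I would pass to support functions. Let $Z^\mathrm{E}_t$ be the support function of $-\mathcal{Z}^\mathrm{E}_t$. Exactly as in Proposition~\ref{prop:seller:dual} one checks by backward induction, using Lemma~\ref{lem:Roux_Zastawniak2011} at each node and the fact that $\mathcal{K}^{\mu\ast}_t$ is compactly $\cc$-generated in the absence of arbitrage, that $\dom Z^\mathrm{E}_t=\mathcal{K}^\ast_t$ is compactly $\cc$-generated, with $Z^\mathrm{E}_T(y)=-y\cdot\zeta$ on $\mathcal{K}^\ast_T$ and
\[
 Z^{\mathrm{E}\mu}_t(y)=\sum_{\nu\in\successors\mu} q^\nu Z^{\mathrm{E}\nu}_{t+1}(y^\nu),\quad y=\sum_{\nu\in\successors\mu}q^\nu y^\nu,\quad \sum_{\nu\in\successors\mu}q^\nu=1,
\]
for suitable $q^\nu\in[0,1]$ and $y^\nu\in\sigmac(\mathcal{K}^{\nu\ast}_{t+1})$. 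Continuity of the polyhedral $-Z^\mathrm{E}_0$ on its compact $\cc$-section and standard duality give
\[
 p^\mathrm{a}_\cc(\zeta)=\min\{x\in\mathbb{R}\st xe^\cc\cdot s\ge -Z^\mathrm{E}_0(s)\text{ for all }s\in\sigmac(\mathcal{K}^\ast_0)\}=\max\{-Z^\mathrm{E}_0(s)\st s\in\sigmac(\mathcal{K}^\ast_0)\}.
\]
Pick a maximiser $\hat{S}_0$ and then unravel the recursion: inductively, given $\hat{S}^\mu_t\in\sigmac(\mathcal{K}^{\mu\ast}_t)$ attaining $Z^{\mathrm{E}\mu}_t(\hat{S}^\mu_t)$, use the decomposition above to choose conditional weights $\hat{q}^\nu_{t+1}$ and successor values $\hat{S}^\nu_{t+1}\in\sigmac(\mathcal{K}^{\nu\ast}_{t+1})$. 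Define $\hat{\mathbb{Q}}(\{\omega\})=\prod_{t=1}^T \hat{q}^{\omega\uparrow t}_t$; the convex-combination identity for $\hat{S}^\mu_t$ becomes $\hat{S}_t=\mathbb{E}_{\hat{\mathbb{Q}}}(\hat{S}_{t+1}\st\mathcal{F}_t)$, so $\hat{S}$ is a $\hat{\mathbb{Q}}$-martingale with $\hat{S}^\cc\equiv1$ and $\hat{S}_t\in\mathcal{K}^\ast_t\setminus\{0\}$, i.e.\ $(\hat{\mathbb{Q}},\hat{S})\in\Pbarc$. A backward tower-property calculation, identical in spirit to~\eqref{eq:seller:dual-opt:10}, yields $-Z^\mathrm{E}_0(\hat{S}_0)=\mathbb{E}_{\hat{\mathbb{Q}}}(\zeta\cdot\hat{S}_T)$, hence $p^\mathrm{a}_\cc(\zeta)=\mathbb{E}_{\hat{\mathbb{Q}}}(\zeta\cdot\hat{S}_T)$.

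For the matching upper bound $p^\mathrm{a}_\cc(\zeta)\ge\mathbb{E}_\mathbb{Q}(\zeta\cdot S_T)$ for every $(\mathbb{Q},S)\in\Pbarc$, I would argue by backward induction on $t$ that any $y\in\Psi^\mathrm{a}(\zeta)$ satisfies $y_t\cdot S_t\ge\mathbb{E}_\mathbb{Q}(\zeta\cdot S_T\st\mathcal{F}_t)$. The base case $t=T$ uses $y_T-\zeta\in\mathcal{K}_T$ and $S_T\in\mathcal{K}^\ast_T$. The step uses $y_t-y_{t+1}\in\mathcal{K}_t$ together with $S_t=\mathbb{E}_\mathbb{Q}(S_{t+1}\st\mathcal{F}_t)\in\mathcal{K}^\ast_t$ to get $y_t\cdot S_t\ge y_{t+1}\cdot S_t=\mathbb{E}_\mathbb{Q}(y_{t+1}\cdot S_{t+1}\st\mathcal{F}_t)$, after which the inductive hypothesis and the tower property close the estimate. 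Specialising at $t=0$ to $y_0=p^\mathrm{a}_\cc(\zeta)e^\cc$ and using $\mathbb{E}_\mathbb{Q}(S^\cc_T)=S^\cc_0=1$ gives the inequality, completing the proof.

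The main obstacle is the constructive dual step: one must verify simultaneously that the backward-chosen $\hat{S}_t^\mu$ lies in $\sigmac(\mathcal{K}^{\mu\ast}_t)$ (so $\hat{S}^\cc\equiv1$ is preserved and $\hat{S}$ is an honest consistent price process), that the conditional weights $\hat{q}^\nu_{t+1}$ aggregate to a probability measure, and that the support-function identities survive the recursion. All of this rests on the compactly $\cc$-generated cone machinery of Lemma~\ref{lem:Roux_Zastawniak2011}, which is precisely why the no-arbitrage hypothesis $\mathcal{P}\ne\emptyset$ (and hence $\mathcal{P}_\cc\ne\emptyset$) is essential for the construction to go through.
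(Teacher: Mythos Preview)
The paper does not actually prove this lemma: it is quoted from \citet[Section~4.3.1]{Roux_Zastawniak2011}, and the constructive part is later attributed to Proposition~5.3 of the same reference in the proof of Theorem~\ref{th:bu-buyer}. Your proposal therefore cannot be compared against a proof in the present paper, but it is a sound self-contained argument: you are specialising the American-seller machinery of Theorem~\ref{prop:seller:Zau0=initial-endowments}, Proposition~\ref{prop:seller:dual}, Proposition~\ref{Prop:20130727:pi-ag-dual} and Theorem~\ref{Thm:ask-price-representation} to the European case, with the mixed stopping time collapsing to the deterministic $\chi^T$ and with $\mathcal{K}_t$ replacing $\mathcal{Q}_t$ in the primal recursion. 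This is exactly the route taken in the cited reference, so your approach and the (external) source agree.

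One small correction that does not affect the validity of your argument: the identity $\dom Z^\mathrm{E}_t=\mathcal{K}^\ast_t$ is not right. The recursion $\mathcal{Z}^\mathrm{E}_t=\mathcal{Z}^\mathrm{E}_{t+1}\cap\mathcal{L}_t+\mathcal{K}_t$ is formally the same as the recursion~\eqref{eq:Qt-recursive} for~$\mathcal{Q}_t$, so the recession cone of $\mathcal{Z}^\mathrm{E}_t$ is $\mathcal{Q}_t$ and hence $\dom Z^\mathrm{E}_t=\mathcal{Q}^\ast_t$ (cf.\ \eqref{eq:2012-10-03:Qast-ito-domWKt}). Your construction is unaffected: the decomposition from Lemma~\ref{lem:Roux_Zastawniak2011} places each $\hat{S}^\nu_{t+1}$ in $\sigmac(\dom Z^{\mathrm{E}\nu}_{t+1})=\sigmac(\mathcal{Q}^{\nu\ast}_{t+1})\subseteq\sigmac(\mathcal{K}^{\nu\ast}_{t+1})$, which is all you need to conclude $(\hat{\mathbb{Q}},\hat{S})\in\Pbarc$ and to continue the induction.
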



\end{document}